\documentclass[10pt, a4paper, english]{article}

\usepackage{lmodern}
\usepackage[T1]{fontenc}
\usepackage[utf8]{inputenc}
\usepackage[english]{babel}
\usepackage{microtype}
\usepackage{csquotes}

\setcounter{secnumdepth}{3}

\usepackage[inline]{enumitem}
\setlist[itemize, 1]{label=\textbullet}
\setlist[itemize, 2]{label=$\circ$}
\setlist[itemize, 3]{label={\tiny$\blacksquare$}}
\setlist[itemize, 4]{label={\tiny$\square$}}
\usepackage{moreenum}

\usepackage{amsthm}
\numberwithin{equation}{section}
\usepackage[
	x11names*,
	svgnames*,
	dvipsnames,
]{xcolor}

\definecolor{hyperlinkcolor}{RGB}{11,0,128} 

\usepackage{amsmath}
\usepackage{amssymb}
\usepackage{mathtools}
\usepackage[
	classfont = sanserif,
	langfont  = caps,
	funcfont  = roman,
	full,
	disableredefinitions,
]{complexity}
\usepackage[heavycircles]{stmaryrd}

\usepackage{scalerel}
\usepackage{graphicx}

\usepackage{tikz}
\usetikzlibrary{calc}
\tikzset{>=stealth} 

\usepackage[%
	labelfont = {normalfont},%
]{caption}
\usepackage{subcaption}

\numberwithin{table}{section}
\numberwithin{figure}{section}

\usepackage[
	backend     = biber,
	style       = alphabetic,
	sorting     = nty,
	maxbibnames = 99,
	backref     = true,
]{biblatex}
\addbibresource{./references.bib}

\usepackage{authblk}
\usepackage{multicol}
\usepackage{geometry}
\newgeometry{
	hcentering,
	marginparwidth = 2.6cm,
	headheight     = 13.6pt,
} 

\usepackage{fancyhdr}
\fancyhead{}
\fancyhead[L]{\sc\nouppercase{\leftmark}}
\fancyhead[R]{\thepage}
\fancyfoot{}

\usepackage{etoolbox}

\usepackage{hyperref}

\hypersetup{
    plainpages  = false,
    linktoc     = section,
    colorlinks  = true,
    breaklinks  = true,
    pageanchor  = true,
    linkcolor   = hyperlinkcolor,
    anchorcolor = black,
    citecolor   = hyperlinkcolor,
    filecolor   = hyperlinkcolor,
    menucolor   = hyperlinkcolor,
    runcolor    = hyperlinkcolor,
    urlcolor    = hyperlinkcolor,
    pdftitle    = {The flattening operator and implications in team-based logics},
    pdfauthor   = {Arnaud Durand, Juha Kontinen, Werner Merian and Jouko Väänänen},
} 

\usepackage[%
	symbol       = $\uparrow\;$,%
	numberlinked = false,%
]{footnotebackref}

\usepackage[
	noabbrev,
]{cleveref}

\crefname{part}{part}{parts}
\crefname{chapter}{chapter}{chapters}
\crefname{section}{section}{sections}
\crefname{subsection}{subsection}{subsections}
\crefname{subsubsection}{subsubsection}{subsubsections}
\crefname{paragraph}{paragraph}{paragraphs}
\crefname{subparagraph}{subparagraph}{subparagraphs}

\crefname{figure}{figure}{figures}
\crefname{subfigure}{subfigure}{subfigures}

\crefname{table}{table}{tables}
\crefname{subtable}{subtable}{subtables}

\crefname{footnote}{footnote}{footnotes}
\crefname{marginenote}{marginenote}{marginenotes}
\crefname{endnote}{endnote}{endnotes}

\crefname{item}{item}{items}

\crefname{equation}{equation}{equations}
\crefname{condition}{condition}{conditions} 

\setcounter{tocdepth}{3}

\makeatletter
\def \@makefntext#1{%
	\parindent 1em%
	\noindent%
	\hb@xt@ 1.8em{\hss \hbox{\normalfont \@thefnmark.\,\,}}%
	#1%
}

\def \p@cref#1{%
	\expandafter\ifx\csname p@cref@#1\endcsname\relax%
	\expandafter\csname p@#1\endcsname%
	\else%
	\expandafter\csname p@cref@#1\endcsname%
	\fi%
}

\def \thecref#1{%
	\expandafter\ifx\csname thecref#1\endcsname\relax%
	\expandafter\csname the#1\endcsname%
	\else%
	\expandafter\csname thecref#1\endcsname%
	\fi%
}

\def\setcrefcounterprefix#1{\expandafter\def\csname p@cref@#1\endcsname##1}
\def\setcrefthecounter#1{\expandafter\def\csname thecref#1\endcsname##1}

\patchcmd\refstepcounter@noarg%
{%
	\csname p@#1\endcsname\csname the#1\endcsname%
}{%
	\p@cref{#1}\thecref{#1}%
}%
{\typeout{\string\refstepcounter@noarg\space patch success for cleveref}}%
{\typeout{\string\refstepcounter@noarg\space patch failed for cleveref}}

\patchcmd\refstepcounter@optarg%
{%
	\csname p@#2\endcsname\csname the#2\endcsname%
}{%
	\p@cref{#2}\thecref{#2}%
}%
{\typeout{\string\refstepcounter@optarg\space patch success for cleveref}}%
{\typeout{\string\refstepcounter@optarg\space patch failed for cleveref}}

\patchcmd\H@refstepcounter%
{%
	\csname p@#1\endcsname\csname the#1\endcsname%
}{%
	\p@cref{#1}\thecref{#1}%
}%
{\typeout{\string\H@refstepcounter\space patch success for cleveref}}%
{\typeout{\string\H@refstepcounter\space patch failed for cleveref}}

\newcommand{\SetEnumerateInTheoremEnv}[1]{%
	\AtBeginEnvironment{#1}{%
		\setcrefcounterprefix{enumi}{\thecref{#1}.##1}%
		\crefalias{enumi}{#1}%
		\setlist[enumerate,1]{%
			label = {\normalfont(\textit{\roman*})},%
			ref   = (\textit{\roman*}),%
		}%
	}%
}

\newcommand{\crefnameabbr}[3]{%
	\expandafter\gdef\csname crefnameabbr@#1\endcsname{{#1}{#2}{#3}}%
}

\newcommand{\crefabbr}[1]{%
	\begingroup%
		\expandafter\ifx\csname r@#1@cref\endcsname\relax%
			\protect\G@refundefinedtrue%
			\nfss@text{\reset@font\bfseries ??}%
			\@latex@warning{Reference `#1' on page \thepage \space undefined}%
		\else%
			\cref@gettype{#1}{\@type}%
			\edef\@args{\csname crefnameabbr@\@type\endcsname}%
			\expandafter\crefname\@args%
			\cref{#1}%
		\fi%
	\endgroup%
}
\makeatother

\newtheoremstyle{mydefinitionstyle}%
{}%
{}%
{}%
{}%
{\bfseries}%
{.}%
{ }%
{\thmname{#1}\thmnumber{ #2}\thmnote{ (#3)}}%

\theoremstyle{mydefinitionstyle}

\newtheorem{definition}{Definition}[section]
\crefname{definition}{definition}{definitions}
\crefnameabbr{definition}{def.}{defs.}

\newtheorem{notation}[definition]{Notation}
\crefname{notation}{notation}{notations}
\crefnameabbr{notation}{nota.}{notas.}

\newtheorem{remark}[definition]{Remark}
\crefname{remark}{remark}{remarks}
\crefnameabbr{remark}{rmk.}{rmks.}

\crefname{example}{example}{examples}
\crefnameabbr{example}{exm.}{exms.}

\newtheorem*{example*}{Example}

\crefname{exercise}{exercise}{exercises}
\crefnameabbr{exercise}{exr.}{exrs.}

\newtheoremstyle{mytheoremstyle} %
{}%
{}%
{\slshape}%
{}%
{\bfseries}%
{.}%
{ }%
{\thmname{#1}\thmnumber{ #2}\thmnote{ (#3)}}%

\theoremstyle{mytheoremstyle}

\crefname{claim}{claim}{claims}
\crefnameabbr{claim}{clm.}{clms.}
\SetEnumerateInTheoremEnv{claim}

\crefname{fact}{fact}{facts}
\crefnameabbr{fact}{fct.}{fcts.}
\SetEnumerateInTheoremEnv{fact}

\newtheorem{property}[definition]{Property}
\crefname{property}{property}{properties}
\crefnameabbr{property}{pty.}{pties.}
\SetEnumerateInTheoremEnv{property}

\newtheorem{proposition}[definition]{Proposition}
\crefname{proposition}{proposition}{propositions}
\crefnameabbr{proposition}{prop.}{props.}
\SetEnumerateInTheoremEnv{proposition}

\newtheorem{lemma}[definition]{Lemma}
\crefname{lemma}{lemma}{lemmas}
\crefnameabbr{lemma}{lem.}{lems.}
\SetEnumerateInTheoremEnv{lemma}

\newtheorem{theorem}[definition]{Theorem}
\crefname{theorem}{theorem}{theorems}
\crefnameabbr{theorem}{thm.}{thms.}
\SetEnumerateInTheoremEnv{theorem}

\newtheorem{corollary}[definition]{Corollary}
\crefname{corollary}{corollary}{corollaries}
\crefnameabbr{corollary}{cor.}{cors.}
\SetEnumerateInTheoremEnv{corollary}

\crefname{axiom}{axiom}{axioms}
\crefnameabbr{axiom}{ax.}{axs.}
\SetEnumerateInTheoremEnv{axiom}

\crefname{principle}{principle}{principles}
\crefnameabbr{principle}{ppl.}{ppls.}
\SetEnumerateInTheoremEnv{principle}

\crefname{conjecture}{conjecture}{conjectures}
\crefnameabbr{conjecture}{conj.}{conjs.}
\SetEnumerateInTheoremEnv{conjecture}

\crefname{hypothesis}{hypothesis}{hypotheses}
\crefnameabbr{hypothesis}{hyp.}{hyps.}
\SetEnumerateInTheoremEnv{hypothesis}

\newcommand{\latinexpr}[1]{\textit{#1}}
\newcommand{\newlatinexpr}[2]{\newcommand{#1}{\latinexpr{#2}}}
\newlatinexpr{\ie}{i.e.}
\newlatinexpr{\eg}{e.g.}

\newlatinexpr{\apriori}{a priori}

\newcommand{\email}[1]{\href{mailto:#1}{\texttt{#1}}}

\newcommand{\partexpr}{\emph}
\newcommand{\defexpr}{\emph}

\newcommand{\emphexpr}{\emph}
\newcommand{\announceexpr}{\emph}

\newcommand{\quoexpr}[1]{``#1''}
\newcommand{\extexpr}[1]{``#1''}
\newcommand{\citeexpr}[1]{``#1''} 

\newcommand{\nequiv}{\not\equiv} 

\newcommand{\tuple}{\vec}

\newcommand{\replaces}{\middle\slash}

\newcommand{\defeq}{\overset{{\normalfont\text{def}}}{=}}

\newcommand{\emptyarg}{\cdot}

\DeclareMathOperator{\dom}{dom}

\newcommand{\powerset}{\mathcal{P}}
\newcommand{\powersetstar}{\powerset^{*}}

\newcommand{\setsep}{\;\middle\vert\;} %

\newcommand{\syntacticeq}{=}
\newcommand{\syntacticneq}{\neq} %

\DeclareMathOperator{\ar}{ar}

\DeclareMathOperator{\Vr}{Vr}
\DeclareMathOperator{\FV}{FV}

\DeclareMathOperator{\Aut}{Aut} 

\newclass{\LOGSPACE}{LOGSPACE}
\newclass{\NLOGSPACE}{NLOGSPACE} %

\newcommand{\fragmentfont}[1]{{\normalfont\textsf{#1}}}

\DeclareDocumentCommand \newfragment{ m m }{%
	\DeclareDocumentCommand{#1}{ o }{%
		\fragmentfont{#2}%
		\IfValueT {##1}%
		{_{##1}}%
	}
}

\newfragment{\SO}{SO}
\newfragment{\MSO}{MSO}
\newfragment{\ESO}{ESO}

\newfragment{\EMSO}{EMSO}
\newfragment{\ESOHORN}{ESO-Horn}

\newfragment{\GFP}{GFP}
\newfragment{\LFP}{LFP}
\newcommand{\GFPpos}{\GFP^{\fragmentfont{+}}}

\newcommand{\dep}[1][]{{=_{#1}\mkern-1.2mu}}
\newcommand{\anon}[1][]{\mathbin{\Upsilon_{#1}}}

\newcommand{\incl}[1][]{\mathbin{\subseteq_{#1}}}
\newcommand{\excl}[1][]{\mathbin{\vert_{#1}}}

\newcommand{\ind}{\mathbin{\bot}}

\newcommand{\cneg}{{\sim}} %

\newcommand{\regularimplication}{\hookrightarrow}

\DeclareMathOperator*{\bigovee}{\scalerel*{\ovee}{\sum}}

\DeclareMathOperator{\TeamCl}{Cl} %

\newcommand{\vDashTarski}{\vDash}%
\newcommand{\vDashteam}{\vDash}%
\newcommand{\nvDashteam}{\nvDash}%

\newcommand{\f}{\mathsf{f}}%

\newcommand{\F}{\mathsf{F}}%

\newfragment{\DEP}{DEP}
\newfragment{\ANON}{ANON}

\newfragment{\INC}{INC}
\newfragment{\EXC}{EXC}
\newfragment{\INEX}{INEX}

\newfragment{\IND}{IND}

\newfragment{\NEG}{FON}

\newcommand{\consformat}{\dep(\cdot)}
\newcommand{\depformat}[1][]{\dep[#1](\cdot,\cdot)}

\newcommand{\FOanon}[1][]{\ensuremath{\FO(\anon[#1])}}

\newcommand{\FOcons}{\ensuremath{\FO(\consformat)}}
\newcommand{\FOdep}[1][]{\ensuremath{\FO(\depformat[#1])}}

\newcommand{\FOincl}[1][]{\ensuremath{\FO(\incl[#1])}}
\newcommand{\FOexcl}[1][]{\ensuremath{\FO(\excl[#1])}}

\newcommand{\FOind}{\ensuremath{\FO(\ind)}}

\newcommand{\FOneg}{\ensuremath{\FO(\cneg)}}

\title{The flattening operator in team-based logics}

\author[1,*]{Arnaud Durand}
\author[2,\dag]{Juha Kontinen}
\author[1,\ddag]{Werner Mérian}
\author[2,3,\S]{Jouko Väänänen}

\affil[1]{Université Paris Cité, CNRS, IMJ-PRG, Paris, France}
\affil[2]{Department of Mathematics and Statistics, University of Helsinki, Finland}
\affil[3]{Institute for Logic, Language and Computation, University of Amsterdam, The Netherlands}
\affil[ ]{%
	\textsuperscript{*}\email{durand@imj-prg.fr},
	\textsuperscript{\dag}\email{juha.kontinen@helsinki.fi},
	\textsuperscript{\ddag}\email{merian@imj-prg.fr},
	\textsuperscript{\S}\email{jouko.vaananen@helsinki.fi}
}

\date{May 19, 2025}

\begin{document}
	\pagestyle{fancy}
	
	\maketitle
	
	\begin{abstract}
		We propose a systematic study of the so-called \announceexpr{flattening operator} in team semantics. This operator was first introduced by Hodges in 1997, and has not been studied in more detail since. We begin a systematic study of the expressive power this operator adds to the most well-known team-based logics, such as \announceexpr{dependence logic}, \announceexpr{anonymity logic}, \announceexpr{inclusion logic} and \announceexpr{exclusion logic}.
	\end{abstract}

\section{Introduction}

Team semantics is a mathematical framework for studying concepts and phenomena that arise in the presence of plurality of data. Examples of such concepts are, for example,  functional dependence in database theory and conditional independence of random variables in statistics. The beginning of the area can be traced back to the introduction of dependence logic in \cite{Vaananen:2007:dependence}. In dependence logic, formulas are interpreted by sets of assignments (called \announceexpr{teams}) and it extends the syntax of first-order logic by dependence atoms $\dep(x,y)$ expressing that the values of the variables $x$ functionally determine the values of $y$ in a team. 

Since the introduction of dependence logic, the expressivity and complexity aspects of logics in team semantics have been extensively studied (see, \eg, \cite{Durand:2022:tractability}) and  interesting connections have been found to areas such as database theory \cite{Kontinen:2013:independence,Hannula:2020:polyteam}, Bayesian networks \cite{Corander:2019}, quantum foundations \cite{Abramsky:2021:team}, and inquisitive and separation logic \cite{Ciardelli:2020,Haase:2022}. These works have focused on logics in the first-order, propositional and modal team semantics, and more recently also in the multi-set and probabilistic settings \cite{Abramsky:2021:team,Durand:2018:approximation,Durand:2018:probabilistic}. 

A defining feature of team semantics is that satisfaction of a formula is defined with reference to a \emphexpr{set} of assignments, not just a \emphexpr{single} assignment, as is the case in classical first-order logic. In this respect, logics with team semantics resemble modal logic, where a Kripke structure offers a spectrum of possible assignments. In a Kripke structure, there is the so-called \partexpr{accessibility relation} which brings in the aspect that assignments, for example, develop in time. This makes it possible to talk in modal logic  about \emphexpr{necessity} and \emphexpr{possibility}. Team semantics is more rigid. There is just the \emphexpr{set} (called \announceexpr{team}) of assignments without any built-in relations between its elements. However, this makes it possible to express interesting and highly non-trivial \quoexpr{combinatorial} properties of variables, such as the aforementioned functional dependence but also inclusion and exclusion atoms \cite{Galliani:2012} and independence atoms \cite{Gradel:2013}. 

We can think of the sub-team relation as an accessibility relation, and then \quoexpr{necessity} and \quoexpr{possibility} make perfect sense. Then \extexpr{necessarily $\phi$} means \extexpr{$\phi$ is true in every sub-team}, and \extexpr{possibly $\phi$} means \extexpr{$\phi$ is true in some non-empty sub-team}.

Some formulas do not seem to take advantage of the existence of a team around an assignment. One way to make this phenomenon exact is the following: a formula $\phi$ is called \defexpr{flat}, or is said to satisfy the \defexpr{flatness criterion}, if for all model $\mathfrak{M}$ and for all team $X$ the following holds
\begin{equation}\label{eqn:flat}
	\mathfrak{M}\vDashteam_{X}\phi\iff\forall s\in X\colon\mathfrak{M}\vDashteam_{\{s\}}\phi\,.
\end{equation}
This criterion holds for first-order formulas $\phi$, \ie{} all first-order formulas are flat. This offers a method to show that some given formula is not logically equivalent to a first-order formula. One just shows that the formula is not flat. To take two simple examples, $\dep(x)$ and $\dep(x,y)$ are not flat, hence they cannot be logically equivalent to a first-order formula, which is intuitively obvious but still needs an argument. Flatness provides such an argument. In general, the problem of deciding whether a given dependence logic formula is equivalent  to a first-order formula is highly undecidable. Therefore, the question of first-order expressibility is non-trivial and any method, even a partial one, is useful. 

One can say that flat formulas are, in a sense, \quoexpr{local}, in that their truth is determined by what holds for single assignments. Still, non-trivial properties of teams can be expressed. For example, $x\syntacticeq y$ is flat and it says that the variables $x$ and $y$ have the same value in the team. It is a trivial property of a team but still such teams are clearly distinguished from totally arbitrary teams. Flatness just means that the distinguishing property is \quoexpr{flatly spread out} in the team.

One might think that flatness of a formula is reflected in the formula being built from flat elements. But all \emphexpr{sentences} are flat because their truth in any non-empty team is determined by their truth in $\{\varnothing\}$, \ie{} the team containing only the empty assignment. A sentence expresses something about the model that the team is based on. A sentence says nothing about any team. The flatness of sentences is, therefore, like an anomaly. But it follows that being flat is not equivalent to being, up to logical equivalence, first-order. A flat formula may very well have non-flat sub-formulas. 

\begin{example*}
	We consider a signature with a $2$-ary relation $R(\emptyarg,\emptyarg)$. Consider the following formula $\phi(x)$
	\[
		\exists u(R(x,u)\land\forall v(R(x,v)\rightarrow\exists w(R(x,w)\land \dep(w,v)\land w\syntacticneq u)))\,.
	\]
	This is satisfied by a team $X$ if for every assignment of the variable $x$ in the team $X$, the set $R(x,\emptyarg)$ is infinite. It is a flat formula, but its sub-formula $\dep(w,v)$ is not flat. Thus, condition (\ref{eqn:flat}) only says that $\phi$ itself behaves in a first order way by being flat, but its sub-formulas need not do so. 
\end{example*}

The class of teams satisfying a flat formula is closed downwards and closed under unions and intersections. There is a maximal team, and the class of teams satisfying a flat formula is precisely the power-set of that team. The maximal team of a flat formula $\phi$ is given by
\[
	\left\{s\setsep\mathfrak{M}\vDashteam_{\{s\}}\phi\right\}\,.
\]
Thus the dimension (in the sense of \cite{Hella:2024}) of a flat formula is always $1$. 

We can make an arbitrary formula flat by using the following new logical operator, first introduced by Hodges in 1997 \cite{Hodges:1997:some}, called the \defexpr{flattening operator} and denoted here by $\F$:
\begin{equation}\label{eqn:flattening-operator}
	\mathfrak{M}\vDashteam_{X}\F\phi\iff\forall s\in X\colon\mathfrak{M}\vDashteam_{\{s\}}\phi\,.
\end{equation}
Thus, $\F\phi$ is always flat, and for a flat formula $\phi$, we have $\phi\equiv\F\phi$. The flattening operator is somewhat similar to the necessity operator $\Box\phi$ and the possibility operator $\Diamond\phi$ from Modal Logic. The \quoexpr{mode} of $\phi$ that $\F\phi$ brings about is not that $\phi$ is \quoexpr{necessary} or \quoexpr{possible}, but that on the question whether the entire team satisfies $\phi$ or not, at least we know that $\phi$ is true in singleton sub-teams. We can perhaps anticipate that $\F\phi$ is computationally much simpler to check for satisfaction than $\phi$ itself, since we have to check only single assignments. However, if $\phi$ is a sentence, then $\F\phi\equiv\phi$. So $\F\phi$ may be computationally as complicated as $\phi$ itself.

The flattening operator $\F$ makes an arbitrary formula flat. This happens with no regard to what the formula looks like inside. There is another way to make a formula flat. This is based on changing the formula inside as follows: The \emphexpr{flattening} $\phi^{\f}$ of a formula $\phi$ is obtained inductively by replacing non-first-order atomic formulas by something that is flat. If there are non-first-order logical operations, then they are similarly replaced by something flat.
We can impose axioms for the flattening, because it is \apriori{} not clear what it is. There may be different ways to flatten a formula, but we expect them all to satisfy some basic axioms. This will be discussed in \cref{sec:flattening-of-a-formula}.

Although $\F\phi$ is always flat, it is not always equivalent to $\phi^{\f}$. We can use the flattening of a formula to define a kind of \emphexpr{negation} of a formula: we take the flattening $\phi^{\f}$, which is first-order, and then apply negation to the flattening. 

A weaker version of flatness is the following: a formula is \defexpr{downwards flat} if 
\begin{equation}\label{eqn:downwards-flat}
	\mathfrak{M}\vDashteam_{X}\phi\implies\forall s\in X\colon\mathfrak{M}\vDashteam_{\{s\}}\phi\,.
\end{equation}
Thus, the truth of $\phi$ in a team is inherited by the singleton sub-teams. Of course, downwards closed formulas are downwards flat, but not necessarily flat. An example is $\dep(x)$. Every singleton team satisfies $\dep(x)$ but not every team satisfies $\dep(x)$.

A similar weaker version of flatness is the following: a formula is \defexpr{upwards flat} if 
\begin{equation}\label{eqn:upwards-flat}
	(\forall s\in X\colon\mathfrak{M}\vDashteam_{\{s\}}\phi)\implies\mathfrak{M}\vDashteam_{X}\phi\,.
\end{equation}
So for the truth of $\phi$ in a team, it is enough that $\phi$ \quoexpr{gathers} truth from little singleton pieces inside the team.

Still another variation of flatness is \emph{$n$-coherence} (introduced in \cite{Kontinen:2013:coherence}, see also \cite{Meissner:2022} and \cite{Ciardelli:2022}):
\begin{equation}\label{eqn:n-flat}
	\mathfrak{M}\vDashteam_{X}\phi\iff(\forall Y\subseteq X\text{ such that }\left|Y\right|=n,\ \mathfrak{M}\vDashteam_{Y}\phi)\,.
\end{equation}
Usually, flatness corresponds in this terminology to $1$-coherence. For example, dependence atoms are not flat but they are $2$-coherent. In \cite{Kontinen:2013:coherence}, it is shown that $n$-coherence is preserved by conjunction but not by disjunction, except if one of the disjuncts is $1$-coherent. The disjunction of two dependence atoms need not be $n$-coherent for any $n$. The model-checking problem for $n$-coherent formulas is in $\LOGSPACE$. Model-checking of the disjunction of two $2$-coherent formulas is in $\NLOGSPACE$. In \cite{Kontinen:2013:coherence}, it is asked whether there is a dependence logic formula which is not $k$-coherent for any $k$ but which is \quoexpr{$\sqrt{n}$-coherent}?

Coherence, just like flatness, can be approached from the point of view of an operator as well. The \emphexpr{$n$-coherence operator} applied to a formula says of a team that every sub-team of size $n$ satisfies the formula. The \emphexpr{co-$n$-coherence operator} applied to a formula says of a team of size $m$ that every sub-team of size $m-n$ satisfies the formula. The co-$n$-coherence operator is a way to say that a formula is approximately satisfied by the team. In \cite{Vaananen:2017} the approximation is more generous (proportional). We can even let an operator say that the team satisfies the formula even if any $n$ assignments satisfying a given fixed first order formula are \emphexpr{added} to the team or \emphexpr{taken away} from the team. So, the team satisfies the formula even if it is made a little smaller or a little bigger. It is a kind of \quoexpr{give or take at most $n$} operator. Dependence logic is closed under all these operators. We will not pursue this line of argument further in this paper.

\paragraph{Our contributions}

We study the effects of extending several team-based logics by the flattening operator of Hodges. We also take a new look at the concept of a flattening of a formula that has been utilized in several works in the area. In \cref{sec:preliminaries} we recall the basic concepts and definitions relevant for this work. In \cref{sec:flattening-of-a-formula} we show that the flattening of a formula can be characterized axiomatically but also give explicit inductive definitions of it for the most prominent team-based logics. In \cref{sec:flattening-operator} we introduce the flattening operator and study its interactions with the aforementioned logics. In \cref{sec:expressive-power-of-F} we show that the flattening operator increases the expressive power of the unary fragments of inclusion and anonymity logics. These results show that the two different ways of defining a flattening of a formula are not in general equivalent.

\section{Preliminaries}
\label{sec:preliminaries}

In this section, we will briefly recall the notation used in this work, the definition of team semantics, and some basic results that will be used in the rest of this work. Through all of this work, we assume that we have countable sets of individual variable symbols $x_{i}$, $y_{i}$, $z_{i}$, etc. for $i\in\mathbb{N}$, and of relation variable symbols $X_{i}$, $Y_{i}$, $Z_{i}$, etc. for $i\in\mathbb{N}$ of all arities. We will use Fraktur font to represent models, such as $\mathfrak{A}$, $\mathfrak{B}$, $\mathfrak{M}$, etc. and use Roman font to represent their corresponding domain $A$, $B$, $M$, etc. We will write $\tuple{x}$, $\tuple{y}$, $\tuple{z}$ and so on to describe tuples of variable symbols; and likewise, we will write $\tuple{a}$, $\tuple{b}$, $\tuple{m}$ and so forth to describe tuples of elements of a model. Given any set $A$, we will furthermore write $\powerset(A)$ for the powerset $\left\{B\setsep B \subseteq A\right\}$ of $A$, and $\powersetstar(A)$ for $\powerset(A)\setminus\left\{\varnothing\right\}$.

\subsection{Team semantics}

If $s$ is an assignment, we write $s\left[m\replaces v\right]$ for the assignment in which we substituted the variable $v$ by the value $m$. We recall from \cite{Vaananen:2007:dependence} the following concepts: a \defexpr{team} $X$ is any set of assignments for a fixed set of variables, denoted $\dom(X)$. Suppose now $X$ is a team. It is sometimes useful to think of $X$ as a \emph{relation} in the following sense: suppose $x_1,\ldots,x_n\in\dom(X)$. We obtain from $X$ the $n$-ary relation $\left\{\langle s(x_1),\ldots,s(x_n)\rangle\setsep s\in X\right\}$. If $m$ is an element the domain $M$ of our model $\mathfrak{M}$, then $X[m/v]$ is the team consisting of all $s[m/v]$, where $s\in X$. We use $X[M/v]$, called the \defexpr{duplication} of $X$ at $v$, to denote the set of all $s[m/v]$, where $m\in M$  and $s\in X$. Finally, if $H\colon X\to\powersetstar(M)$, then $X[H/v]$, called the \defexpr{supplementation} of $X$ at $v$ by $H$, is the team consisting of all $s[m/v]$, where $s\in X$ and $m\in H(s)$.

For our purposes, it will be useful to first present team semantics for first-order logic proper, and then progressively add new atoms (dependency, non-dependency, inclusion, exclusion, etc.).

We recall the usual definition of team semantics (in its lax formulation). For the sake of simplicity, it is assumed that all expressions are in \partexpr{negation normal form}, meaning that negations $\lnot$ only appear in front of first-order literals.

\begin{definition}[Team semantics for first-order logic]\label{def:team-semantics-FO}
	Let $\sigma$ be a first-order signature. Let $\mathfrak{M}$ be a first-order $\sigma$-model, let $\varphi$ be a first-order $\sigma$-formula, and let $X$ be a team over $\mathfrak{M}$ such that $\dom\left(X\right)\supseteq\FV\left(\varphi\right)$. Then we define the relation $\mathfrak{M}\vDashteam_{X}\varphi$ by induction over the structure of the formula $\varphi$ as follows:
	\begin{description}
		\item[TS-lit:]\label{def:TS-lit} If $\varphi\equiv\alpha$ where $\alpha$ is a first-order $\sigma$-literal, then $\mathfrak{M}\vDashteam_{X}\varphi$ if and only if for all assignments $s\in X$ it holds that $\mathfrak{M}\vDashTarski_{s}\alpha$ (here according to Tarskian semantics).
		
		\item[TS-$\lor$:]\label{def:TS-or} If $\varphi\equiv\psi_{1}\lor\psi_{2}$ where $\psi_{1}$ and $\psi_{2}$ are first-order $\sigma$-formulas, then $\mathfrak{M}\vDashteam_{X}\varphi$ if and only if there exist some teams $Y$ and $Z$ (not necessarily disjoint) such that $X=Y\cup Z$, $\mathfrak{M}\vDashteam_{Y}\psi_{1}$ and $\mathfrak{M}\vDashteam_{Z}\psi_{2}$.
		
		\item[TS-$\land$:]\label{def:TS-and} If $\varphi\equiv\psi_{1}\land\psi_{2}$ where $\psi_{1}$ and $\psi_{2}$ are first-order $\sigma$-formulas, then $\mathfrak{M}\vDashteam_{X}\varphi$ if and only if $\mathfrak{M}\vDashteam_{X}\psi_{1}$ and $\mathfrak{M}\vDashteam_{X}\psi_{2}$.
		
		\item[TS-$\exists$:]\label{def:TS-exists} If $\varphi\equiv\exists v\psi$ where $v$ is a variable symbol and $\psi$ is a first-order $\sigma$-formula, then $\mathfrak{M}\vDashteam_{X}\varphi$ if and only if there exists some $H\colon X\to\powersetstar\left(M\right)$ such that $\mathfrak{M}\vDashteam_{X\left[H\replaces v\right]}\psi$.
		
		\item[TS-$\forall$:]\label{def:TS-forall} If $\varphi\equiv\forall v\psi$ where $v$ is a variable symbol and $\psi$ is a first-order $\sigma$-formula, then $\mathfrak{M}\vDashteam_{X}\varphi$ if and only if $\mathfrak{M}\vDashteam_{X\left[M\replaces v\right]}\psi$.
	\end{description}
\end{definition}

If it is the case that $\mathfrak{M}\vDashteam_{X}\varphi$, then we say that the model $\mathfrak{M}$ together with the team $X$ \defexpr{satisfy} $\varphi$, and if it is not the case, then we say that the model $\mathfrak{M}$ together with the team $X$ \defexpr{do not satisfy} $\varphi$ and we write it $\mathfrak{M}\nvDashteam_{X}\varphi$. If $\varphi$ is a sentence (\ie{} has no free-variables), we say that $\varphi$ is \defexpr{true} in $\mathfrak{M}$ according to team semantics, and we write it $\mathfrak{M}\vDashteam\varphi$, if $\mathfrak{M}\vDashteam_{\left\{\varnothing\right\}}\varphi$ where $\left\{\varnothing\right\}$ is the team containing only the empty assignment. Otherwise, we say that $\varphi$ is \defexpr{false} in $\mathfrak{M}$ according to team semantics, and we write it $\mathfrak{M}\nvDashteam\varphi$.

We recall the most well-known team-based logics. We first give definitions for \announceexpr{dependence atoms}, introduced in \cite{Vaananen:2007:dependence}. Let $k\in\mathbb{N}$. If $\tuple{t}_{1}$ is a $k$-tuple of $\sigma$-terms and $t_{2}$ is a $\sigma$-term, then $\dep(\tuple{t}_{1},t_{2})$ is called a \defexpr{$k$-ary dependence atom}. For $0$-ary atoms, called \defexpr{constancy atoms}, we may simply write $\dep(t_{2})$. We define $\FV(\dep(\tuple{t}_{1},t_{2}))=\Vr(\tuple{t}_{1})\cup\Vr(t_{2})$. The set of formulas of \defexpr{dependence logic} over a signature $\sigma$, denoted by $\FO(\depformat)\left[\sigma\right]$ or even more simply $\DEP\left[\sigma\right]$, is defined by adding dependence atoms of all arities to the definition of the set of formulas of $\FO\left[\sigma\right]$. By allowing only at most $k$-ary dependence atoms in the condition above, we obtain the fragment $\FO(\depformat[k])\left[\sigma\right]$ or even more simply $\DEP[k]\left[\sigma\right]$ called \defexpr{$k$-ary dependence logic}. If $\mathfrak{M}$ is a model and $X$ is a team over $\mathfrak{M}$ such that $\dom(X)\supseteq\Vr(\tuple{t}_{1}t_{2})$, then we define the truth of $\dep(\tuple{t}_{1},t_{2})$ in the model $\mathfrak{M}$ and the team $X$:
\begin{center}
	$\mathfrak{M}\vDashteam_{X}\dep(\tuple{t}_{1},t_{2})$\quad iff\quad for all $s,s'\in X$, if $s(\tuple{t}_{1})=s'(\tuple{t}_{1})$, then $s(t_{2})=s'(t_{2})$.
\end{center}

Next, we present so-called, \announceexpr{anonymity atoms} also called \announceexpr{non-dependency atoms} (\cite{Galliani:2012}), or \announceexpr{afunctional dependence atoms} (\cite{Paredaens:1989}). They have been studied, \eg, in \cite{Vaananen:2022,Ronnholm:2018:phdthesis}. Let $k\in\mathbb{N}$. If $\tuple{t}_{1}$ is a $k$-tuple of $\sigma$-terms and $t_{2}$ is a $\sigma$-term, then $\tuple{t}_{1}\anon t_{2}$ is called a \defexpr{$k$-ary anonymity atom}. For $0$-ary atoms $\anon[0]t$, called \defexpr{non-constancy atoms}, we may simply write $\anon t$. We define $\FV(\tuple{t}_{1}\anon t_{2})=\Vr(\tuple{t}_{1})\cup\Vr(t_{2})$. The set of formulas of \defexpr{anonymity logic} over a signature $\sigma$, denoted by $\FO(\anon)\left[\sigma\right]$ or even more simply $\ANON\left[\sigma\right]$, is defined by adding anonymity atoms of all arities to the definition of the set of formulas of $\FO\left[\sigma\right]$. By allowing only at most $k$-ary anonymity atoms in the condition above, we obtain the fragment $\FO(\anon[k])\left[\sigma\right]$ or even more simply $\ANON[k]\left[\sigma\right]$ called \defexpr{$k$-ary anonymity logic}. If $\mathfrak{M}$ is a model and $X$ is a team over $\mathfrak{M}$ such that $\dom(X)\supseteq\Vr(\tuple{t}_{1}t_{2})$, then we define the truth of $\tuple{t}_{1}\anon t_{2}$ in the model $\mathfrak{M}$ and the team $X$:
\begin{center}
	$\mathfrak{M}\vDashteam_{X}\tuple{t}_{1}\anon t_{2}$\quad iff\quad for every $s\in X$, there is $s'\in X$ such that $s(\tuple{t}_{1})=s'(\tuple{t}_{1})$ but $s(t_{2})\neq s'(t_{2})$.
\end{center}

Next, we turn to \announceexpr{inclusion atoms}. Let $k\in\mathbb{N}$. If $\tuple{t}_{1},\tuple{t}_{2}$ are $k$-tuples of $\sigma$-terms, then $\tuple{t}_{1}\incl\tuple{t}_{2}$ is a \defexpr{$k$-ary inclusion atom}. We define $\FV(\tuple{t}_{1}\incl\tuple{t}_{2})=\Vr(\tuple{t}_{1})\cup\Vr(\tuple{t}_{2})$. The set of formulas of \defexpr{inclusion logic} over a signature $\sigma$, denoted by $\FO(\incl)\left[\sigma\right]$ or even more simply $\INC\left[\sigma\right]$, is defined by adding inclusion atoms of all arities to the definition of the set of formulas of $\FO\left[\sigma\right]$. By allowing only at most $k$-ary inclusion atoms in the condition above, we obtain the fragment $\FO(\incl[k])\left[\sigma\right]$ or even more simply $\INC[k]\left[\sigma\right]$ called \defexpr{$k$-ary inclusion logic}. If $\mathfrak{M}$ is a model and $X$ is a team over $\mathfrak{M}$ such that $\dom(X)\supseteq\Vr(\tuple{t}_{1}\tuple{t}_{2})$, then we define the truth of $\tuple{t}_{1}\incl\tuple{t}_{2}$ in the model $\mathfrak{M}$ and the team $X$:
\begin{center}
	$\mathfrak{M}\vDashteam_{X}\tuple{t}_{1}\incl\tuple{t}_{2}$\quad iff\quad for all $s\in X$, there exists $s'\in X$ such that $s(\tuple{t}_{1})=s'(\tuple{t}_{2})$.
\end{center}

Let us now recall \announceexpr{(conditional) independence atoms}. Let $k_{1},k_{2},k_{3}\in\mathbb{N}$. If $\tuple{t}_{1}$ is a $k_{1}$-tuple, $\tuple{t}_{2}$ a $k_{2}$-tuple of $\sigma$-terms and $\tuple{t}_{3}$ a $k_{3}$-tuple of $\sigma$-terms, then $\tuple{t}_{2}\ind_{\tuple{t}_{1}}\tuple{t}_{3}$ is a \defexpr{($k_{1}$,$k_{2},k_{3}$)-ary (conditional) independence atom}. We define $\FV(\tuple{t}_{2}\ind_{\tuple{t}_{1}}\tuple{t}_{3})=\Vr(\tuple{t}_{1})\cup\Vr(\tuple{t}_{2})\cup\Vr(\tuple{t}_{3})$. The set of formulas of \defexpr{(conditional) independence logic} over a signature $\sigma$, denoted by $\FOind\left[\sigma\right]$ or even more simply $\IND\left[\sigma\right]$ is defined by adding independence atoms of all arities to the definition of the set of formulas of $\FO\left[\sigma\right]$. By allowing only at most ($k_{1}$,$k_{2},k_{3}$)-ary independence atoms in the condition above, we obtain the fragment $\FO(\ind_{k_{1},k_{2},k_{3}})\left[\sigma\right]$ or even more simply $\IND[k_{1},k_{2},k_{3}]\left[\sigma\right]$ called \defexpr{$(k_{1},k_{2},k_{3})$-ary (conditional) independence logic}. If $\mathfrak{M}$ is a model and $X$ is a team over $\mathfrak{M}$ such that $\dom(X)\supseteq\Vr(\tuple{t}_{1}\tuple{t}_{2}\tuple{t}_{3})$, then we define the truth of $\tuple{t}_{2}\ind_{\tuple{t}_{1}}\tuple{t}_{3}$ in the model $\mathfrak{M}$ and the team $X$:
\begin{center}
	$\mathfrak{M}\vDashteam_{X}\tuple{t}_{2}\ind_{\tuple{t}_{1}}\tuple{t}_{3}$\quad iff\quad for all $s,s'\in X$, if $s(\tuple{t}_{1})=s'(\tuple{t}_{1})$, then there exists $s''\in X$ such that $s''(\tuple{t}_{1}\tuple{t}_{2})=s(\tuple{t}_{1}\tuple{t}_{2})$ and $s''(\tuple{t}_{1}\tuple{t}_{3})=s'(\tuple{t}_{1}\tuple{t}_{3})$.
\end{center}

Finally, we present the syntax and the semantics for \announceexpr{exclusion logic}. Let $k\in\mathbb{N}$. If $\tuple{t}_{1},\tuple{t}_{2}$ are $k$-tuples of $\sigma$-terms, then $\tuple{t}_{1}\excl\tuple{t}_{2}$ is a \defexpr{$k$-ary exclusion atom}. We define $\FV(\tuple{t}_{1}\excl\tuple{t}_{2})=\Vr(\tuple{t}_{1})\cup\Vr(\tuple{t}_{2})$. The set of formulas of \defexpr{exclusion logic} over a signature $\sigma$, denoted by $\FO(\excl)\left[\sigma\right]$ or even more simply $\EXC\left[\sigma\right]$, is defined by adding exclusion atoms of all arities to the definition of the set of formulas of $\FO\left[\sigma\right]$. By allowing only at most $k$-ary exclusion atoms in the condition above, we obtain the fragment $\FO(\excl[k])\left[\sigma\right]$ or even more simply $\EXC[k]\left[\sigma\right]$ called \defexpr{$k$-ary exclusion logic}. If $\mathfrak{M}$ is a model and $X$ is a team over $\mathfrak{M}$ such that $\dom(X)\supseteq\Vr(\tuple{t}_{1}\tuple{t}_{2})$. We define the truth of $\tuple{t}_{1}\excl\tuple{t}_{2}$ in the model $\mathfrak{M}$ and the team $X$:
\begin{center}
	$\mathfrak{M}\vDashteam_{X}\tuple{t}_{1}\excl\tuple{t}_{2}$\quad iff\quad for all $s,s'\in X$, it holds that $s(\tuple{t}_{1})\neq s'(\tuple{t}_{2})$.
\end{center}

More generally, we will speak of \defexpr{team-based logic} to speak of one of these logics based on team semantics (\ie{} independence logic, anonymity logic, inclusion logic, exclusion logic, or independence logic) and of \defexpr{team-based formula} to speak of a formula from one of these logics (\ie{} a formula which uses dependency atoms, anonymity atoms, inclusion atoms, exclusion atoms, or independence atoms).

\subsection{Closure properties}

The logics that we have presented so far can be differentiated on the basis of their distinct closure properties. Let us start by defining the closure properties.

\begin{definition}
	Let $\sigma$ be a signature. A team-based $\sigma$-formula $\varphi$:
	\begin{itemize}
		\item is said to have the \defexpr{empty team property} if for any $\sigma$-model $\mathfrak{M}$, it holds that $\mathfrak{M}\vDashteam_{\varnothing}\varphi$ (according to team semantics).
		
		\item is said to be \defexpr{downwards closed} if for all $\sigma$-models $\mathfrak{M}$ and teams $X$ over $\mathfrak{M}$, it holds that
		\[
			\mathfrak{M}\vDashteam_{X}\varphi\implies\forall Y\subseteq X,\mathfrak{M}\vDashteam_{Y}\varphi\,.
		\]
		
		\item is said to be \defexpr{union closed} if for all $\sigma$-models $\mathfrak{M}$ and collections $\left\{X_i\setsep i\in I\right\}$ of teams $X_i$ over $\mathfrak{M}$, it holds that
		\[
			\mathfrak{M}\vDashteam_{X_i}\varphi\text{ for all $i\in I$}\implies\mathfrak{M}\vDashteam_{\cup_{i\in I}X_i}\varphi\,.
		\]
		
		\item is said to be \defexpr{flat} if for all $\sigma$-models $\mathfrak{M}$ and teams $X$ over $\mathfrak{M}$, it holds that
		\[
			\mathfrak{M}\vDashteam_{X}\varphi\iff\mathfrak{M}\vDashteam_{\left\{s\right\}}\varphi\text{ for all }s\in X\,.
		\]
	\end{itemize}
\end{definition}

One can remark that the flatness property is equivalent to the combination of the downwards closure and the union closure properties.

Each of these properties is extended to team-based logics: a given logic (\eg{} $\DEP$, $\ANON$, $\INC$, $\EXC$, \dots) has some of these properties if \emphexpr{all} formulas of the logic have  those properties. We can then split some of the well-known team-based logics between the ones that are downwards closed and the ones that are union closed.

\section{Flattening of a formula}
\label{sec:flattening-of-a-formula}

We will now introduce the \announceexpr{flattening of a formula} $\varphi$, denoted by $\varphi^{\f}$, where $\varphi$ belongs to some team-based logic. The notion of \extexpr{flattened formulas} was first defined  in \cite{Vaananen:2007:dependence} and then used \eg{} in \cite{Yang:2014}.

The definition of flattening that was given in \cite{Vaananen:2007:dependence} was inductive and for dependence logic only. We present here an axiomatic definition which is appropriate in a wider context. We show that in most cases these axioms determine a unique inductive definition of a flattening of a formula.   %
\begin{enumerate}
	\item\label{ax:entailment-axiom} (entailment axiom) for all models $\mathfrak{M}$ and all teams $X$ over $\mathfrak{M}$ such that $\dom(X)\supseteq\FV(\varphi)$, the following must hold
	\[
		\mathfrak{M}\vDash_{X}\varphi\implies\mathfrak{M}\vDash_{X}\varphi^{\f}\,.
	\]
	
	\item\label{ax:first-order-axiom} (flatness axiom) $\varphi^{\f}$ is a flat formula.
	
	\item\label{ax:distributivity-axiom} (distributivity axiom) The following holds: $(\star(\varphi_{1},\dots,\varphi_{n}))^{\f}=\star(\varphi_{1}^{\f},\dots,\varphi_{n}^{\f})$ for any logical operator $\star$ of arity $n$, and $(Qx\varphi)^{\f}=Qx\varphi^{\f}$ for any quantifier $Q$.
\end{enumerate}

Now, what we would like to obtain is an explicit inductive definition of the flattening of a given formula. From these requirements, we can infer the flattening of the most well-known team-based atoms. It is easy to see that taking $\top$ for the flattening of any non first-order atom is \emphexpr{sufficient} to satisfy all these requirements. But the interesting question is: is it \emphexpr{necessary}? The goal of this section is to answer this question.

We get the following flattening for the most well-known team-based atoms

\begin{proposition}\label{prp:flattening-atoms}
The following equivalences are implied by the above axioms:
	\begin{multicols}{2}
		\begin{enumerate}
			\item\label{itm:flattening-atoms:dep} for dependency atoms: $(\dep(\tuple{x},y))^{\f}\equiv\top$,
			
			\item\label{itm:flattening-atoms:anon} for anonymity atoms: $(\tuple{x}\anon y)^{\f}\equiv\top$,
			
			\item\label{itm:flattening-atoms:incl} for inclusion atoms: $(\tuple{x}\incl\tuple{y})^{\f}\equiv\top$,
			
			\item\label{itm:flattening-atoms:ind} for independence atoms: $(\tuple{x}\ind\tuple{y})^{\f}\equiv\top$.
		\end{enumerate}
	\end{multicols}
\end{proposition}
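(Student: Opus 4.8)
The plan is to prove \emph{necessity}: any operator $(\cdot)^{\f}$ satisfying the three axioms must send each of these atoms to a formula logically equivalent to $\top$. Fix one such atom $\alpha$. By the flatness axiom, $\alpha^{\f}$ is flat, and (using the observation that flatness is the conjunction of downward closure and union closure) it suffices to show that every singleton team satisfies $\alpha^{\f}$. Indeed, once $\mathfrak{M}\vDashteam_{\{s\}}\alpha^{\f}$ holds for every model $\mathfrak{M}$ and every assignment $s$, the union direction of flatness forces $\mathfrak{M}\vDashteam_{X}\alpha^{\f}$ for every non-empty team $X$, while the empty team satisfies $\alpha^{\f}$ vacuously; this is exactly $\alpha^{\f}\equiv\top$.

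The only axiom linking $\alpha$ to $\alpha^{\f}$ is the entailment axiom, so the heart of the argument is the following \emph{extension} claim: for every model $\mathfrak{M}$ and every assignment $s$ with $\dom(s)\supseteq\FV(\alpha)$ there is a team $X$ with $s\in X$ and $\mathfrak{M}\vDashteam_{X}\alpha$. Granting this, the entailment axiom yields $\mathfrak{M}\vDashteam_{X}\alpha^{\f}$, and the downward direction of flatness then gives $\mathfrak{M}\vDashteam_{\{s\}}\alpha^{\f}$, which is what the reduction needs. (The distributivity axiom plays no direct role here; it only ensures that such an operator is fixed inductively, so that pinning it down on atoms determines it everywhere.)

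It remains to verify the extension claim atom by atom, and this is where the cases genuinely differ. For the dependence atom $\dep(\tuple{x},y)$ and the independence atom $\tuple{x}\ind\tuple{y}$ the claim is immediate: the singleton $X=\{s\}$ already satisfies the atom, since both defining conditions range over pairs $s,s'\in X$ and hold trivially when $s=s'$ (witnessed by $s''=s$ in the independence case). For the anonymity atom $\tuple{x}\anon y$ the singleton fails, so I would take $X=\{s,\,s[b/y]\}$ for some $b\neq s(y)$: the two assignments agree on $\tuple{x}$ and disagree on $y$, so each serves as the other's witness. For the inclusion atom $\tuple{x}\incl\tuple{y}$ I would adjoin a companion $s'$ realising the value $s(\tuple{x})$ in the $\tuple{y}$-coordinates, e.g.\ $s'$ agreeing with $s$ off $\tuple{y}$ with $s'(\tuple{y})=s(\tuple{x})$, and check that $X=\{s,s'\}$ meets the inclusion requirement for both points.

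The main obstacle is precisely this extension claim for anonymity and inclusion, together with its implicit hypotheses. For anonymity one must be able to pick $b\neq s(y)$, so the construction needs the domain to have at least two elements (and tacitly $y\notin\Vr(\tuple{x})$, otherwise $\tuple{x}\anon y$ is unsatisfiable on any non-empty team); over a one-element model the entailment axiom says nothing about singletons and the flattening is genuinely not forced to be $\top$. For inclusion one must verify that the companion assignment can be chosen consistently, which is transparent when $\tuple{x}$ and $\tuple{y}$ share no variables but requires a small closure argument when they overlap. I would therefore carry out the anonymity and inclusion constructions with care and record the (mild, standard) non-degeneracy assumptions under which the clean equivalence $\alpha^{\f}\equiv\top$ holds.
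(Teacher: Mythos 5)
Your proposal is correct and takes essentially the same approach as the paper: the paper likewise combines the entailment and flatness axioms with exactly your witnessing teams (the singleton for dependence and independence, a two-assignment team with equal $\tuple{x}$-values and distinct $y$-values for anonymity, and a companion with $s'(\tuple{y})=s(\tuple{x})$ for inclusion), merely phrased contrapositively as a contradiction starting from a singleton falsifying $\alpha^{\f}$. Your side remarks on the implicit hypotheses are apt --- the paper invokes a standing assumption that domains have at least two elements for the anonymity case, and its inclusion witness $s_{0}'$ with $s_{0}'(\tuple{x})=s_{0}'(\tuple{y})=s_{0}(\tuple{x})$ is only well-defined under the variable-disjointness you flag.
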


\begin{proof}
	The proof of each claim is straightforward.
	\begin{enumerate}
		\item[\ref{itm:flattening-atoms:dep}] Suppose  $(\dep(\tuple{x},y))^{\f}\nequiv\top$. Then $\mathfrak{M}\nvDash_{X}(\dep(\tuple{x},y))^{\f}$ for some model $\mathfrak{M}$ and team $X$. But by flatness axiom, $(\dep(\tuple{x},y))^{\f}$ is flat, and so it means that there is $s_{0}\in X$ such that $\mathfrak{M}\nvDash_{\left\{s_{0}\right\}}(\dep(\tuple{x},y))^{\f}$.
		On the other hand, as a singleton team satisfies any  dependence atom, then in particular $\mathfrak{M}\vDash_{\left\{s_{0}\right\}}\dep(\tuple{x},y)$, and so by the entailment axiom, $\mathfrak{M}\vDash_{\left\{s_{0}\right\}}(\dep(\tuple{x},y))^{\f}$. Contradiction.
		
		\item[\ref{itm:flattening-atoms:anon}] Suppose that $(\tuple{x}\anon y)^{\f}\nequiv\top$. Then $\mathfrak{M}\nvDash_{X}(\tuple{x}\anon y)^{\f}$ for some model $\mathfrak{M}$ and team $X$. But by flatness axiom, $(\tuple{x}\anon y)^{\f}$ is flat, and so it means that there is $s_{0}\in X$ such that $\mathfrak{M}\nvDash_{\left\{s_{0}\right\}}(\tuple{x}\anon y)^{\f}$.
		On the other hand, we construct from scratch a new team $Y$ as follows. We write $\tuple{a}=s_{0}(\tuple{x})$ and $b=s_{0}(y)$. Let $b'\in M\setminus\left\{b\right\}$ (non-empty by assumption in this article). Let $s_{0}'$ be the assignment $\left\{\tuple{x}\mapsto\tuple{a},y\mapsto b'\right\}$. Then consider the team $Y=\left\{s_{0},s_{0}'\right\}$, \ie{} $Y$ is the following team
		\begin{center}
			\begin{tabular}{cc|c}
				& $\tuple{x}$ & $y$ \\\cline{2-3}
				$s_{0}\colon$  & $\tuple{a}$ & $b$ \\
				$s_{0}'\colon$ & $\tuple{a}$ & $b'$
			\end{tabular}
		\end{center}
		By construction, $\mathfrak{M}\vDash_{Y}\tuple{x}\anon y$, so by the entailment axiom, we get $\mathfrak{M}\vDash_{Y}(\tuple{x}\anon y)^{\f}$. Now by the flatness axiom, $(\tuple{x}\anon y)^{\f}$ is flat, and so for all $s\in Y$ we get $\mathfrak{M}\vDash_{\left\{s\right\}}(\tuple{x}\anon y)^{\f}$, in particular for $s_{0}\in Y$ we get $\mathfrak{M}\vDash_{\left\{s_{0}\right\}}(\tuple{x}\anon y)^{\f}$. Contradiction.
		
		\item[\ref{itm:flattening-atoms:incl}] Suppose  $(\tuple{x}\incl\tuple{y})^{\f}\nequiv\top$. Then $\mathfrak{M}\nvDash_{X}(\tuple{x}\incl\tuple{y})^{\f}$ for some model $\mathfrak{M}$ and team $X$. But by flatness axiom, $(\tuple{x}\incl\tuple{y})^{\f}$ is flat, and so it means that there is $s_{0}\in X$ such that $\mathfrak{M}\nvDash_{\left\{s_{0}\right\}}(\tuple{x}\incl\tuple{y})^{\f}$.
		On the other hand, we construct from scratch a new team $Y$ as follows. We write $\tuple{a}=s_{0}(\tuple{x})$ and $\tuple{b}=s_{0}(\tuple{y})$. Let $s_{0}'$ be the assignment $\left\{\tuple{x}\mapsto\tuple{a},\tuple{y}\mapsto\tuple{a}\right\}$. Then consider the team $Y=\left\{s_{0},s_{0}'\right\}$, \ie{} $Y$ is the following team
		\begin{center}
			\begin{tabular}{cc|c}
				& $\tuple{x}$ & $\tuple{y}$\\\cline{2-3}
				$s_{0}\colon$  & $\tuple{a}$ & $\tuple{b}$\\
				$s_{0}'\colon$ & $\tuple{a}$ & $\tuple{a}$
			\end{tabular}
		\end{center}
		By construction, $\mathfrak{M}\vDash_{Y}\tuple{x}\incl\tuple{y}$, so by the entailment axiom, we get $\mathfrak{M}\vDash_{Y}(\tuple{x}\incl\tuple{y})^{\f}$. Now by the flatness axiom, $(\tuple{x}\incl\tuple{y})^{\f}$ is flat, and so for all $s\in Y$ we get $\mathfrak{M}\vDash_{\left\{s\right\}}(\tuple{x}\incl\tuple{y})^{\f}$, in particular for $s_{0}\in Y$ we get $\mathfrak{M}\vDash_{\left\{s_{0}\right\}}(\tuple{x}\incl\tuple{y})^{\f}$. Contradiction.
		
		\item[\ref{itm:flattening-atoms:ind}] Suppose $(\tuple{x}\ind\tuple{y})^{\f}\nequiv\top$. Then $\mathfrak{M}\nvDash_{X}(\tuple{x}\ind\tuple{y})^{\f}$ for some model $\mathfrak{M}$ and team $X$. But by flatness axiom, $(\tuple{x}\ind\tuple{y})^{\f}$ is flat, and so it means that there is $s_{0}\in X$ such that $\mathfrak{M}\nvDash_{\left\{s_{0}\right\}}(\tuple{x}\ind\tuple{y})^{\f}$.
		On the other hand, as a singleton team satisfy any pure independence atom, then in particular $\mathfrak{M}\vDash_{\left\{s_{0}\right\}}\tuple{x}\ind\tuple{y}$, and so by the entailment axiom, $\mathfrak{M}\vDash_{\left\{s_{0}\right\}}(\tuple{x}\ind\tuple{y})^{\f}$. Contradiction.
	\end{enumerate}
\end{proof}

\begin{remark}
	Curiously, we do not obtain a unique solution in the case of the exclusion atom. Indeed, for the exclusion atom $(\tuple{x}\excl\tuple{y})^{\f}$, there are (at least) two choices satisfying the flattening axioms: we can let $(\tuple{x}\excl\tuple{y})^{\f}$ be $\top$, or can let $(\tuple{x}\excl\tuple{y})^{\f}$ be $\tuple{x}\neq\tuple{y}$. Both choices  satisfy the flattening axioms.

	In order to remain as symmetrical as possible, we make the choice in this thesis to take $\top$ as the flattening of the exclusion atom.
\end{remark}

Note that in this \cref{prp:flattening-atoms}, we found the flattening only for atoms built of variables, not terms. In fact, this is not an issue as we can eliminate composite terms inside team-based atoms by existentially quantifying over new variables. Hence, we now have an explici expression for all team-based atoms built of terms. Also, it is immediate to infer that the flattening of a first-order literal in actually itself. One can then infer the flattening of any team-based formula using the \hyperref[ax:distributivity-axiom]{distributivity axiom}: substitute all team-based atoms by $\top$, and keep all other (first-order) literals as they are. %

\section{Flattening operator}
\label{sec:flattening-operator}

We move now to investigating the flattening operator which just boldly asserts that the formula following the operator has to be treated in a flat way, whether the formula looks flat or not. This operator was first introduced  in \cite{Hodges:1997:some} and was at that time denoted by \citeexpr{$\downarrow$}. In this article, we will denote it by \citeexpr{$\F$}.

\subsection{Definition}

First we give the syntax and the semantics of the flattening operator $\F$.

\begin{definition}[Syntax of the flattening operator]\label{def:syntax-flattening-operator}
	Let $\varphi$ be a team-based $\sigma$-formula, then $\F\varphi$ is a team-based $\sigma$-formula. %
	The flattening operator does not bind variables, thus $\FV(\F\varphi)=\FV(\varphi)$.
\end{definition}

It should be noted that the flattening operator $\F$ may be applied to any kind of team-based formula. Let $\mathcal{D}=\left\{d_{1},\dots,d_{n}\right\}$ be a set of team-based constructors (e.g. dependency, anonymity, independency, inclusion, exclusion, etc.). %
We will refer to the extension of  $\FO(\mathcal{D})$ by the flattening operator $\F$ by $\FO(\mathcal{D},\F)$. 

\begin{definition}[Semantics of the flattening operator]
	Let $\varphi$ be a team-based $\sigma$-formula, $\mathfrak{M}$ a $\sigma$-model and $X$ a team over $\mathfrak{M}$ such that $\dom(X)\supseteq\FV(\varphi)$. We define the truth of the formula $\F\varphi$ in the model $\mathfrak{M}$ and the team $X$ as:
	\[
		\mathfrak{M}\vDash_{X}\F\varphi\iff\forall s\in X,\mathfrak{M}\vDash_{\left\{s\right\}}\varphi
	\]
\end{definition}

Therefore, one can rewrite the flatness definition as follows: let $\varphi$ be a team-based formula, $\mathfrak{M}$ a model and $X$ a team over $\mathfrak{M}$ such that $\dom\left(X\right)\supseteq\FV\left(\varphi\right)$ (also equal to $\FV\left(\F\varphi\right)$), then $\varphi$ is flat if and only if it holds that
\[
	\mathfrak{M}\vDash_{X}\varphi\iff\mathfrak{M}\vDash_{X}\F\varphi.
\]
It is precisely because of this characterization of flattenness that the flattening operator was introduced, and that it bears the name that it does. One can easily show the following immediate properties about the flattening operator.

\begin{property}\label{pty:flattening-operator-properties}
	Let $\varphi$ be any team-based formula. Then the following properties hold:
	\begin{enumerate}
		\item\label{pty:flattening-operator-properties:F-phi-flat} the formula $\F\varphi$ is always flat.
		
		\item\label{pty:flattening-operator-properties:flat-implies-F-phi-equiv-phi} $\varphi$ is flat if and only if $\F\varphi\equiv\varphi$.
		
		\item\label{pty:flattening-operator-properties:indempotency} the operator $\F$ is idempotent, that is, $\F\F\varphi\equiv\F\varphi$.
	\end{enumerate}
\end{property}

\begin{proof}
	\leavevmode
	\begin{enumerate}
		\item[\ref{pty:flattening-operator-properties:F-phi-flat}] Let $\mathfrak{M}$ be a model, and let $X$ be a team over $\mathfrak{M}$ such that $\dom\left(X\right)\supseteq\FV\left(\F\varphi\right)$. According to \cref{def:syntax-flattening-operator}, we get that $\FV\left(\F\varphi\right)=\FV\left(\varphi\right)$. Then
		\begin{align*}
			&\mathfrak{M}\vDash_{X}\F\varphi \\
			\iff&\forall s\in X,\mathfrak{M}\vDash_{\left\{s\right\}}\varphi & \text{(semantics of $\F$)} \\
			\iff&\forall s\in X,\left(\forall s'\in\left\{s\right\},\mathfrak{M}\vDash_{\left\{s'\right\}}\varphi\right) \\
			\iff&\forall s\in X,\mathfrak{M}\vDash_{\left\{s\right\}}\F\varphi
		\end{align*}
		
		\item[\ref{pty:flattening-operator-properties:flat-implies-F-phi-equiv-phi}] Let $\mathfrak{M}$ be a model, and let $X$ be a team over $\mathfrak{M}$ such that $\dom\left(X\right)\supseteq\FV\left(\varphi\right)$. Suppose that $\varphi$ is flat, then $\mathfrak{M}\vDash_{X}\F\varphi$ iff for all $s\in X$, $\mathfrak{M}\vDash_{\left\{s\right\}}\varphi$ iff $\mathfrak{M}\vDash_{X}\varphi$. Conversely, suppose that $\F\varphi\equiv\varphi$, then $\mathfrak{M}\vDash_{X}\varphi$ iff $\mathfrak{M}\vDash_{X}\F\varphi$ iff for all $s\in X$, $\mathfrak{M}\vDash_{\{s\}}\varphi$.
		
		\item[\ref{pty:flattening-operator-properties:indempotency}] This is a direct consequence of the two previous items.
	\end{enumerate}
\end{proof}

\begin{property}\label{pty:F-implies-fo}
	Let $\varphi$ be a team-based formula, $\mathfrak{M}$ a model and $X$ a team over $\mathfrak{M}$ such that $\dom\left(X\right)\supseteq\FV\left(\varphi\right)$. If $\varphi$ is a flat formula, then
	\[
		\mathfrak{M}\vDash_{X}\F\varphi\implies\mathfrak{M}\vDash_{X}\varphi^{\f}.
	\]
\end{property}

\begin{proof}
	Suppose that $\varphi$ is flat, then $\mathfrak{M}\vDash_{X}\F\varphi$ implies by \cref{pty:flattening-operator-properties:flat-implies-F-phi-equiv-phi} that $\mathfrak{M}\vDash_{X}\varphi$, which implies by the \hyperref[ax:entailment-axiom]{entailment axiom} that $\mathfrak{M}\vDash_{X}\varphi^{\f}$.
\end{proof}

\begin{remark}
	Caution: the reciprocal of \cref{pty:F-implies-fo} is false. Consider the $\ANON$-sentence $\varphi\coloneqq\forall x\exists y\left(x\anon y\right)$ and the model $\mathfrak{M}$ with domain $M=\left\{a\right\}$ consisting in only one element $a$. Then $\varphi^{\f}\equiv\exists y.\top$, and so $\mathfrak{M}\vDashteam\varphi^{\f}$. But $\mathfrak{M}\nvDashteam\varphi$, and as $\varphi$ is a sentence, then $\F\varphi\equiv\varphi$, and so $\mathfrak{M}\nvDashteam\F\varphi$. Hence $\varphi^{\f}\nvDashteam\F\varphi$.
	
	As a consequence, $\F\varphi$ needs not be logically equivalent to $\varphi^{\f}$, even for $\varphi$ that are already flat.
\end{remark}

\subsection{Basic properties of the flattening operator}

We start by stating some simple facts about the application of the flattening operator to the most well-known atoms of dependency. The proofs are straightforward and are thus omitted.

\begin{proposition}\label{prp:flattening-operator-atoms}
	\leavevmode
	\begin{multicols}{2}
		\begin{enumerate}
			\item\label{prp:flattening-operator-atoms:lit} for first-order literals $\alpha$: $\F\alpha\equiv\alpha$,
			
			\item\label{prp:flattening-operator-atoms:dep} for dependency atoms: $\F(\dep(\tuple{t}_{1},t_{2}))\equiv\top$,
			
			\item\label{prp:flattening-operator-atoms:ndep} for anonymity atoms: $\F(\tuple{t}_{1}\anon t_{2})\equiv\bot$,
			
			\item\label{prp:flattening-operator-atoms:incl} for inclusion atoms: $\F(\tuple{t}_{1}\incl\tuple{t}_{2})\equiv\tuple{t}_{1}\syntacticeq\tuple{t}_{2}$,
			
			\item\label{prp:flattening-operator-atoms:excl} for exclusion atoms: $\F(\tuple{t}_{1}\excl\tuple{t}_{2})\equiv\tuple{t}_{1}\syntacticneq\tuple{t}_{2}$,
			
			\item\label{prp:flattening-operator-atoms:ind} for independency atoms: $\F(\tuple{t}_{2}\ind_{\tuple{t}_{1}}\tuple{t}_{3})\equiv\top$.
		\end{enumerate}
	\end{multicols}
\end{proposition}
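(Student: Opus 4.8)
The plan is to unwind the semantics of $\F$ directly. Since $\mathfrak{M}\vDash_X\F\varphi$ holds precisely when $\mathfrak{M}\vDash_{\{s\}}\varphi$ for every $s\in X$, it suffices, for each atom, to determine exactly which singleton teams $\{s\}$ satisfy the underlying atom and then to assert that condition pointwise across $X$. The mechanism driving every case is identical: the semantic clauses of the team-based atoms quantify over assignments $s,s'$ (and, for inclusion and independence, over a witness $s'$ or $s''$) ranging over the team, and when the team is the singleton $\{s\}$ all these quantified assignments are forced to equal $s$, so each clause collapses to a condition on the single assignment $s$.

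Case \ref{prp:flattening-operator-atoms:lit} is immediate, since first-order literals are flat (clause \hyperref[def:TS-lit]{TS-lit}), whence $\F\alpha\equiv\alpha$. For the dependence atom $\dep(\tuple{t}_1,t_2)$, its defining implication becomes trivially true once $s'=s$, so every singleton satisfies it and $\F(\dep(\tuple{t}_1,t_2))$ holds on every team, \ie{} equals $\top$. The independence atom is handled the same way: taking $s''=s$ witnesses the defining condition on $\{s\}$, so singletons always satisfy it and $\F(\tuple{t}_2\ind_{\tuple{t}_1}\tuple{t}_3)\equiv\top$.

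The remaining three cases yield nontrivial flat formulas via the same collapse. For inclusion, $\mathfrak{M}\vDash_{\{s\}}\tuple{t}_1\incl\tuple{t}_2$ forces the witness to be $s$, so the clause reduces to $s(\tuple{t}_1)=s(\tuple{t}_2)$; asserting this for all $s\in X$ is exactly the team semantics of $\tuple{t}_1\syntacticeq\tuple{t}_2$. For exclusion, the requirement that $s(\tuple{t}_1)\neq s'(\tuple{t}_2)$ for all $s,s'$ in the team becomes $s(\tuple{t}_1)\neq s(\tuple{t}_2)$ once $s'=s$, giving $\F(\tuple{t}_1\excl\tuple{t}_2)\equiv\tuple{t}_1\syntacticneq\tuple{t}_2$. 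For anonymity, the clause demands some $s'$ in the team with $s(\tuple{t}_1)=s'(\tuple{t}_1)$ and $s(t_2)\neq s'(t_2)$; in a singleton the only candidate $s'=s$ fails the disequality, so \emph{no} singleton team satisfies $\tuple{t}_1\anon t_2$, whence $\F(\tuple{t}_1\anon t_2)$ holds only on the empty team, \ie{} equals $\bot$.

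The only delicate points — and the closest thing to an obstacle — are the boundary conventions rather than the collapse argument itself. First, one must fix the team-semantic readings of $\top$ and $\bot$ (every team satisfies $\top$, only the empty team satisfies $\bot$) and treat the empty team separately, where $\F\varphi$ is vacuously satisfied and so agrees with both $\top$ and $\bot$. Second, for inclusion and exclusion the conclusions are stated for tuples, so $\tuple{t}_1\syntacticeq\tuple{t}_2$ abbreviates componentwise equality and $\tuple{t}_1\syntacticneq\tuple{t}_2$ its negation; one should check that these first-order formulas are flat, so that holding at every $s\in X$ really does coincide with their team semantics. Composite terms cause no trouble here, because the atomic clauses already evaluate each $\tuple{t}_i$ under the single assignment $s$.
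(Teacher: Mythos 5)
Your proposal is correct and is precisely the ``straightforward'' argument the paper has in mind when it omits the proof: unwinding the semantics of $\F$ on singleton teams, where the quantified assignments in each atom's clause collapse to the single assignment $s$. Your extra care about the empty team, the team-semantic readings of $\top$ and $\bot$, and the flatness of the first-order formulas $\tuple{t}_1\syntacticeq\tuple{t}_2$ and $\tuple{t}_1\syntacticneq\tuple{t}_2$ fills in exactly the boundary conventions the paper leaves implicit, so there is nothing missing.
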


We now state simple properties about the distributivity of the flattening operator with respect to the Boolean connectives. 
The proof of the next proposition is a routine application of the semantics.
\begin{proposition}
For any team-based formulas $\varphi,\psi$ it holds that
\[
	\F(\varphi\land\psi)\equiv\F\varphi\land\F\psi.
\]
If, $\varphi,\psi$ are both satisfied by the empty team, then additionally
\[
	\F(\varphi\lor\psi)\equiv\F\varphi\lor\F\psi.
\]
\end{proposition}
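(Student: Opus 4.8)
The plan is to prove the two claimed equivalences directly from the semantics of the flattening operator, in each case reducing the team condition to a statement about singletons and then comparing it with the right-hand side.

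First I would handle the conjunction case, which requires no hypothesis. Unfolding the semantics of $\F$, I have $\mathfrak{M}\vDash_{X}\F(\varphi\land\psi)$ iff for all $s\in X$, $\mathfrak{M}\vDash_{\{s\}}\varphi\land\psi$. By the clause \nameref{def:TS-and} for conjunction in team semantics, $\mathfrak{M}\vDash_{\{s\}}\varphi\land\psi$ holds iff both $\mathfrak{M}\vDash_{\{s\}}\varphi$ and $\mathfrak{M}\vDash_{\{s\}}\psi$. Since the quantifier ``for all $s\in X$'' distributes over conjunction, this is equivalent to: (for all $s\in X$, $\mathfrak{M}\vDash_{\{s\}}\varphi$) and (for all $s\in X$, $\mathfrak{M}\vDash_{\{s\}}\psi$), which by the semantics of $\F$ is exactly $\mathfrak{M}\vDash_{X}\F\varphi$ and $\mathfrak{M}\vDash_{X}\F\psi$, i.e.\ $\mathfrak{M}\vDash_{X}\F\varphi\land\F\psi$. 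This chain of equivalences is unconditional.

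Next I would treat the disjunction case, where the empty-team hypothesis becomes essential. The subtlety is that $\mathfrak{M}\vDash_{\{s\}}\varphi\lor\psi$ means, by clause \nameref{def:TS-or}, that there exist $Y,Z$ with $\{s\}=Y\cup Z$, $\mathfrak{M}\vDash_{Y}\varphi$, and $\mathfrak{M}\vDash_{Z}\psi$. On a singleton, the only decompositions into $Y\cup Z$ are: $Y=Z=\{s\}$; $Y=\{s\},Z=\varnothing$; and $Y=\varnothing,Z=\{s\}$. If both $\varphi$ and $\psi$ are satisfied by the empty team, then the cases with an empty part reduce to ``$\mathfrak{M}\vDash_{\{s\}}\varphi$'' or ``$\mathfrak{M}\vDash_{\{s\}}\psi$'' respectively, so in all cases $\mathfrak{M}\vDash_{\{s\}}\varphi\lor\psi$ is equivalent to $\mathfrak{M}\vDash_{\{s\}}\varphi$ or $\mathfrak{M}\vDash_{\{s\}}\psi$. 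Thus $\mathfrak{M}\vDash_{X}\F(\varphi\lor\psi)$ iff for every $s\in X$ at least one of $\varphi,\psi$ holds on $\{s\}$.

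The main obstacle, and the reason the two sides can differ, lies in comparing this with $\F\varphi\lor\F\psi$. Unfolding, $\mathfrak{M}\vDash_{X}\F\varphi\lor\F\psi$ means $X=Y\cup Z$ with $\mathfrak{M}\vDash_{Y}\F\varphi$ and $\mathfrak{M}\vDash_{Z}\F\psi$, i.e.\ every $s\in Y$ satisfies $\varphi$ and every $s\in Z$ satisfies $\psi$; since $X=Y\cup Z$, this again says every $s\in X$ satisfies $\varphi$ or $\psi$. To close the argument I would show both directions explicitly: the forward direction takes the partition $Y=\{s\in X\mid \mathfrak{M}\vDash_{\{s\}}\varphi\}$ and $Z=X\setminus Y$ (using the empty-team property to guarantee $\mathfrak{M}\vDash_{Y}\F\varphi$ and $\mathfrak{M}\vDash_{Z}\F\psi$ hold even when $Y$ or $Z$ is empty), and the backward direction simply notes that each $s\in X$ lies in $Y$ or in $Z$. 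The empty-team hypothesis is exactly what is needed so that the singleton-disjunction collapses to a plain Boolean disjunction and so that the witnessing subteams $Y,Z$ can be taken to be any partition of $X$; without it a singleton satisfying $\varphi\lor\psi$ might require the genuinely split decomposition $Y=Z=\{s\}$, breaking the equivalence.
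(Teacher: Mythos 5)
Your proof is correct and is exactly the routine semantic unfolding that the paper leaves implicit (it gives no proof beyond calling it \quoexpr{a routine application of the semantics}): conjunction commutes with the universal quantification over singletons unconditionally, and the empty-team hypothesis enters through the singleton-level analysis of $\lor$, just as you argue. One small correction: the hypothesis is \emph{not} needed to guarantee $\mathfrak{M}\vDash_{Y}\F\varphi$ and $\mathfrak{M}\vDash_{Z}\F\psi$ when $Y$ or $Z$ is empty --- $\F\chi$ holds vacuously on the empty team for \emph{any} $\chi$ --- so its only genuine use is in the backward direction of your singleton collapse, where the decomposition $\left\{s\right\}=\left\{s\right\}\cup\varnothing$ requires the empty part to satisfy the other disjunct.
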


When the logic does not have the empty team property, this proposition does not hold. A counter-example is provided by the formula $\NE\vee\NE$, where $\NE$ is satisfied by a team if the team is non-empty.

We continue by studying commutativity of the flattening operator with the  quantifiers.

\begin{definition}
	A formula $\varphi$ is called \defexpr{downwards flat} (DF) if
	\[
		\mathfrak{M}\vDash_{X}\varphi\implies\forall s\in X,\mathfrak{M}\vDash_{\left\{s\right\}}\varphi
	\]
	for all models $\mathfrak{M}$ and for all teams $X$ over $\mathfrak{M}$ such that $\dom(X)\supseteq\FV(\varphi)$. Similarly, $\varphi$ is called \defexpr{upwards flat} (UF) if
	\[
		(\forall s\in X,\mathfrak{M}\vDash_{\left\{s\right\}}\varphi)\implies\mathfrak{M}\vDash_{X}\varphi
	\]
	for all models $\mathfrak{M}$ and for all teams $X$ over $\mathfrak{M}$ such that $\dom(X)\supseteq\FV(\varphi)$.
\end{definition}

We sum-up the relations between the different flatness notions we introduced so far in the \cref{fig:sum-up-flatness-notions}.

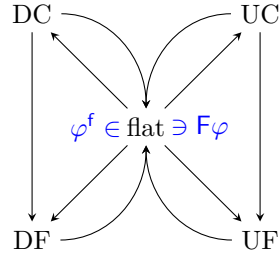
\begin{figure}
	\centering
	\begin{tikzpicture}
		\draw (0,0)   node (flat) {flat};
		\draw (-1.5,1.5)  node (dc)   {DC};
		\draw (1.5,1.5)   node (uc)   {UC};
		\draw (-1.5,-1.5) node (df)   {DF};
		\draw (1.5,-1.5)  node (uf)   {UF};
		
		\draw [->] (flat) -- (dc);
		\draw [->] (flat) -- (uc);
		\draw [->] (flat) -- (df);
		\draw [->] (flat) -- (uf);
		
		\draw [->] (dc) -- (df);
		\draw [->] (uc) -- (uf);
		
		\draw [->] (df) to [out=0,in=270] (flat);
		\draw [->] (uf) to [out=180,in=270] (flat);
		
		\draw [->] (dc) to [out=0,in=90] (flat);
		\draw [->] (uc) to [out=180,in=90] (flat);
		\draw [blue] (-0.2,0) node [left] {$\varphi^{\f}\in$};
		\draw [blue] (0.2,0) node [right] {$\ni\F\varphi$};
	\end{tikzpicture}
	\caption{Relation between all flatness notions}
	\label{fig:sum-up-flatness-notions}
\end{figure}

One can easily remark that the property of downwards closure is strictly  stronger assumption than downwards flatness. However, it will be a sufficient assumption for the following propositions.

\begin{proposition}
	\begin{enumerate}
		\item If
		$\exists x\varphi$ is downwards flat, then $\F(\exists x\varphi)\vDash\exists x\F\varphi$.
		
		\item If
		$\exists x\varphi$ is upwards flat, then $\exists x\F\varphi\vDash\F(\exists x\varphi)$.
	\end{enumerate}
\end{proposition}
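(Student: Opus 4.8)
The plan is to unfold the semantics of $\F$ and of the existential quantifier on each side and reduce both entailments to a comparison between a \emph{set} of witnesses and \emph{single} witnesses, assignment by assignment. Throughout I fix a model $\mathfrak{M}$ and a team $X$ with $\dom(X)\supseteq\FV(\exists x\varphi)=\FV(\F(\exists x\varphi))$, and I use repeatedly that, by definition of the flattening operator, $\mathfrak{M}\vDash_{X}\F\psi$ holds exactly when $\mathfrak{M}\vDash_{\{s\}}\psi$ for every $s\in X$.

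For the first item I would start from $\mathfrak{M}\vDash_{X}\F(\exists x\varphi)$. By the semantics of $\F$ this says $\mathfrak{M}\vDash_{\{s\}}\exists x\varphi$ for every $s\in X$, so the semantics of $\exists$ yields, for each $s$, a nonempty set $H(s)\subseteq M$ with $\mathfrak{M}\vDash_{\{\,s[m/x]:m\in H(s)\,\}}\varphi$. Collecting these gives a supplementation function $H\colon X\to\powersetstar(M)$ with $X[H/x]=\bigcup_{s\in X}\{\,s[m/x]:m\in H(s)\,\}$. To obtain $\mathfrak{M}\vDash_{X}\exists x\F\varphi$ it then suffices, by the semantics of $\exists$ and of $\F$, to show that every singleton of $X[H/x]$ satisfies $\varphi$; and this is exactly where the downward direction of the flatness hypothesis does its work, since each such singleton $\{\,s[m/x]\,\}$ sits inside the witnessing team $\{\,s[m'/x]:m'\in H(s)\,\}$, which satisfies $\varphi$.

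For the second item I would run the same dictionary backwards. From $\mathfrak{M}\vDash_{X}\exists x\F\varphi$ I get a supplementation $H$ such that all singletons of $X[H/x]$ satisfy $\varphi$; in particular, for each fixed $s\in X$, every singleton of the subteam $\{\,s[m/x]:m\in H(s)\,\}$ satisfies $\varphi$. The upward direction of the flatness hypothesis then upgrades this to $\mathfrak{M}\vDash_{\{\,s[m/x]:m\in H(s)\,\}}\varphi$, which is precisely a witness for $\mathfrak{M}\vDash_{\{s\}}\exists x\varphi$. As this holds for every $s\in X$, the semantics of $\F$ gives $\mathfrak{M}\vDash_{X}\F(\exists x\varphi)$, as required.

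The bookkeeping with supplementation functions and the back-and-forth translation between the $\F$-clause and the $\exists$-clause are routine and I expect them to go through mechanically. The main obstacle — and the only place where the hypotheses are actually consumed — is the passage between \enquote{$\varphi$ holds of the whole supplemented witness team $\{\,s[m/x]:m\in H(s)\,\}$} and \enquote{$\varphi$ holds of each of its singletons}: the first item needs the downward implication of this passage and the second needs the upward one. The delicate point I would be careful about is that the relevant flatness must be applied to the matrix $\varphi$ on each witness subteam, not to $\exists x\varphi$ on $X$, since it is the single-witness-versus-set-witness gap for $\varphi$ that the argument must close; checking that this is the correct and sole use of the assumption is the crux.
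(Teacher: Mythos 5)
Your argument is, step for step, the paper's own proof: unfold the semantics of $\F$, glue the per-singleton supplementation functions into a single $H\colon X\to\powersetstar(M)$, and pass between satisfaction of $\varphi$ on each witness block $\{s[m/x] \mid m\in H(s)\}$ and on its singletons---downwards for the first entailment, upwards for the second. The bookkeeping is exactly right, so as a comparison of routes there is nothing to separate the two.

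However, the \enquote{crux} you flagged at the end deserves a direct answer, because it is a genuine issue and not a routine check. The step you rely on (and the paper relies on at the line it labels \enquote{(downwards flatness)}) is downwards flatness of the \emph{matrix} $\varphi$, whereas the stated hypothesis is downwards flatness of $\exists x\varphi$, and the latter does not imply the former. Take $\varphi$ to be the non-constancy atom $\anon x$: over models with at least two elements, $\exists x(\anon x)$ is satisfied by every team, hence trivially downwards flat, and $\F(\exists x(\anon x))\equiv\top$; yet every singleton team falsifies $\anon x$, so $\F(\anon x)$ fails on every non-empty team and $\exists x\F(\anon x)$ fails whenever $X\neq\varnothing$. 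So the first entailment is false under the literal hypothesis; it becomes true, by exactly the argument you give, if the hypothesis is read as \enquote{$\varphi$ is downwards flat}, which is strictly stronger (downwards flatness of $\varphi$ implies that of $\exists x\varphi$ via singleton witnesses, but not conversely). Your proof thus inherits the paper's imprecision rather than introducing a new gap---but your stated hope that applying flatness to the matrix \enquote{is the correct and sole use of the assumption} cannot be verified as the statement is written; the hypothesis must be strengthened. One further remark on your second item: no flatness is needed there at all. Given $H$ with every singleton of $X[H/x]$ satisfying $\varphi$, pick for each $s\in X$ a single $m\in H(s)$; the singleton supplementation $\{s[m/x]\}$ already witnesses $\mathfrak{M}\vDash_{\{s\}}\exists x\varphi$, so $\mathfrak{M}\vDash_{X}\F(\exists x\varphi)$ follows unconditionally. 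Your detour through upwards flatness of the witness block is valid under a matrix-level assumption, but the hypothesis can simply be dropped.
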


\begin{proof} We consider the proof of the first claim.
	Let $\mathfrak{M}$ be a model, and let $X$ be a team over $\mathfrak{M}$ such that $\dom(X)\supseteq\FV(\varphi)\cup\FV(\psi)$. Then
	
	\begin{align*}
		&\mathfrak{M}\vDash_{X}\F(\exists x\varphi)\\
		\implies&\forall s\in X,\mathfrak{M}\vDash_{\left\{s\right\}}\exists x\varphi\\
		\implies&\forall s\in X,\exists H:\left\{s\right\}\to\powersetstar(M),\mathfrak{M}\vDash_{\left\{s\right\}\left[H\replaces x\right]}\varphi&\text{(semantics of $\exists$)}\\
		\implies&\exists H':X\to\powersetstar(M),\forall s\in X,\mathfrak{M}\vDash_{\left\{s\right\}\left[H'\replaces x\right]}\varphi\\
		\implies&\exists H':X\to\powersetstar(M),\forall s\in X,\forall s'\in\left\{s\right\}\left[H'\replaces x\right],\mathfrak{M}\vDash_{\left\{s'\right\}}\varphi&\text{(downwards flatness)}\\
		\implies&\exists H':X\to\powersetstar(M),\forall s\in X\left[H'\replaces x\right],\mathfrak{M}\vDash_{\left\{s\right\}}\varphi\\
		\implies&\exists H':X\to\powersetstar(M),\mathfrak{M}\vDash_{X\left[H'\replaces x\right]}\F\varphi\\
		\implies&\mathfrak{M}\vDash_{X}\exists x(\F\varphi)
	\end{align*}
\end{proof}

\begin{proposition}
	\begin{enumerate}
		\item if %
		$\forall x\varphi$ is downwards flat, then $\F(\forall x\varphi)\vDash\forall x(\F\varphi)$,
		
		\item if %
		$\forall x\varphi$ is upwards flat, then $\forall x(\F\varphi)\vDash\F(\forall x\varphi)$.
	\end{enumerate}
\end{proposition}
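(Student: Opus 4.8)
The plan is to prove both parts exactly as in the existential proposition above, running a chain of implications that unwinds the semantics of $\F$ and of $\forall$, but with the canonical duplication $X[M/x]$ playing the role that the supplementation $X[H/x]$ played there. A pleasant simplification in the universal case is that the duplicating set $M$ is fixed and does not depend on the assignment, so there is no need to uniformize a family of choice functions $H\colon\{s\}\to\powersetstar(M)$ into a single $H'\colon X\to\powersetstar(M)$; the two halves therefore become near-mechanical.

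For the first claim, I would start from $\mathfrak{M}\vDash_{X}\F(\forall x\varphi)$ and unwind: by the semantics of $\F$ this says $\mathfrak{M}\vDash_{\{s\}}\forall x\varphi$ for every $s\in X$, and by the semantics of $\forall$ this is $\mathfrak{M}\vDash_{\{s\}[M/x]}\varphi$ for every $s\in X$. The key step is then to invoke downwards flatness to split each duplicated fibre $\{s\}[M/x]$ into its singletons, yielding $\mathfrak{M}\vDash_{\{s'\}}\varphi$ for every $s'\in\{s\}[M/x]$. Reindexing through the identity $X[M/x]=\bigcup_{s\in X}\{s\}[M/x]$ turns this into $\mathfrak{M}\vDash_{\{s'\}}\varphi$ for every $s'\in X[M/x]$, which is $\mathfrak{M}\vDash_{X[M/x]}\F\varphi$ by the semantics of $\F$, and hence $\mathfrak{M}\vDash_{X}\forall x(\F\varphi)$ by the semantics of $\forall$.

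For the second claim I would run the same chain backwards, gathering instead of splitting. Starting from $\mathfrak{M}\vDash_{X}\forall x(\F\varphi)$, the semantics of $\forall$ and of $\F$ give $\mathfrak{M}\vDash_{\{s'\}}\varphi$ for every $s'\in X[M/x]$; regrouping by fibre with the same union identity, and then applying upwards flatness to gather the singletons of each fibre $\{s\}[M/x]$, produces $\mathfrak{M}\vDash_{\{s\}[M/x]}\varphi$ for every $s\in X$, which folds back to $\mathfrak{M}\vDash_{\{s\}}\forall x\varphi$ for every $s\in X$ and thence to $\mathfrak{M}\vDash_{X}\F(\forall x\varphi)$.

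The only step that is not pure bookkeeping is the flatness step: it is precisely downwards (resp.\ upwards) flatness that licenses moving between a duplicated fibre $\{s\}[M/x]$ and its constituent singletons, and this is where the hypothesis is consumed. The point to handle with care is that the flatness must be applied to the matrix $\varphi$ on these duplicated teams — which is exactly what the stronger but convenient assumption of downwards closure supplies, as noted just before the previous proposition — together with the observation that the duplication distributes over the team as $X[M/x]=\bigcup_{s\in X}\{s\}[M/x]$, so that quantifying over $s\in X$ and then over $s'$ in the fibre is the same as quantifying over $s'\in X[M/x]$.
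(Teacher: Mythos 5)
Your chains are, step for step, the proof the paper gives: for the first claim the paper runs exactly your sequence (unfold $\F$, unfold $\forall$, split each fibre by downwards flatness, regroup via $X[M/x]=\bigcup_{s\in X}\{s\}[M/x]$, refold), and for the second claim the paper only remarks that the proof is ``similar'', which is precisely your reversed chain with upwards flatness gathering instead of splitting.

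One point deserves to be made sharper than either you or the paper make it. The flatness step --- in your proof and in the paper's --- is applied to the \emph{matrix} $\varphi$ on the duplicated fibres $\{s\}[M/x]$, whereas the stated hypothesis is that the quantified formula $\forall x\varphi$ is downwards (resp.\ upwards) flat. These are not interchangeable, and the hypothesis as literally stated does not license the step: take $\varphi$ to be the non-constancy atom $\anon x$ on a model with at least two elements. Then $\forall x(\anon x)$ is satisfied by every team, hence is trivially downwards flat, and $\F(\forall x(\anon x))$ is satisfied by every team; but $\F(\anon x)$ is satisfied only by the empty team (no singleton satisfies $\anon x$), so $\forall x\,\F(\anon x)$ fails on every nonempty team and the first entailment fails. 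Dually, $\varphi=\dep(x)$ refutes the second claim as literally stated: $\forall x\dep(x)$ is upwards flat, $\forall x\,\F(\dep(x))\equiv\top$, yet $\F(\forall x\dep(x))$ fails on nonempty teams when the model has at least two elements. So the hypothesis that is actually consumed is downwards (resp.\ upwards) flatness of $\varphi$ itself --- or downwards closure of $\varphi$, which suffices for the first claim. You do flag this (``the flatness must be applied to the matrix $\varphi$''), which is more explicit than the paper, but your resolution by appeal to the ``stronger but convenient assumption of downwards closure'' must likewise be closure of $\varphi$, not of $\forall x\varphi$; with the hypothesis read that way, both of your chains are complete and correct, and you inherit rather than introduce the mismatch.
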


\begin{proof} Again we consider the proof of the first claim only as the proof is similar in the second case.
	Let $\mathfrak{M}$ be a model, and let $X$ be a team over $\mathfrak{M}$ such that $\dom(X)\supseteq\FV(\varphi)\cup\FV(\psi)$. Then
	
	\begin{align*}
		&\mathfrak{M}\vDash_{X}\F(\forall x\varphi)\\
		\implies&\forall s\in X,\mathfrak{M}\vDash_{\left\{s\right\}}\forall x\varphi\\
		\implies&\forall s\in X,\mathfrak{M}\vDash_{\left\{s\right\}\left[M\replaces x\right]}\varphi\\
		\implies&\forall s\in X,\forall s'\in\left\{s\right\}\left[M\replaces x\right],\mathfrak{M}\vDash_{\left\{s'\right\}}\varphi&\text{(downwards flatness)}\\
		\implies&\forall s\in X\left[M\replaces x\right],\mathfrak{M}\vDash_{\left\{s\right\}}\varphi\\
		\implies&\mathfrak{M}\vDash_{X\left[M\replaces x\right]}\F\varphi\\
		\implies&\mathfrak{M}\vDash_{X}\forall x(\F\varphi)
	\end{align*}
\end{proof}

\subsection{Preservation of closure properties}

The proof of the next lemma is a routine application of the semantics.
\begin{lemma}\label{lem:preservation-closure-properties}
	\begin{enumerate}
		\item If $\varphi$ is downwards closed (resp. union closed), then $\F\varphi$ is also downwards closed (resp. union closed).
		\item If $\varphi$ is downwards flat (resp. upwards flat), then $\F\varphi$ is also downwards flat (resp. upwards flat).
	\end{enumerate}
\end{lemma}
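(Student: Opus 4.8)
The plan is to verify all four preservation statements directly from the semantics of the flattening operator, namely $\mathfrak{M}\vDash_{X}\F\varphi\iff\forall s\in X,\mathfrak{M}\vDash_{\{s\}}\varphi$. The key observation throughout is that the truth of $\F\varphi$ in a team $X$ is a \emph{universal} quantification over the singleton sub-teams of $X$, and universal quantification interacts cleanly with the sub-team, union, and singleton-restriction operations that define the four closure properties. Crucially, each argument reduces to manipulating the quantifier $\forall s\in X$ and uses \emph{nothing} about the internal structure of $\varphi$ beyond the single hypothesis assumed in each case. I expect every step to be a routine set-theoretic manipulation; the only mild subtlety is that two of the four cases (the ``upwards'' directions) require invoking the assumed closure property of $\varphi$ on singleton teams, whereas the others follow from the shape of $\F$ alone.

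First I would treat downwards closure. Assume $\varphi$ is downwards closed and suppose $\mathfrak{M}\vDash_{X}\F\varphi$, so that $\mathfrak{M}\vDash_{\{s\}}\varphi$ for every $s\in X$. For any $Y\subseteq X$ we have $\{s\}\mid s\in Y$ contained in the corresponding collection for $X$, so $\mathfrak{M}\vDash_{\{s\}}\varphi$ holds for all $s\in Y$, which is exactly $\mathfrak{M}\vDash_{Y}\F\varphi$. In fact this argument uses only that $Y\subseteq X$ forces $s\in Y\Rightarrow s\in X$, so downward closure of $\F\varphi$ holds \emph{unconditionally}; the hypothesis on $\varphi$ is not even needed here. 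Next, for union closure, assume $\varphi$ is union closed and suppose $\mathfrak{M}\vDash_{X_i}\F\varphi$ for all $i\in I$. Then for each $i$ and each $s\in X_i$ we have $\mathfrak{M}\vDash_{\{s\}}\varphi$; since every $s\in\bigcup_{i\in I}X_i$ lies in some $X_i$, we obtain $\mathfrak{M}\vDash_{\{s\}}\varphi$ for all such $s$, which is precisely $\mathfrak{M}\vDash_{\bigcup_{i\in I}X_i}\F\varphi$. Again the structural hypothesis on $\varphi$ turns out to be unnecessary, since $\F\varphi$ is automatically flat by \cref{pty:flattening-operator-properties:F-phi-flat}, hence both downwards and union closed.

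For the second item I would unwind the definitions of downwards and upwards flatness applied to $\F\varphi$. For downwards flatness, I must show $\mathfrak{M}\vDash_{X}\F\varphi\implies\forall s\in X,\mathfrak{M}\vDash_{\{s\}}\F\varphi$; but $\mathfrak{M}\vDash_{X}\F\varphi$ means $\mathfrak{M}\vDash_{\{s\}}\varphi$ for all $s\in X$, and by idempotency (\cref{pty:flattening-operator-properties:indempotency}) together with the semantics of $\F$ applied to the singleton $\{s\}$, this gives $\mathfrak{M}\vDash_{\{s\}}\F\varphi$ for each $s\in X$, as required. The upwards-flatness case is dual: assuming $\mathfrak{M}\vDash_{\{s\}}\F\varphi$ for all $s\in X$, the semantics of $\F$ on each singleton yields $\mathfrak{M}\vDash_{\{s\}}\varphi$, whence $\mathfrak{M}\vDash_{X}\F\varphi$ directly by definition. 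Both directions therefore reduce to the identity $\mathfrak{M}\vDash_{\{s\}}\F\varphi\iff\mathfrak{M}\vDash_{\{s\}}\varphi$ on singletons, which is immediate from the semantics.

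The main point to get right is bookkeeping rather than any genuine obstacle: one must keep the universal quantifier $\forall s$ attached to the correct team at each step and remember that $\F\varphi$ agrees with $\varphi$ on singleton teams. Because $\F\varphi$ is itself flat by \cref{pty:flattening-operator-properties:F-phi-flat}, and flatness is equivalent to the conjunction of downwards and union closure (as remarked after the closure-properties definition), the first item is in some sense automatic; I would nonetheless spell out the two inclusions explicitly so that the lemma reads uniformly. Since no step involves more than substituting the definition of $\F$ and relabelling the bound assignment, the entire proof is a direct application of the semantics, consistent with the paper's remark that it is routine.
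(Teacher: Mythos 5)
Your proof is correct and is exactly the routine semantic verification that the paper alludes to when it omits the proof of \cref{lem:preservation-closure-properties}: each of the four claims reduces to manipulating the quantifier $\forall s\in X$ in the clause $\mathfrak{M}\vDash_{X}\F\varphi\iff\forall s\in X,\ \mathfrak{M}\vDash_{\{s\}}\varphi$, together with the fact that $\F\varphi$ and $\varphi$ agree on singleton teams. Your sharper observation is also accurate: since $\F\varphi$ is always flat by \cref{pty:flattening-operator-properties}, all four closure properties hold for $\F\varphi$ unconditionally, so the hypotheses on $\varphi$ in the lemma are never actually used.
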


\begin{corollary}\label{cor:preservation-closure-properties-logics}
	Let $\mathcal{L}$ be team-based logic extending $\FO$ by some dependencies and operators.
	\begin{enumerate}
		\item  If $\mathcal{L}$ is downwards closed (resp. union closed, resp. downwards flat, resp. upwards flat), then so is $\mathcal{L}(\F)$.
		\item If $\mathcal{L}$ is flat, then $\mathcal{L}\equiv\mathcal{L}(\F)$.
	\end{enumerate}
\end{corollary}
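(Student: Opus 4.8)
The plan is to handle both items by structural induction on an $\mathcal{L}(\F)$-formula, the point being that $\mathcal{L}(\F)$ differs from $\mathcal{L}$ only by the single extra formation rule $\F$, so the induction is the one for $\mathcal{L}$ augmented by exactly one clause.

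For the first item, I would fix one of the four properties $P$ (downward closure, union closure, downward flatness, upward flatness) and assume every $\mathcal{L}$-formula enjoys $P$, then show by induction that every $\mathcal{L}(\F)$-formula enjoys $P$. The base case is immediate, since the atoms of $\mathcal{L}(\F)$ are exactly those of $\mathcal{L}$ and hence already have $P$. For a Boolean connective or a quantifier I would appeal to the standard team-semantic fact that $P$ is preserved by $\land,\lor,\exists,\forall$; this is precisely the preservation underlying $\mathcal{L}$ being closed under $P$, and since $\mathcal{L}(\F)$ interprets these operators in the same way, nothing changes. The only genuinely new case is $\psi=\F\chi$, where the induction hypothesis gives that $\chi$ has $P$ and \cref{lem:preservation-closure-properties} says exactly that $\F\chi$ then also has $P$. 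This closes the induction, so $\mathcal{L}(\F)$ has $P$.

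For the second item I would recall that flatness is equivalent to the conjunction of downward and union closure, so $\mathcal{L}$ flat means every $\mathcal{L}$-formula is flat. Since $\mathcal{L}\subseteq\mathcal{L}(\F)$ holds trivially, it remains to translate each $\mathcal{L}(\F)$-formula into an equivalent $\mathcal{L}$-formula, eliminating all occurrences of $\F$. I would prove by induction that every $\mathcal{L}(\F)$-formula $\psi$ admits an $\mathcal{L}$-formula $\psi'$ with $\psi\equiv\psi'$: the atomic, connective and quantifier cases follow from the induction hypothesis together with the fact that $\equiv$ is a congruence for these operators, while for $\psi=\F\chi$ the hypothesis supplies $\chi'\in\mathcal{L}$ with $\chi\equiv\chi'$; because $\F$ reads its argument only on singletons it respects $\equiv$, giving $\F\chi\equiv\F\chi'$, and since $\chi'$ lies in $\mathcal{L}$ it is flat, so \cref{pty:flattening-operator-properties:flat-implies-F-phi-equiv-phi} yields $\F\chi'\equiv\chi'$ and hence $\psi\equiv\chi'$. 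Taking $\psi'=\chi'$ finishes the induction, whence $\mathcal{L}\equiv\mathcal{L}(\F)$.

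The argument is routine once \cref{lem:preservation-closure-properties} is available, and I expect no deep obstacle; the two places deserving attention are, in the first item, the implicit use of preservation of each closure property under the first-order operators (where for downward and upward flatness the disjunction step tacitly needs the empty-team property, which the logics considered here do satisfy), and, in the second item, checking that $\equiv$ is genuinely a congruence for $\F$ so that the innermost flattenings can be removed and the equivalence propagated outward.
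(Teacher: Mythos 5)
Your overall route is exactly the one the paper intends: the paper gives no written proof of this corollary (and calls \cref{lem:preservation-closure-properties} a routine application of the semantics), and the intended argument is precisely your structural induction, with \cref{lem:preservation-closure-properties} discharging the $\F$-case and standard preservation facts discharging the first-order cases. Your item 2 is fully correct: $\equiv$ is indeed a congruence for $\F$ (its semantics consults only satisfaction of the argument on singletons), so the inside-out elimination via \cref{pty:flattening-operator-properties:flat-implies-F-phi-equiv-phi} works, and the DC and UC cases of item 1 are solid under the lax semantics.

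There is, however, a genuine unflagged gap in item 1 for the downwards-flat and upwards-flat cases: you flag the disjunction step (correctly, via the empty team property), but the \emph{quantifier} steps are not ``standard facts'' and in fact fail for abstract team properties. Downward flatness is not preserved by $\forall$: let $\psi$ define the team property $\left|Y\right|\neq 2$, which is downwards flat since every singleton trivially satisfies it; over a domain $M$ of size $2$, take $X$ consisting of two assignments differing on a variable other than $v$. Then $X\left[M\replaces v\right]$ has four elements, so $\mathfrak{M}\vDash_{X}\forall v\psi$, while each fiber $\left\{s\right\}\left[M\replaces v\right]$ has exactly two elements, so $\mathfrak{M}\nvDash_{\left\{s\right\}}\forall v\psi$; hence $\forall v\psi$ is not downwards flat. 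Dually, upward flatness is not preserved by $\exists$ (nor $\forall$): with $\psi$ defining $\left|Y\right|=2$ or $Y=\varnothing$ (upwards flat vacuously), every singleton satisfies $\exists v\psi$ when $\left|M\right|\geq 2$, but a three-element team with assignments differing outside $v$ does not. So in your induction the DF/UF hypothesis on the subformula is genuinely insufficient at the quantifier steps. Note that your argument (and the corollary's interesting content) survives for the logics actually treated in the paper, where downward flatness always arrives via downward closure and upward flatness via union closure plus the empty team property: run the induction with DC (resp.\ UC and ETP), which $\F\varphi$ also enjoys since it is flat, and derive DF (resp.\ UF) at the end. But as stated, your appeal to preservation of DF and UF under $\exists$ and $\forall$ is a step that would fail.
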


\section{The expressive power of the flattening operator}
\label{sec:expressive-power-of-F}

We ask now, does the flattening operator, when added to a logic, increase the expressive power of the logic? In \cite{Galliani:2018} the flattening operator $\F$ was said to be \emph{safe} for a team-based logic $\mathcal{L}$ if  %
$\mathcal{L}\equiv\mathcal{L}(\F)$. 
The idea is that the operator $\F$ can be ``safely'' added to the logic, as it does not increase its expressive power. In this section we systematically study the effect of $\F$ to the expressive power of various team-based logics. %

\subsection{Adding \texorpdfstring{$\F$}{F} to inclusion and anonymity logic}
\label{subsec:safeness-of-F-for-incl}

The goal of this subsection is to show that inclusion logic together with the flattening operator $\FO\left(\incl,\F\right)$ is equivalent to the positive fragment of  greatest fixed-point logic $\GFPpos$. From this it follows that $\FO\left(\incl,\F\right)$ has exactly the same expressive power as $\FOincl$ by the results of \cite{Galliani:2013:inclusion}.%

\begin{theorem}\label{thm:GFPpos-expresses-FOinclF}
	For every $\FO\left(\incl,\F\right)$-formula $\varphi(\tuple{x})$ with free variables in $\tuple{x}=(x_{1},\dots,x_{n})$ there is a $\GFPpos$-formula $\varphi^{*}=\varphi^{*}(R,\tuple{x})$ such that $\ar(R)=n$, $R$ occurs only positively in $\varphi^{*}$, and the condition
	\[
		\mathfrak{M}\vDash_{X}\varphi(\tuple{x})\quad\text{iff}\quad\left<\mathfrak{M},R\coloneqq X(\tuple{x})\right>\vDash_{s}\varphi^{*}(R,\tuple{x})\text{ for all }s\in X
	\]
	holds for all models $\mathfrak{M}$ and all teams $X$ over $\mathfrak{M}$ with $\dom(X)\supseteq\left\{x_{1},\dots,x_{n}\right\}$.
\end{theorem}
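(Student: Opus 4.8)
The plan is to prove the statement by structural induction on the $\FO(\incl,\F)$-formula $\varphi$, constructing the translation $\varphi^{*}$ so that it simulates the team semantics of $\varphi$ via the relation $R$ encoding the team $X(\tuple{x})$. The guiding idea is that a team satisfies $\varphi$ exactly when the associated $\GFPpos$-formula holds at every assignment $s$ of the team, where $s$ ranges over the elements of $R$. Throughout, I maintain the invariant that $R$ occurs only positively in $\varphi^{*}$, which is essential both for the $\GFPpos$ machinery and for the final equivalence with $\FOincl$.

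First I would treat the base cases. For a first-order literal $\alpha$, I set $\varphi^{*}=\alpha$ (no occurrence of $R$), since by \cref{def:TS-lit} the team satisfies $\alpha$ iff every singleton does. For an inclusion atom $\tuple{t}_{1}\incl\tuple{t}_{2}$, the semantics ``for all $s$ there is $s'$ with $s(\tuple{t}_{1})=s'(\tuple{t}_{2})$'' translates naturally into a first-order formula asserting the existence of a witness tuple inside $R$; here $R$ appears, but only positively (under an existential quantifier encoding membership in the team). The flattening cases are where the operator $\F$ is handled: by the semantics of $\F$, $\mathfrak{M}\vDash_{X}\F\psi$ iff every singleton $\{s\}$ with $s\in X$ satisfies $\psi$, and since $\F\psi$ is flat (\cref{pty:flattening-operator-properties:F-phi-flat}), I can reduce $\F\psi$ to the flattening $\psi^{\f}$ evaluated pointwise, using \cref{prp:flattening-atoms} and the distributivity of the flattening to push $\F$ down to atoms. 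The key observation is that $\F$ strips away the team-sensitive behavior, so $(\F\psi)^{*}$ becomes an ordinary first-order (hence $R$-free) formula.

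The inductive cases for the connectives and quantifiers are the technical heart. For conjunction, I take $(\psi_{1}\land\psi_{2})^{*}=\psi_{1}^{*}\land\psi_{2}^{*}$, matching \cref{def:TS-and} directly. Disjunction is the subtle case: the team-semantic split $X=Y\cup Z$ must be simulated within $\GFPpos$, and the standard technique (following \cite{Galliani:2013:inclusion}) is to guess the splitting by a fixed-point or by new relation symbols and to express that each part satisfies the respective disjunct; care is needed so that $R$ still occurs positively. For the existential quantifier, the supplementation $X[H/v]$ is encoded by a greatest fixed point capturing the choice function $H$, which is precisely why $\GFPpos$ rather than plain first-order logic is needed; the universal quantifier is handled by the duplication $X[M/v]$. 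The main obstacle will be verifying that the fixed-point constructions for $\exists$ and $\lor$ preserve the positivity of $R$ while faithfully reproducing the lax team semantics — this is where the interplay between the greatest-fixed-point semantics and inclusion logic's union-closed character must be carefully exploited, and where I would lean most heavily on the correspondence established in \cite{Galliani:2013:inclusion}.

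Finally, I would close the induction by checking the stated equivalence
\[
	\mathfrak{M}\vDash_{X}\varphi(\tuple{x})\quad\text{iff}\quad\langle\mathfrak{M},R\coloneqq X(\tuple{x})\rangle\vDash_{s}\varphi^{*}(R,\tuple{x})\text{ for all }s\in X
\]
holds at each inductive step, confirming that $R$ occurs only positively and that $\ar(R)=n$. Combining this with the reverse direction — expressing any $\GFPpos$-formula inside $\FOincl$, available from \cite{Galliani:2013:inclusion} — yields that $\FO(\incl,\F)\equiv\GFPpos\equiv\FOincl$, so that $\F$ is safe for inclusion logic.
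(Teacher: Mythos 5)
Your treatment of the $\F$ case --- the only case in which this theorem goes beyond \cite{Galliani:2013:inclusion} --- contains a genuine error. You claim that $\F$ ``strips away the team-sensitive behavior,'' so that $(\F\psi)^{*}$ can be obtained by replacing $\F\psi$ with the flattening $\psi^{\f}$ evaluated pointwise, yielding an ordinary first-order, $R$-free formula. This is false. The semantics of $\F\psi$ says that every \emph{singleton} team $\{s\}$ satisfies $\psi$, and satisfaction of $\psi$ by a singleton team is still a fully team-semantic condition: the quantifiers in $\psi$ duplicate and supplement $\{s\}$ into larger teams in which inclusion atoms are nontrivial, so they cannot be replaced by $\top$. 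The paper itself stresses this in the remark following \cref{pty:F-implies-fo}: $\F\varphi$ need not be logically equivalent to $\varphi^{\f}$, even for flat $\varphi$ (note also that for a sentence $\psi$ one has $\F\psi\equiv\psi$, and inclusion logic sentences are not first-order in general). Worse, if your reduction were sound, the sentence of \cref{prp:non-connectedness-expressible-FOincl1F} would collapse to something expressible without $\F$, contradicting \cref{cor:F-unsafe-unary-inclusion-logic}; the only place where $\F\varphi\equiv\varphi^{\f}$ is established in the paper is \cref{existentialDEP}, for \emph{existential dependence logic} formulas, which relies on downwards closure and does not apply to inclusion logic.

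The paper's actual construction for $\varphi=\F\psi(\tuple{x})$ is
\[
	\varphi^{*}(R,\tuple{x})=\forall\tuple{y}\left(\neg R\tuple{y}\vee\theta(\tuple{x},\tuple{y})\right),
\]
where $\theta$ is the inductively obtained translation $\psi^{*}(S,\tuple{x})$ in which every atom $S(\tuple{t})$ is replaced by the equality $\tuple{y}=\tuple{t}$. That is, one does not flatten $\psi$; one \emph{relativizes its $\GFPpos$-translation to the singleton team}: for $s\in X$ the team relation associated with $\{s\}$ is the singleton $\{s(\tuple{x})\}$, and membership in a singleton is exactly an equality test, which is what the substitution $S(\tuple{t})\mapsto(\tuple{y}=\tuple{t})$ performs under the guard $R\tuple{y}$. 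The resulting $\theta$ is free of team predicates but may still contain fixed-point operators inherited from $\psi^{*}$, so $(\F\psi)^{*}$ is a $\GFPpos$-formula, not a first-order one; the substitution is legitimate because $S$ occurs only positively in $\psi^{*}$ by the induction hypothesis, and afterwards the guard is the only occurrence of $R$. Your remaining cases (literals, inclusion atoms, connectives, quantifiers, delegated to \cite{Galliani:2013:inclusion}) agree with the paper, but without this relativization idea the argument proves too much and the induction does not go through.
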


\begin{proof}
	The proof is by structural induction on $\varphi$. The cases for first-order literals, inclusion  atoms, connectives, and quantifiers are treated as in \cite[Theorem 15]{Galliani:2013:inclusion}. We show the inductive case corresponding to $\F$.
	
		 If $\varphi(\tuple{x})$ is of the form $\F\psi(\tuple{x})$, then we define %
		
		\[
			\varphi^{*}(R,\tuple{x})=\forall\tuple{y}(\neg R\tuple{y}\vee \theta(\tuple{x},\tuple{y}))
		\]
		where $\theta(\tuple{x},\tuple{y})=\psi^{*}(S,\tuple{x})\left[(\tuple{y}=\tuple{t})\replaces S(\tuple{t})\right]$, that is, $\theta$ is the formula $\psi^{*}(S,\tuple{x})$ in which we replaced each atom of the form $S(\tuple{t})$ by the atom $\tuple{y}=\tuple{t}$. Let $\mathfrak{M}$ be a model and $X$ be a team over $\mathfrak{M}$ with $\dom(X)\supseteq\left\{x_{1},\dots,x_{n}\right\}$. One can remark that
		
		\begin{align}	\notag\mathfrak{M}\vDash_{X}\F\psi(\tuple{x})
			\iff&\forall s\in X.&&\mathfrak{M}\vDash_{\left\{s\right\}}\psi(\tuple{x})&\text{}\\
			\notag\iff&\forall s\in X.\forall s'\in\left\{s\right\}.&&\left<\mathfrak{M},S\coloneqq\left\{s\right\}(\tuple{x})\right>\vDash_{s'}\psi^{*}(S,\tuple{x})&\text{(by IH on $\psi(\tuple{x})$)}\\
			\notag\iff&\forall s\in X.&&\left<\mathfrak{M},S\coloneqq\left\{s\right\}(\tuple{x})\right>\vDash_{s}\psi^{*}(S,\tuple{x})\\
			\notag\iff&\forall s\in X.&&\left<\mathfrak{M},S\coloneqq\left\{s(\tuple{x})\right\}\right>\vDash_{s}\psi^{*}(S,\tuple{x})\\
			\label[condition]{cnd:GFPpos-expresses-FOinclF:proof-case-F}\iff&\forall s\in X.&&\mathfrak{M}\vDash_{s}\psi^{*}(S,\tuple{x})\left[(s(\tuple{x})=\tuple{t})\replaces S(\tuple{t})\right]
		\end{align}

		The last equivalence is justified by the following observation: as $S$ is assigned to a relation with a unique element $s(\tuple{x})$, having an atom on the form $S(\tuple{t})$ is $\psi^{*}$ is equivalent to have the atom $s(\tuple{x})=\tuple{t}$ instead (we do the test \quoexpr{by hand}), so we perform this substitution in $\psi^{*}$ while preserving equivalence.
		
		Now, suppose $\mathfrak{M}\vDash_{X}\F\psi(\tuple{x})$, then \cref{cnd:GFPpos-expresses-FOinclF:proof-case-F} holds. We want to show
		\[
			\left<\mathfrak{M},R\coloneqq X(\tuple{x})\right>\vDash_{s}\forall\tuple{y}(R\tuple{y}\rightarrow\psi^{*}(S,\tuple{x})\left[(\tuple{y}=\tuple{t})\replaces S(\tuple{t})\right])\quad\text{for all }s\in X
		\]
		To do so, let $\tuple{b}\in M$ such that $\tuple{b}\in X(\tuple{x})$. Then there is an assignment $s\in X$ such that $\tuple{b}=s(\tuple{x})$, we fix such an assignment. And by \cref{cnd:GFPpos-expresses-FOinclF:proof-case-F} we obtain
		\[
			\mathfrak{M}\vDash_{s}\psi^{*}(S,\tuple{x})\left[(\tuple{b}=\tuple{t})\replaces S(\tuple{t})\right]
		\]
		
		Conversely, suppose that
		\[
			\left<\mathfrak{M},R\coloneqq X(\tuple{x})\right>\vDash_{s}\forall\tuple{y}(R\tuple{y}\rightarrow\psi^{*}(S,\tuple{x})\left[(\tuple{y}=\tuple{t})\replaces S(\tuple{t})\right])\quad\text{for all }s\in X
		\]
		So by taking $\tuple{y}\coloneqq s(x)$, as $s(\tuple{x})\in\left\{s(x)\setsep s\in X\right\}\defeq X(\tuple{x})$, then the condition \extexpr{$R\tuple{y}$} is satisfied, so we get
		\[
			\mathfrak{M}\vDash_{s}\psi^{*}(S,\tuple{x})\left[(s(\tuple{x})=\tuple{t})\replaces S(\tuple{t})\right]\quad\text{for all }s\in X
		\]
		and the result follows from \cref{cnd:GFPpos-expresses-FOinclF:proof-case-F}.
\end{proof}
\begin{corollary}\label{cor:F-safe-inclusion-logic}\label{cor:F-safe-anonymity-logic}
	\leavevmode
	\begin{enumerate}
		\item The logics $\FO\left(\incl,\F\right)$ and $\FOincl$ are equivalent for sentences.
		\item The logics $\FO\left(\anon,\F\right)$ and $\FOanon$ are equivalent for sentences.
	\end{enumerate}
\end{corollary}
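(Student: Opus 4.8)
The plan is to sandwich each of the two extended logics between the corresponding inclusion/anonymity logic and the positive greatest fixed-point logic $\GFPpos$, and then read off equi-expressivity for sentences. For the first item I would begin with the trivial inclusion $\FOincl\leq\FO\left(\incl,\F\right)$, which holds because the former is syntactically contained in the latter. The nontrivial direction is to show that every $\FO\left(\incl,\F\right)$-sentence is equivalent to a $\GFPpos$-sentence, which I would extract from \cref{thm:GFPpos-expresses-FOinclF} by specializing it to sentences. Finally I would invoke the Galliani--Hella result \cite{Galliani:2013:inclusion} that $\GFPpos$-sentences are captured by $\FOincl$, closing the loop
\[
	\FO\left(\incl,\F\right)\leq\GFPpos\leq\FOincl\leq\FO\left(\incl,\F\right),
\]
so that all three coincide on sentences.

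The technical heart is the specialization of \cref{thm:GFPpos-expresses-FOinclF} to a sentence $\varphi$. Here $\tuple{x}$ is empty, so $n=0$ and the relation symbol $R$ produced by the theorem is nullary, with $\ar(R)=0$. Evaluating truth of the sentence at the distinguished team $X=\left\{\varnothing\right\}$ (nonempty, since all models are nonempty), the encoding $X(\tuple{x})$ is the nullary relation $\left\{\langle\rangle\right\}$, i.e.\ the truth value \emph{true}; and the universal clause \extexpr{for all $s\in X$} ranges over the single assignment $\varnothing$. Thus the theorem reduces to
\[
	\mathfrak{M}\vDash\varphi\iff\left<\mathfrak{M},R\coloneqq\text{true}\right>\vDash\varphi^{*}(R),
\]
and substituting $\top$ for the (positively occurring) nullary $R$ turns $\varphi^{*}$ into an ordinary $\GFPpos$-sentence equivalent to $\varphi$. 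This gives $\FO\left(\incl,\F\right)\leq\GFPpos$ on sentences and completes the first item.

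For the second item I would re-use the very same argument, observing that \cref{thm:GFPpos-expresses-FOinclF} and its inductive proof are modular in the choice of atoms: the inductive steps for first-order literals, connectives, quantifiers and---crucially---the $\F$-operator never refer to the particular team atom, so they transfer verbatim to $\FOanon$. The only clause to supply is the base case for the anonymity atom, and this is immediate: with $R$ encoding the team, $\mathfrak{M}\vDash_{X}\tuple{x}\anon y$ holds exactly when, at every $s\in X$, the first-order (hence $\GFPpos$) condition \extexpr{some tuple of $R$ agrees with $s$ on $\tuple{x}$ but differs on $y$} holds, with $R$ occurring positively. This yields $\FO\left(\anon,\F\right)\leq\GFPpos$ on sentences, and I would then close the loop with the known characterization that $\FOanon$ captures $\GFPpos$ for sentences, i.e.\ $\GFPpos\leq\FOanon$ (see \cite{Vaananen:2022}), giving the chain $\FO\left(\anon,\F\right)\leq\GFPpos\leq\FOanon\leq\FO\left(\anon,\F\right)$. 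The main obstacle is precisely this last external input, i.e.\ establishing $\GFPpos\leq\FOanon$ (expressive completeness of anonymity logic for the positive greatest fixed-point fragment); the specialization to sentences and the adaptation of the translation are, by contrast, routine.
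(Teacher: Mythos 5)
Your proposal is correct, and for the first item it is essentially the paper's own argument: \cref{thm:GFPpos-expresses-FOinclF} plus the Galliani--Hella equivalence $\GFPpos\equiv\FOincl$ for sentences from \cite{Galliani:2013:inclusion}; your explicit specialization to $n=0$, $X=\left\{\varnothing\right\}$ and the substitution of $\top$ for the nullary $R$ is a correct (and slightly more careful) spelling-out of what the paper leaves implicit.

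For the second item you take a genuinely different route. The paper dispatches it in one line: inclusion atoms and anonymity atoms are interdefinable, so $\FO\left(\anon,\F\right)\equiv\FO\left(\incl,\F\right)$ and $\FOanon\equiv\FOincl$, and the second claim reduces to the first. You instead re-run the translation of \cref{thm:GFPpos-expresses-FOinclF} for anonymity logic, observing correctly that the inductive cases (including the $\F$-case, which is indeed atom-agnostic) transfer verbatim, and supplying the base case for $\tuple{x}\anon y$, where $R$ occurs positively as required. That base case is fine. The weak point you yourself flag --- needing $\GFPpos\leq\FOanon$ as an external input --- is not really an obstacle: that fact is standardly obtained from precisely the atom interdefinability the paper invokes, so once you grant it, your detour through a fresh translation is unnecessary. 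In short, the paper's reduction is strictly shorter; what your version buys is a direct $\GFPpos$-translation for $\FO\left(\anon,\F\right)$ formulas (not just sentences), which is mildly more informative but not needed for the corollary as stated.
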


\begin{proof}
	The first claim follows by \cref{thm:GFPpos-expresses-FOinclF}.

	The second claim follows from the fact inclusion atoms can be expressed using anonymity atoms and vice versa.
\end{proof}

\subsection{Adding \texorpdfstring{$\F$}{F} to unary inclusion logic}

Galliani \cite{Galliani:2018} proved  that constancy atoms increase the expressive power of unary inclusion logic. More precisely, he proved that non-connectedness of graphs can be expressed  using constancy atoms and unary inclusion atoms, but not in terms of unary inclusion atoms alone. The next proposition shows that the same effect can be obtained using $\F$ and inclusion atoms. %

\begin{notation}
	Let $\phi$ is a first-order formula, and let $\psi$ be a team-based formula. We will then denote $\varphi\regularimplication\psi$ as an abbreviation for $\lnot\varphi\lor(\varphi\land\psi)$.
\end{notation}

We thank Pietro Galliani for suggesting the following proposition.

\begin{proposition}\label{prp:non-connectedness-expressible-FOincl1F}
	The following $\FO\left(\incl[1],\F\right)$ sentence
	\[
		\exists y(\F(\exists x(x\neq y\land\forall z(E(x,z)\regularimplication z\incl x))))
	\]
	is true in a graph $\mathfrak{G}=\left<V,E\right>$ if and only if the graph $\mathfrak{G}$ is disconnected.%
\end{proposition}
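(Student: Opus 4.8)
The plan is to unfold the team semantics of the sentence from the outside in, reducing each team-based operator to an elementary condition on subsets of $V$, and then to recognize the resulting condition as a characterization of disconnectedness. Throughout I assume $E$ is symmetric (undirected graph), as is standard for connectedness. First I would handle the outermost existential and the flattening operator. Since we evaluate a sentence, we start from the team $\{\varnothing\}$; the quantifier $\exists y$ produces a non-empty set $B \subseteq V$ (the image of the supplementation function), yielding the team $\{\{y \mapsto b\} : b \in B\}$. By the semantics of $\F$, the sentence holds iff for every $b \in B$ the singleton team $\{y \mapsto b\}$ satisfies the inner formula $\exists x(x \neq y \land \forall z(E(x,z) \regularimplication z \incl[1] x))$. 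The whole role of $\F$ here is that it lets me analyze the inner formula one assignment of $y$ at a time, i.e.\ on singletons.

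Next I would fix $b \in B$ and analyze the inner formula on $\{y \mapsto b\}$. The existential $\exists x$ yields a non-empty $A \subseteq V$ and the team $\{\{y \mapsto b, x \mapsto a\} : a \in A\}$. The first conjunct $x \neq y$ is first-order, hence flat, and forces $a \neq b$ for every $a \in A$, that is, $b \notin A$. For the second conjunct I would apply the duplication semantics of $\forall z$ to obtain the team $W = \{\{y \mapsto b, x \mapsto a, z \mapsto c\} : a \in A,\ c \in V\}$, and then unwind $E(x,z) \regularimplication z \incl[1] x$, which abbreviates $\lnot E(x,z) \lor (E(x,z) \land z \incl[1] x)$.

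The crux is the disjunction step. Here I would observe that, because $E(x,z)$ and its negation are first-order literals and hence flat, the split $W = W_1 \cup W_2$ witnessing the disjunction is forced: $W_1$ must be exactly the assignments of $W$ satisfying $\lnot E(x,z)$ and $W_2$ exactly those satisfying $E(x,z)$ (the edge-assignments). Thus the condition collapses to $W_2 \vDash z \incl[1] x$. Reading off the two relevant columns of $W_2$, the set of $z$-values is the neighbourhood $N(A) = \{c : \exists a \in A,\ E(a,c)\}$, while the set of $x$-values is $\{a \in A : a \text{ has a neighbour}\}$; since $E$ is symmetric, every element of $N(A)$ has a neighbour, so the unary inclusion $z \incl[1] x$ is equivalent to $N(A) \subseteq A$, i.e.\ $A$ is closed under taking neighbours. (The degenerate case $W_2 = \varnothing$ is handled by the empty team property, consistently with $N(A) = \varnothing \subseteq A$.)

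Assembling these reductions, the sentence is true in $\mathfrak{G}$ iff there exist $b \in V$ and a non-empty $A \subseteq V$ with $b \notin A$ and $N(A) \subseteq A$, that is, iff there is a non-empty \emph{proper} subset of $V$ closed under neighbours. Finally I would note that in an undirected graph a non-empty set closed under neighbours is exactly a union of connected components, so a proper non-empty such set exists iff $\mathfrak{G}$ has at least two components, i.e.\ iff $\mathfrak{G}$ is disconnected. The main obstacle is the disjunction analysis: making rigorous that the flat disjuncts force a unique split $W = W_1 \cup W_2$, and then correctly translating $z \incl[1] x$ over $W_2$ into the neighbour-closure condition $N(A) \subseteq A$. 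Everything else is routine unwinding of the supplementation and duplication semantics.
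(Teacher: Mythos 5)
Your proof is correct, but it takes a genuinely different route from the paper's. The paper gives no semantic analysis at all: it simply asserts that the sentence is logically equivalent to the formula Galliani used in \cite{Galliani:2018} to express non-connectedness in $\FO\left(\incl[1],\consformat\right)$, and delegates correctness to that work; there the constancy atom $\dep(y)$ forces the $y$-values to behave like a single value, which is exactly the effect your singleton-by-singleton reading of $\F$ achieves. Your argument instead establishes the truth condition from scratch, and both steps you flag as the crux are sound: the split $W=W_1\cup W_2$ is forced because each disjunct of $\lnot E(x,z)\lor(E(x,z)\land z\incl[1] x)$ entails one of two complementary literals, so $W_1$ and $W_2$ must be exactly the non-edge and edge parts of $W$; and your translation of $z\incl[1] x$ over $W_2$ into $N(A)\subseteq A$ correctly uses symmetry of $E$ (and correctly handles $W_2=\varnothing$ via the empty team property). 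One cosmetic slip: ``the sentence holds iff for every $b\in B$ the singleton team satisfies\dots'' should read ``iff there \emph{exists} a non-empty $B$ such that for every $b\in B$\dots'', which, as you implicitly use when assembling, collapses to the existence of a single $b$ (take $B=\{b\}$). A notable bonus of your derivation is that it makes visible why $\F$ is indispensable: without it, the inclusion atom would be evaluated on a team mixing the $x$-witnesses of different $y$-values, and, for instance, $B=V$ with $H(b)=V\setminus\{b\}$ satisfies the $\F$-free sentence on every connected graph with at least two vertices --- consistent with Galliani's theorem that $\FOincl[1]$ alone cannot express non-connectedness. The trade-off is the usual one: the paper's proof is a one-line reduction inheriting correctness from \cite{Galliani:2018}, while yours is self-contained and yields the explicit characterization that the sentence holds iff $V$ has a non-empty proper subset closed under neighbours.
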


\begin{proof}
It is straightforward to check that the formula above is logically equivalent with the formula used in \cite{Galliani:2018} to show the same result.
\end{proof}

\begin{corollary}\label{cor:F-unsafe-unary-inclusion-logic}
	The flattening operator $\F$ increases the expressive power of $\FOincl[1]$.
\end{corollary}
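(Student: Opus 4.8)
The plan is to deduce this corollary immediately from \cref{prp:non-connectedness-expressible-FOincl1F} together with the inexpressibility result of Galliani \cite{Galliani:2018}. The strategy is a standard separation argument: exhibit a graph property that is definable in the richer logic $\FO(\incl[1],\F)$ but provably not definable in the poorer logic $\FOincl[1]$.

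First I would invoke \cref{prp:non-connectedness-expressible-FOincl1F}, which provides an explicit $\FO(\incl[1],\F)$ sentence that is true in a graph $\mathfrak{G}=\langle V,E\rangle$ exactly when $\mathfrak{G}$ is disconnected. Thus the class of disconnected graphs is definable in $\FO(\incl[1],\F)$. Next I would recall the key fact established by Galliani in \cite{Galliani:2018}: the property of non-connectedness (equivalently, its complement, connectedness) of finite graphs is \emph{not} definable in plain unary inclusion logic $\FOincl[1]$. Indeed, his result is precisely that one needs an additional resource beyond bare unary inclusion atoms --- in his case constancy atoms --- to capture this property, and that unary inclusion atoms alone do not suffice.

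Putting these two observations together yields the separation: the sentence of \cref{prp:non-connectedness-expressible-FOincl1F} defines a class of graphs inside $\FO(\incl[1],\F)$ that, by Galliani's theorem, has no defining sentence in $\FOincl[1]$. Hence $\FO(\incl[1],\F)$ is strictly more expressive than $\FOincl[1]$, which is exactly the assertion that $\F$ increases the expressive power of unary inclusion logic.

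I expect the only delicate point to be the precise alignment with the cited inexpressibility theorem: one must ensure that what Galliani rules out is definability in $\FOincl[1]$ itself (unary inclusion logic with no further dependency atoms or operators), so that adding $\F$ is genuinely responsible for the gain, rather than, say, some incidental first-order resource. Since \cref{prp:non-connectedness-expressible-FOincl1F} already notes that its witnessing sentence is logically equivalent to the formula used in \cite{Galliani:2018}, this matching is essentially immediate, and no further calculation is required.
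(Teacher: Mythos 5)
Your proposal is correct and takes essentially the same route as the paper: the corollary is an immediate consequence of \cref{prp:non-connectedness-expressible-FOincl1F} combined with Galliani's theorem from \cite{Galliani:2018} that non-connectedness of graphs is not expressible by unary inclusion atoms alone, which is exactly why the paper states the corollary without further proof. Your closing remark about aligning with the cited inexpressibility result is also well taken, and is resolved just as you say, since the witnessing sentence is logically equivalent to Galliani's.
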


\subsection{Adding \texorpdfstring{$\F$}{F} to unary anonymity logic}

\subsubsection{Expressivity of unary anonymity atoms and the flattening operator}

For any $n\in\mathbb{N}$, we denote by $\mathfrak{A}_{n}$ a graph consisting of two cycles each of length $2^{n+1}$ and $\mathfrak{B}_{n}$ a graph consisting of a single cycle of length $2^{n+2}$.

\begin{proposition}\label{separatingexmple}
	The following $\FO(\anon[1],\F)$ sentence
	\[
		\exists x\F(\exists y(y\neq x\land\forall z(E(y,z)\regularimplication(z\neq x\land y\anon z\land z\anon y))))
	\]
	is true in the models $\mathfrak{A}_{n}$ and is false in $\mathfrak{B}_{n}$.
\end{proposition}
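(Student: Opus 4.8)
The plan is to analyze the given sentence
\[
	\varphi \coloneqq \exists x\,\F\bigl(\exists y(y\neq x\land\forall z(E(y,z)\regularimplication(z\neq x\land y\anon z\land z\anon y)))\bigr)
\]
by first unwinding the semantics of the outermost existential and the flattening operator, and then asking what combinatorial property of the underlying graph the inner flat formula detects once a value $a$ for $x$ has been fixed. The outer $\exists x$ supplies a single element $a\in V$ (working over the team $\{\varnothing\}$, so the witness is essentially one vertex $a$), and $\F(\cdots)$ then demands that the inner formula $\exists y(\cdots)$ hold in \emph{every singleton team} of the team obtained after choosing $x=a$. Because of the $\F$, the inner $\exists y$ and $\forall z$ are evaluated singleton-by-singleton, so all the team-semantic subtleties collapse: for each relevant assignment $s$ with $s(x)=a$ we need a witness $y=b$ with $b\neq a$ such that for every $z$-value $c$ that is an $E$-neighbour of $b$, the conjunction $z\neq x\land y\anon z\land z\anon y$ holds in that enlarged singleton-derived team.

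First I would make precise what the unary anonymity atoms $y\anon z$ and $z\anon y$ evaluate to after the $\forall z$ duplication. On a team that has been duplicated over $z$ (ranging over all of $V$) and then restricted to the assignments satisfying the first-order guard $E(y,z)$, the atom $y\anon z$ says that for the fixed value of $y$ there are at least two distinct $z$-values attached to it, and symmetrically $z\anon y$ imposes a matching non-constancy condition. The key step is therefore to translate \quoexpr{$y\anon z\land z\anon y$ holds for all neighbours $z$ of $y$} into a local degree/neighbourhood condition on the vertex $b$ chosen for $y$ — concretely, that $b$ has at least two neighbours and that these neighbours in turn are not forced to have $b$ as their unique witness. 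I expect this to unfold into the statement that $b$ lies on a cycle (every vertex has degree exactly $2$ in both $\mathfrak{A}_n$ and $\mathfrak{B}_n$, so the degree condition is automatic) together with a condition that rules out, for the specific bad vertex $a$ picked for $x$, the configurations forbidden by $z\neq x$.

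The heart of the argument is then the distinction between $\mathfrak{A}_n$ (two disjoint cycles) and $\mathfrak{B}_n$ (one cycle). I would argue that in $\mathfrak{A}_n$ one can choose the witness $a$ for $x$ to be a vertex in one of the two cycles and then, for $y$, always find a witness $b$ in the \emph{other} cycle whose entire neighbourhood avoids $a$ and satisfies the anonymity requirements; the disconnectedness is exactly what lets the guard $z\neq x$ be satisfied uniformly, because no neighbour of a vertex in the far cycle can ever equal the fixed $a$. Conversely, in $\mathfrak{B}_n$, since the graph is a single cycle, no choice of $a$ can be \quoexpr{avoided} by a whole component: for any candidate $a$ there will be assignments forcing some $z$-neighbour to coincide with $x=a$, or forcing an anonymity atom to fail, so the sentence comes out false. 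I would verify the $\mathfrak{B}_n$ direction by showing that for every choice of $a$ the inner flat formula fails on at least one singleton, exploiting that the cycle is connected and has even length $2^{n+2}$.

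The main obstacle I anticipate is getting the semantics of the unary anonymity atoms \emph{under the flattening operator} exactly right: because $\F$ forces singleton evaluation of its argument but the argument itself contains quantifiers $\exists y,\forall z$ that rebuild teams via supplementation and duplication, one must be careful about \emph{which} team each anonymity atom is actually evaluated on. In particular $y\anon z$ needs at least two assignments to be satisfiable, so I must check that after the $\forall z$-duplication (restricted by the first-order guard $E(y,z)\regularimplication\cdots$, i.e. $\lnot E(y,z)\lor(E(y,z)\land\cdots)$) the surviving team genuinely contains several $z$-values for the fixed $y$. Since the proof given in the excerpt merely asserts equivalence with Galliani's separating sentence, the cleanest route is probably to carry out this unwinding explicitly and then match the resulting graph condition against the known behaviour of the $\mathfrak{A}_n/\mathfrak{B}_n$ family; once the local neighbourhood condition is pinned down, the connectivity dichotomy does the rest.
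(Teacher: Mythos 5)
There is a genuine gap, and it sits exactly at the heart of the argument. Your reading of the inner formula is too local, in two related ways. First, the claim that under $\F$ ``the inner $\exists y$ and $\forall z$ are evaluated singleton-by-singleton, so all the team-semantic subtleties collapse'' is false (and inconsistent with your own final paragraph): $\F$ only resets the \emph{starting} team to a singleton $\{x\mapsto a\}$; inside its scope, $\exists y$ supplements with a \emph{set} $B$ of witnesses, $\forall z$ duplicates over the whole domain, and the anonymity atoms are evaluated on the resulting large team. If evaluation really collapsed to singletons, $y\anon z$ would be unsatisfiable and the sentence false in every graph. In particular, your phrasing ``a witness $y=b$'' (a single vertex) cannot work: on the team of edge pairs $\left\{(b,c)\setsep b\in B,\ E(b,c)\right\}$ (you are right that the guard $E(y,z)\regularimplication\cdots$ splits the duplicated team deterministically), the conjunct $z\anon y$ demands that every occurring $z$-value $c$ be paired with two distinct $y$-values; since all vertices of $\mathfrak{A}_{n}$ and $\mathfrak{B}_{n}$ have degree $2$, this means \emph{both} neighbours of $c$ must lie in $B$. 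So $B$ is closed under moving two steps along a cycle and hence must contain a full parity class of the cycle through any of its members (here evenness of the cycle length makes the closure consistent), and the $z$-values then sweep out the other parity class. This propagation is precisely the paper's ``filling'' mechanism (its proof is built around exactly this, with an illustrating figure), and it is the idea missing from your sketch: the inner formula at $\{x\mapsto a\}$ expresses the \emph{global} condition that some entire cycle component avoids $a$ (via $y\neq x$ and $z\neq x$), not a ``local degree/neighbourhood condition on the vertex $b$''.

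The local reading does not merely leave details open; it would derive the wrong truth value. If the condition were ``$b\neq a$, $b$ has two neighbours, and no neighbour of $b$ equals $a$'', then in $\mathfrak{B}_{n}$ one could choose $b$ at distance at least $2$ from $a$ on the single cycle of length $2^{n+2}$, making the sentence \emph{true} there, contrary to the proposition; indeed $\mathfrak{A}_{n}$ and $\mathfrak{B}_{n}$ are locally isomorphic, so no local condition can separate them. With the propagation in hand, both directions are immediate and match the paper: in $\mathfrak{A}_{n}$ pick $a$ in one cycle and let $B$ be a parity class of the other cycle; in $\mathfrak{B}_{n}$, any nonempty admissible $B$ forces $B$ together with its set of $z$-values to exhaust the unique cycle, so either $a\in B$ (violating $y\neq x$) or $a$ occurs as a $z$-value (violating $z\neq x$). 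Your $\mathfrak{B}_{n}$ paragraph asserts this conclusion (``no choice of $a$ can be avoided by a whole component'') but supplies no mechanism forcing the witness set to spread through the component, which is the one nontrivial point of the proof.
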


\begin{proof}
	The intuition behind this $\FO\left(\anon[1],\F\right)$-sentence is as follows: consider a cycle of even length. We would like to fill it completely with variable assignments, in order to be able to detect if there exists another node in the graph that is not part of the considered cycle. To do so, we are going to use the mechanism of the anonymity atom to fill this cycle. More precisely, we will partition the cycle into two sets of nodes, that will be assigned to $y$ and $z$ respectively, in such a way that there will be an alternation between the values taken by $y$ and $z$ along the cycle. This is illustrated in the \cref{fig:anonymity-atom-filling-cycle-mechanism}.
	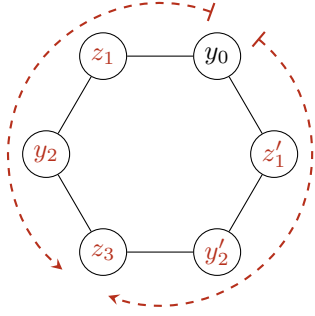
\begin{figure}
		\centering
		\begin{tikzpicture}[
			auto,
			node_style/.style={circle, draw},
			edge_style/.style={},
			]
			
			\def\radius{1.5cm}
			\def\n{6}
			
			\foreach \i in {1,...,\n} {
				\node [node_style] at ({360/\n * (\i - 1)}:\radius) (v\i) {$\phantom{y}$};
			}
			\foreach \i in {1,...,\n} {
				\pgfmathtruncatemacro{\next}{mod(\i, \n) + 1}
				\draw [edge_style]  (v\i) edge (v\next);
			}
			
			\node			at (v2) {$y_{0}$};
			\node [BrickRed] at (v4) {$y_{2}$};
			\node [BrickRed] at (v6) {$y_{2}'$};
			
			\node [BrickRed] at (v1) {$z_{1}'$};
			\node [BrickRed] at (v3) {$z_{1}$};
			\node [BrickRed] at (v5) {$z_{3}$};
			
			\def\radiusoffset{5mm}
			\def\angleoffset{10}
			\draw [thick, BrickRed, dashed, |->] (360/\n+\angleoffset:\radius+\radiusoffset) arc (360/\n * (2-1)+\angleoffset:360/\n * (5-1)-\angleoffset:\radius+\radiusoffset);
			\draw [thick, BrickRed, dashed, |->] (360/\n-\angleoffset:\radius+\radiusoffset) arc (360/\n * (2-1)-\angleoffset:360/\n * (5-1)+\angleoffset-360:\radius+\radiusoffset);
		\end{tikzpicture}
		\caption{Illustration of the anonymity atom mechanism to fill a cycle}
		\label{fig:anonymity-atom-filling-cycle-mechanism}
	\end{figure}		
	
	This filling process is done via the sub-formula $E(y,z)\regularimplication(y\anon z\land z\anon y)$. Here, the black node $y_{0}$ in the \cref{fig:anonymity-atom-filling-cycle-mechanism} corresponds to a starting value for the variable $y$. Then $E(y,z)\regularimplication y\anon z$ forces the variable $z$ to take two different values $z_{1}$ and $z_{1}'$ that are in fact the two adjacent vertices of $y_{0}$. Then $E(y,z)\regularimplication z\anon y$ firstly forces the variable $y$ to take two different values $y_{2}$ and $y_{0}$ that are in fact the two adjacent vertices of $z_{1}$, and secondly forces the variable $y$ to take two different values $y_{2}'$ and $y_{0}$ that are in fact the two adjacent vertices of $z_{1}'$. And so on. At the end, as the considered cycle is of even length, then the last two values assigned to the variable $z$ actually coincide. Such filled components then satisfy the clause $\forall z(E(y,z)\regularimplication y\anon z\land z\anon y)$.

	The whole objective will be to succeed in distinguishing the graphs $\mathfrak{A}_{n}$ from the graphs $\mathfrak{B}_{n}$. In fact, the graphs $\mathfrak{A}_{n}$ and $\mathfrak{B}_{n}$ are all composed of one or two cycles of even length.
	
	At the end of the filling process, we obtained a cycle filled with an alternation of values for the variables $z$ and $y$. In order to distinguish $\mathfrak{A}_{n}$ from $\mathfrak{B}_{n}$, we need to find an $x$ such that it does not belong to the alternation of $y$ and $z$. Hence, we need to check both that $y\neq x$ and $z\neq x$, which gives
	\[
		y\neq x\land\forall z(E(y,z)\regularimplication(z\neq x\land y\anon z\land z\anon y))
	\]
	The end of the construction of the desired formula is identical to the one we made for the unary inclusion logic together with the flattening operator $\F$, and at the end we finally get
	\[
		\exists x\F(\exists y(y\neq x\land\forall z(E(y,z)\regularimplication(z\neq x\land y\anon z\land z\anon y))))
	\]
	By construction, it is clear that this formula is true for the graphs $\mathfrak{A}_{n}$ but is false for the graphs $\mathfrak{B}_{n}$.
\end{proof}

\subsubsection{Unary anonymity logic does not express non-connectedness}

The next technical lemma will be used to show that the structures $\mathfrak{A}_{n}$  and $\mathfrak{B}_{n}$ from the previous section cannot be separated by formulas of unary anonymity logic.

Let $\mathfrak{M}$ be a model, let $X$ be a team over $\mathfrak{M}$, and let $\mathcal{F}\subseteq\Aut\left(\mathfrak{M}\right)$. Then we define the \defexpr{team closure of $X$ under $\mathcal{F}$}, written $\TeamCl_{\mathcal{F}}\left(X\right)$, as the set of all assignments obtained by applying all automorphisms from $\mathcal{F}$ to all assignments of $X$, that is
\[
	\TeamCl_{\mathcal{F}}\left(X\right)\defeq\left\{f\left(s\right)\setsep s\in X,f\in\mathcal{F}\right\}
\]
For $\mathcal{F}=\Aut\left(\mathfrak{M}\right)$, we simply write $\TeamCl\left(X\right)$ instead of $\TeamCl_{\Aut\left(\mathfrak{M}\right)}\left(X\right)$, that is
\[
	\TeamCl\left(X\right)\defeq\left\{f\left(s\right)\setsep s\in X,f\in\Aut\left(\mathfrak{M}\right)\right\}
\]

Recall that a \defexpr{magma} is a set $M$ equipped with a binary operation $\star$ that sends any two elements from $M$ to another element in $M$, and in this case we call $\star$ an \defexpr{internal binary operation}. We call a \defexpr{unitary magma} any magma whose internal binary operation admits a neutral element (which is then unique), that is, if there is $e\in M$ such that for all $m\in M$ we get $e\star m=m\star e=m$.

\begin{lemma}\label{magma}
	Let $\mathfrak{M}$ be a model. Let $\mathcal{F}\subseteq\Aut(\mathfrak{M})$ be a unitary magma such that for any two points $m_{1},m_{2}\in M$ there exists an automorphism $f\in\mathcal{F}$ such that $f(m_{1})=m_{1}$ and $f(m_{2})\neq m_{2}$. Then for all teams $X$ over $\mathfrak{M}$ such that $\TeamCl_{\mathcal{F}}(X)=X$ and all formulas $\varphi\in\FOanon[1]$ with free variables in $\dom(X)$ 
	we have that
	\[
		\mathfrak{M}\vDash_{X}\varphi\quad\text{iff}\quad\mathfrak{M}\vDash_{X}\varphi^{\f}	
	\]
\end{lemma}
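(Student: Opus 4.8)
The plan is to prove the biconditional by a single structural induction on $\varphi\in\FOanon[1]$. One direction is for free: $\mathfrak{M}\vDash_{X}\varphi\Rightarrow\mathfrak{M}\vDash_{X}\varphi^{\f}$ is exactly the \hyperref[ax:entailment-axiom]{entailment axiom}, and it holds for every team, whether or not it is $\mathcal{F}$-closed. So all the work lies in the converse $\mathfrak{M}\vDash_{X}\varphi^{\f}\Rightarrow\mathfrak{M}\vDash_{X}\varphi$, and this is where the hypothesis $\TeamCl_{\mathcal{F}}(X)=X$ must be used. I would therefore phrase the induction hypothesis as: the converse holds for $\psi$ over \emph{every} $\mathcal{F}$-closed team whose domain contains $\FV(\psi)$, so that it can be reapplied to the sub-teams produced by the connectives and quantifiers.

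The engine of the argument is two simple observations. First, by \cref{prp:flattening-atoms} together with the distributivity axiom, $\varphi^{\f}$ is an ordinary first-order formula (every anonymity atom is replaced by $\top$ and first-order literals are kept), so $\varphi^{\f}$ is flat and, at a singleton, agrees with Tarskian truth. Second, every $f\in\mathcal{F}\subseteq\Aut(\mathfrak{M})$ preserves first-order satisfaction. Combining the two, any sub-team carved out of an $\mathcal{F}$-closed team by a first-order condition is again $\mathcal{F}$-closed. With this in hand the two base cases are quick: for a first-order literal $\alpha$ one has $\alpha^{\f}=\alpha$ and literals are flat, so there is nothing to prove; for a unary anonymity atom $t_{1}\anon t_{2}$ one has $(t_{1}\anon t_{2})^{\f}\equiv\top$, so the converse reduces to checking that every $\mathcal{F}$-closed team satisfies $t_{1}\anon t_{2}$. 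Here, given $s\in X$, I would apply the homogeneity hypothesis to $m_{1}:=s(t_{1})$ and $m_{2}:=s(t_{2})$ to obtain $f\in\mathcal{F}$ with $f(m_{1})=m_{1}$ and $f(m_{2})\neq m_{2}$; then $f(s)\in X$ by closure and, at least when $s(t_{1})\neq s(t_{2})$, it is exactly the witness the atom demands, since $f(s)(t_{1})=s(t_{1})$ while $f(s)(t_{2})\neq s(t_{2})$.

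For the inductive cases I would always pass to the canonical, maximal $\mathcal{F}$-closed sub-teams. The case $\psi_{1}\land\psi_{2}$ keeps the team fixed and the two instances of the hypothesis combine. For $\psi_{1}\lor\psi_{2}$, from $\mathfrak{M}\vDash_{X}\psi_{1}^{\f}\lor\psi_{2}^{\f}$ I set $Y:=\{s\in X:\mathfrak{M}\vDash_{\{s\}}\psi_{1}^{\f}\}$ and $Z:=\{s\in X:\mathfrak{M}\vDash_{\{s\}}\psi_{2}^{\f}\}$; flatness gives $X=Y\cup Z$ with $\mathfrak{M}\vDash_{Y}\psi_{1}^{\f}$ and $\mathfrak{M}\vDash_{Z}\psi_{2}^{\f}$, and the first-order invariance above makes $Y$ and $Z$ closed, so the hypothesis applies to each. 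For $\exists x\,\psi$ I would use the supplementation $H(s):=\{m\in M:\mathfrak{M}\vDash_{\{s[m/x]\}}\psi^{\f}\}$, which is non-empty by assumption and first-order definable, so that $X[H/x]$ is $\mathcal{F}$-closed and satisfies $\psi^{\f}$ at each singleton; the hypothesis then yields $\mathfrak{M}\vDash_{X[H/x]}\psi$. The universal case $\forall x\,\psi$ is identical with the duplication $X[M/x]$, which is manifestly $\mathcal{F}$-closed.

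The step I expect to be the real obstacle is the anonymity-atom base case, and more precisely the need to guarantee a witness for every assignment of every closed sub-team that the induction reaches. This is the only place where the magma structure and the fixing-but-moving property are used, and it is exactly where $\TeamCl_{\mathcal{F}}(X)=X$ is indispensable, since otherwise the image $f(s)$ need not lie in the team. The delicate point is an assignment $s$ with $s(t_{1})=s(t_{2})$, for which the homogeneity hypothesis produces nothing; one must then argue that the closure hypothesis, propagated intact through all the sub-team constructions above, still leaves a suitable witness available. Keeping $\mathcal{F}$-closure as an invariant of the whole induction is therefore the heart of the proof, and verifying that each of the sub-team constructions preserves it is the bookkeeping that makes the witnesses available all the way down to the atoms.
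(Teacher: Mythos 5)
Your proposal takes essentially the same route as the paper's proof: only the direction $\mathfrak{M}\vDash_{X}\varphi^{\f}\Rightarrow\mathfrak{M}\vDash_{X}\varphi$ requires work (the paper likewise reduces to it, and your appeal to the entailment axiom for the other direction is fine), the argument is a structural induction with $\mathcal{F}$-closure of the team as the invariant, and the anonymity atom is handled by applying the fixing-but-moving automorphism to $m_{1}=s(x_{1})$, $m_{2}=s(x_{2})$ and using $\TeamCl_{\mathcal{F}}(X)=X$ to place $f(s)$ back in $X$ as the witness. The paper dispatches the connectives and quantifiers with a pointer to Lemma~21 of \cite{Galliani:2018}; your bookkeeping --- maximal subteams defined by the first-order formulas $\psi_{i}^{\f}$ for $\lor$, the supplementation $H(s)=\left\{m\in M\setsep\mathfrak{M}\vDash_{\left\{s[m/x]\right\}}\psi^{\f}\right\}$ for $\exists$, duplication for $\forall$, together with the observation that automorphisms preserve first-order truth and hence all of these constructions preserve $\mathcal{F}$-closure --- is exactly that argument, carried out correctly and in more detail than the paper gives.

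The one point of divergence is the atom case with $s(t_{1})=s(t_{2})$, which you flag as \quoexpr{the delicate point} and leave open; as a standalone proof this is a genuine gap, since you promise an argument you never supply. The paper has no such case split: it applies the hypothesis to the pair $(s(x_{1}),s(x_{2}))$ unconditionally, which the statement of \cref{magma} literally licenses, as it quantifies over \emph{any} two points $m_{1},m_{2}\in M$. But your hesitation is well founded, and the gap cannot be closed with the lemma as stated. Read literally (equal pairs allowed), the hypothesis is unsatisfiable over a nonempty domain --- take $m_{1}=m_{2}=m$, so that $f(m)=m$ and $f(m)\neq m$ are demanded simultaneously --- and the lemma holds vacuously. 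Read as ranging over \emph{distinct} pairs, as you implicitly do, the lemma fails at the atoms: let $\mathfrak{M}$ be a structure over the empty signature with at least three elements, let $\mathcal{F}=\Aut(\mathfrak{M})$ be all permutations (for distinct $m_{1},m_{2}$ a transposition avoiding $m_{1}$ witnesses the hypothesis), and let $X=\left\{s\setsep\dom(s)=\left\{x_{1},x_{2}\right\},\ s(x_{1})=s(x_{2})\right\}$. Then $X$ is $\mathcal{F}$-closed and satisfies $(x_{1}\anon x_{2})^{\f}\equiv\top$, yet no $s\in X$ has a witness for $x_{1}\anon x_{2}$, because every member of $X$ assigns $x_{1}$ and $x_{2}$ the same value. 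So the statement itself needs repair (for instance, restricting the class of teams, or strengthening the automorphism hypothesis) before the equal-values case can be discharged; your proposal has correctly isolated the exact soft spot that the paper's own proof glosses over, but it does not, and in this form cannot, resolve it.
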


\begin{proof}
		It suffices to show the right to left direction that is proved via structural induction. The proof for the connectives and quantifiers are similar to that of Lemma~21 in \cite{Galliani:2018}.

			Assume $\varphi=x_{1}\anon x_{2}$. %
			As $(x_{1}\anon x_{2})^{\f}=\top$ %
			we need to prove that $\mathfrak{M}\vDash_{X}x_{1}\anon x_{2}$ only from the hypothesis of the lemma, that is, whenever $X$ is a team whose domain contains the variables $x_{1}$ and $x_{2}$ and such that $X=\TeamCl_{\mathcal{F}}(X)$. But this is the case. Indeed, suppose that $s(x_{1})=m_{1}$ and $s(x_{2})=m_{2}$. Then by assumption of the lemma, there is an automorphism $f\in\mathcal{F}$ such that $f(m_{1})=m_{1}$ and $f(m_{2})\neq m_{2}$, and since $X=\TeamCl_{\mathcal{F}}(X)$ there exists some assignment $s'\in X$ such that $s'=f(s)$, that is, $s'(x)=f(s(x))$ for all $x\in\dom(s)$. This implies in particular that
			\[\begin{cases}
				s'(x_{1})=f(s(x_{1}))=f(m_{1})=m_{1}=s(x_{1})\\
				s'(x_{2})=f(s(x_{2}))=f(m_{2})\neq m_{2}=s(x_{2})
			\end{cases}\]
			and thus, for any assignment $s\in X$ there exists some assignment $s'\in X$ such that $s'(x_{1})=s(x_{1})$ but $s'(x_{2})\neq s(x_{2})$. This shows that $\mathfrak{M}\vDash_{X}x_{1}\anon x_{2}$, as required.
\end{proof}

\begin{corollary}\label{cor:F-unsafe-unary-anonymity-logic}
	The flattening operator increases the expressive power of unary anonymity logic.	
\end{corollary}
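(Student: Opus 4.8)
The plan is to combine the two results of this subsection. From \cref{separatingexmple} we have an explicit $\FO(\anon[1],\F)$ sentence that is true in every $\mathfrak{A}_{n}$ and false in every $\mathfrak{B}_{n}$, so non-connectedness (or at least the property separating these two families) is expressible in $\FO(\anon[1],\F)$. To prove the corollary it therefore suffices to show that no sentence of plain unary anonymity logic $\FO(\anon[1])$ can separate $\mathfrak{A}_{n}$ from $\mathfrak{B}_{n}$, and this is exactly what \cref{magma} is designed to deliver.

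First I would spell out how \cref{magma} applies to the cycle graphs. The key observation is that for a single even cycle $\mathfrak{B}_{n}$, the rotation automorphisms form a cyclic group, and more generally one can exhibit a unitary magma $\mathcal{F}\subseteq\Aut(\mathfrak{M})$ satisfying the point-stabiliser/point-mover hypothesis of the lemma: for any two vertices $m_{1},m_{2}$ there is an $f\in\mathcal{F}$ fixing $m_{1}$ but moving $m_{2}$. For a cycle, a reflection fixing $m_{1}$ will move any $m_{2}\neq m_{1}$, so such automorphisms exist. I would then argue that for any team $X$ one can pass to its closure $\TeamCl_{\mathcal{F}}(X)$, and since every $\FO(\anon[1])$-formula $\varphi$ satisfies $\mathfrak{M}\vDash_{X}\varphi\iff\mathfrak{M}\vDash_{X}\varphi^{\f}$ on such closed teams by \cref{magma}, the logic $\FO(\anon[1])$ collapses to the flattening $\varphi^{\f}$, which is first-order.

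Next I would invoke the collapse to first-order logic at the level of sentences. For a sentence the relevant team is $\{\varnothing\}$, which is trivially closed under any $\mathcal{F}$, so \cref{magma} gives $\mathfrak{M}\vDashteam\varphi\iff\mathfrak{M}\vDashteam\varphi^{\f}$ for every $\FO(\anon[1])$-sentence $\varphi$ on the models $\mathfrak{A}_{n}$ and $\mathfrak{B}_{n}$. Since $\varphi^{\f}$ is an ordinary first-order sentence, separating $\mathfrak{A}_{n}$ from $\mathfrak{B}_{n}$ in $\FO(\anon[1])$ would amount to separating them by a first-order sentence. I would then appeal to an Ehrenfeucht--Fra\"iss\'e argument: two disjoint even cycles $\mathfrak{A}_{n}$ versus one even cycle $\mathfrak{B}_{n}$ of twice the circumference are well known to be indistinguishable by first-order sentences of quantifier rank $n$ (the standard locality/pebble argument on long cycles), so no fixed first-order sentence separates the whole families $\{\mathfrak{A}_{n}\}$ and $\{\mathfrak{B}_{n}\}$.

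Putting these together yields the corollary: the $\FO(\anon[1],\F)$ sentence of \cref{separatingexmple} separates the two families, while no $\FO(\anon[1])$ sentence does, so $\F$ strictly increases the expressive power of unary anonymity logic. The main obstacle I anticipate is the bookkeeping in verifying the hypotheses of \cref{magma} for the specific automorphism magma on cycles and ensuring the collapse argument is uniform in $n$, together with making the Ehrenfeucht--Fra\"iss\'e bound precise enough that the quantifier rank of a fixed separating first-order sentence is eventually exceeded by the cycle length $2^{n+1}$.
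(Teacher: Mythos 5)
Your route is exactly the paper's: use \cref{separatingexmple} for the separating $\FO(\anon[1],\F)$ sentence, use \cref{magma} to collapse any candidate $\FOanon[1]$ sentence $\varphi$ to its first-order flattening $\varphi^{\f}$ on the models $\mathfrak{A}_{n}$ and $\mathfrak{B}_{n}$ (your observation that the team $\left\{\varnothing\right\}$ is closed under any $\mathcal{F}$ is correct and is all the corollary needs at the level of sentences), and finish with an Ehrenfeucht--Fra\"iss\'e argument showing that a fixed first-order sentence of quantifier rank $k$ cannot tell two cycles of length $2^{n+1}$ from one cycle of length $2^{n+2}$ once $n$ is large relative to $k$. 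That skeleton matches the paper's proof, and the EF step is standard and sound.

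There is, however, a concrete error in the one place where you add detail beyond the paper: your verification of the hypothesis of \cref{magma}. On a cycle of \emph{even} length, the reflection whose axis passes through a vertex $m_{1}$ also passes through the antipodal vertex $m_{1}^{*}$ and fixes it; it is false that such a reflection "will move any $m_{2}\neq m_{1}$". Worse, the stabilizer of $m_{1}$ in the automorphism group of an even cycle consists only of the identity and this reflection (and in $\mathfrak{A}_{n}$ any automorphism fixing $m_{1}$ preserves $m_{1}$'s cycle and restricts to one of these two maps on it), so for the antipodal pair $(m_{1},m_{1}^{*})$ there is \emph{no} automorphism fixing $m_{1}$ and moving $m_{1}^{*}$. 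Since the cycles here have even lengths $2^{n+1}$ and $2^{n+2}$, the hypothesis of \cref{magma} as stated, quantified over \emph{all} pairs of points, is not satisfied by any $\mathcal{F}\subseteq\Aut(\mathfrak{M})$, and your justification does not close this case. To be fair, the paper's proof invokes \cref{magma} without exhibiting $\mathcal{F}$ either, so you have put your finger on precisely the delicate point; but as written your verification step would fail, and repairing it requires more than reflections---for instance weakening the lemma's hypothesis to the pairs actually needed, or handling antipodal pairs in the anonymity-atom case using witnesses already present in the closed teams produced by duplication.
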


\begin{proof}
	By Proposition \ref{separatingexmple}, the formula $\Theta$ separates the models $\mathfrak{A}_{n}$ and $\mathfrak{B}_{n}$. Assume for a contradiction that $\Theta$ is equivalent to some $\varphi$ of unary anonymity logic. By Lemma \ref{magma}, $\varphi$ is equivalent to $\varphi^{\f}$ in the models $\mathfrak{A}_{n}$ and $\mathfrak{B}_{n}$, that is
	\[
		\mathfrak{A}_{n}\vDashTarski \varphi^{\f} \iff \mathfrak{B}_{n}\vDashTarski \varphi^{\f},
	\]
	where $\varphi^{\f}$ is a first-order formula. But by an Ehrenfeucht-Fraïssé game argument, this is impossible.
\end{proof}

\subsection{Adding \texorpdfstring{$\F$}{F} to dependence, exclusion and independence logic}

In this section we show that $\FO(\depformat,\F)$ has exactly the same expressive power as $\FOdep$. We will use the same strategy as in \cref{subsec:safeness-of-F-for-incl}. More precisely, we will show that $\FO(\depformat,\F)\subseteqq\ESO$, and as it is well known that $\ESO\equiv\FO(\depformat)$, we will obtain the result. The same idea works also  for exclusion and independence logic.

Exactly as for inclusion logic and $\GFPpos$ (\cref{thm:GFPpos-expresses-FOinclF}), we have to adapt the equivalence between $\FO(\depformat,\F)$ and $\ESO$ because they do not use the same semantics.

Recall that a sentence $\phi(R)$ from second-order logic is called \defexpr{downwards closed} if $(\mathfrak{M},A)\vDash\phi(R)$ and $A'\subseteq A$ imply $(\mathfrak{M},A')\models\phi(R)$.

\begin{lemma}\label{lem:translation-FOdepF-to-ESO}
	For every $\FO(\depformat,\F)$-formula $\varphi(\tuple{x})$ with the free variables in $\tuple{x}=x_{1},\dots,x_{n}$, there is an $\ESO$-formula $\varphi^{*}\equiv\varphi^{*}(R)$ which is downwards closed and such that $\ar(R)=n$, and that satisfies the following condition:
	\begin{equation}\label{eqn:translation-FOdepF-to-ESO}
		\mathfrak{M}\vDash_{X}\varphi(\tuple{x})\quad\text{iff}\quad\left<\mathfrak{M},R\coloneqq X(\tuple{x})\right>\vDash
\varphi^{*}(R)\vee\forall \tuple{x}\neg R(\tuple{x})
	\end{equation}
	holds for all models $\mathfrak{M}$ and all teams $X$ over $\mathfrak{M}$ with $\dom(X)\supseteq\left\{\tuple{x}\right\}$.
\end{lemma}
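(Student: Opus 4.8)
The plan is to prove \cref{lem:translation-FOdepF-to-ESO} by structural induction on $\varphi$, defining $\varphi^{*}$ and simultaneously maintaining three invariants: that \cref{eqn:translation-FOdepF-to-ESO} holds; that $R$ occurs only negatively in $\varphi^{*}$, which is exactly what forces $\varphi^{*}$ to be downwards closed; and that $\varphi^{*}$ is true whenever $R$ is interpreted as the empty relation, which mirrors the empty team property and keeps the disjunctive and existential steps clean. The whole argument rests on the Kontinen--Väänänen image correspondence: a team $X$ is faithfully represented by the relation $R:=X(\tuple{x})$, and I would check inductively that the truth of an $\FO(\depformat,\F)$-formula with free variables among $\tuple{x}$ depends only on this image, so that a second-order quantifier over relations can stand in for a lax team split or a supplementation.

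The first-order and dependence-theoretic cases are the standard translation witnessing $\ESO\equiv\FO(\depformat)$, and I would only recall them. For a literal $\alpha$ set $\alpha^{*}(R):=\forall\tuple{x}(R\tuple{x}\to\alpha)$; for a dependence atom $\dep(\tuple{x}_{I},x_{j})$ (composite terms having first been removed by existential quantification, as noted after \cref{prp:flattening-atoms}) set $\forall\tuple{x}\forall\tuple{y}(R\tuple{x}\land R\tuple{y}\land\bigwedge_{i\in I}x_{i}=y_{i}\to x_{j}=y_{j})$; for a conjunction set $\psi_{1}^{*}\land\psi_{2}^{*}$; for a disjunction set $\exists R_{1}\exists R_{2}(\forall\tuple{x}(R\tuple{x}\to R_{1}\tuple{x}\lor R_{2}\tuple{x})\land\psi_{1}^{*}(R_{1})\land\psi_{2}^{*}(R_{2}))$; for $\exists v\psi$ introduce a fresh $(n+1)$-ary $S$ encoding the image of $X[H/v]$ and set $\exists S(\forall\tuple{x}(R\tuple{x}\to\exists v\,S(\tuple{x},v))\land\psi^{*}(S))$; and for $\forall v\psi$ replace every atom $S(\tuple{t},t')$ in $\psi^{*}(S)$ by $R(\tuple{t})$, reflecting that $X[M/v]$ has image $R\times M$. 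In each case $R$ remains negative. The biconditional in the split and supplementation cases uses downwards closure of the subformulas to intersect the witnesses $R_{1},R_{2},S$ with $R$ (respectively $R\times M$), matching the lax team operations, and the empty-relation invariant is what lets me drop the disjunct $\forall\tuple{x}\neg R_{i}(\tuple{x})$ when a subteam happens to be empty.

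The genuinely new case is $\varphi=\F\psi$, and it is handled exactly as the corresponding case in the proof of \cref{thm:GFPpos-expresses-FOinclF}. Since $\mathfrak{M}\vDash_{X}\F\psi$ holds iff $\mathfrak{M}\vDash_{\{s\}}\psi$ for every $s\in X$, and the image of a singleton $\{s\}$ is the singleton relation $\{s(\tuple{x})\}$, I set
\[
	(\F\psi)^{*}(R):=\forall\tuple{y}\left(\neg R\tuple{y}\lor\psi^{*}(S)\left[(\tuple{y}=\tuple{t})\replaces S(\tuple{t})\right]\right),
\]
quantifying the would-be unique team element $\tuple{y}$ over $R$ and substituting the singleton test $\tuple{y}=\tuple{t}$ for each atom $S(\tuple{t})$, which pins $S$ to $\{\tuple{y}\}$. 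Because the singleton relation is nonempty, the empty-team disjunct supplied by the induction hypothesis never fires, so for each fixed $\tuple{y}=s(\tuple{x})\in R$ the substituted formula is equivalent to $\mathfrak{M}\vDash_{\{s\}}\psi$; universally quantifying over $\tuple{y}\in R$ then yields \cref{eqn:translation-FOdepF-to-ESO}. Here $R$ appears only in the antecedent $\neg R\tuple{y}$, so it is negative and the formula is vacuously true for $R=\varnothing$, preserving all three invariants.

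The main obstacle is not the $\F$-step, which merely reuses the earlier technique, but the bookkeeping around the image correspondence in the quantifier cases: I must verify that satisfaction really is insensitive to the multiplicity of assignments (so that a relation $S$ can replace a supplementation $H$), and I must thread downwards closure and the empty-relation invariant uniformly through the disjunction and $\exists$ clauses so that the single disjunct $\forall\tuple{x}\neg R(\tuple{x})$ of \cref{eqn:translation-FOdepF-to-ESO} absorbs all empty-team phenomena. Once the translation is in place the lemma is proved; combined with $\ESO\equiv\FO(\depformat)$ it will give $\FO(\depformat,\F)\equiv\FO(\depformat)$, and the same construction adapts verbatim to exclusion and independence logic.
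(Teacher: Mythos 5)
Your proposal is correct and takes essentially the same route as the paper: the paper also argues by induction, treating every case except $\F$ exactly as in the standard translation of dependence logic into $\ESO$ (Theorem 6.2 of \cite{Vaananen:2007:dependence}), and for $\varphi=\F\psi$ takes verbatim the formula $\forall\tuple{y}\left(\neg R\tuple{y}\vee\psi^{*}(S)\left[(\tuple{y}=\tuple{t})\replaces S(\tuple{t})\right]\right)$ from the proof of \cref{thm:GFPpos-expresses-FOinclF}, noting it is in $\ESO$ whenever $\psi^{*}$ is. Your explicit invariants (negative occurrence of $R$ forcing downwards closure, and truth under the empty relation absorbing the disjunct $\forall\tuple{x}\neg R(\tuple{x})$) merely spell out bookkeeping the paper leaves implicit.
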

\begin{proof}
	 We consider the inductive case corresponding to $\varphi=\F\psi$ as the other cases can be dealt with analogously to \cite[Theorem 6.2]{Vaananen:2007:dependence}. It turns out that the formula $\varphi^{*}$ that we need to take is exactly the same as for the translation from $\FO(\incl,\F)$ to $\GFPpos$ which is clearly in $\ESO$ assuming $\psi^{*}$ is.
\end{proof}
 
\begin{corollary}\label{prp:F-safe-exclusion-logic}\label{prp:F-safe-dependency-logic}
	The operator $\F$ does not increase the expressive power of dependence and exclusion logic.	
\end{corollary}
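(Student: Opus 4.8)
The plan is to establish the two inclusions $\FOdep\subseteq\FO(\depformat,\F)$ and $\FO(\depformat,\F)\subseteq\FOdep$, and symmetrically for exclusion logic. The first inclusion is immediate, since every dependence logic formula already is an $\FO(\depformat,\F)$-formula (one simply never uses $\F$). The entire content therefore lies in the reverse inclusion, and here I would lean entirely on \cref{lem:translation-FOdepF-to-ESO} together with the classical equivalence $\ESO\equiv\FOdep$ recalled above.

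Concretely, fix an $\FO(\depformat,\F)$-formula $\varphi(\tuple{x})$ with $\tuple{x}=x_{1},\dots,x_{n}$. By \cref{lem:translation-FOdepF-to-ESO} there is a downwards closed $\ESO$-formula $\varphi^{*}(R)$ with $\ar(R)=n$ such that $\mathfrak{M}\vDash_{X}\varphi(\tuple{x})$ holds iff $\langle\mathfrak{M},R\coloneqq X(\tuple{x})\rangle\vDash\varphi^{*}(R)\vee\forall\tuple{x}\neg R(\tuple{x})$. The formula $\varphi^{*}(R)\vee\forall\tuple{x}\neg R(\tuple{x})$ is again in $\ESO$; it is downwards closed in $R$ (the first disjunct by the lemma, and the disjunct $\forall\tuple{x}\neg R(\tuple{x})$ is only preserved when passing to subrelations), and the extra disjunct guarantees the empty-team property. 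I would then invoke the \emph{formula-level} version of $\ESO\equiv\FOdep$, namely the Kontinen--Väänänen characterization that a team property is $\FOdep$-definable precisely when it is downwards closed, has the empty-team property, and is $\ESO$-definable over the team-as-relation. This yields a dependence logic formula $\psi(\tuple{x})$ defining the same team property, hence $\psi\equiv\varphi$. This gives $\FO(\depformat,\F)\subseteq\FOdep$ and thus the dependence case.

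For exclusion logic I would argue identically, running the analogue of \cref{lem:translation-FOdepF-to-ESO} for $\FO(\excl,\F)$. The only atom-level input that changes is the flattening of the exclusion atom, and by \cref{prp:flattening-operator-atoms:excl} this is the first-order literal $\tuple{t}_{1}\syntacticneq\tuple{t}_{2}$, so the inductive translation stays inside downwards closed $\ESO$ with no new ideas. Since exclusion logic is itself downwards closed and, like dependence logic, captures exactly the downwards closed $\ESO$-definable team properties (indeed $\FOexcl\equiv\FOdep$), the same back-translation gives $\FO(\excl,\F)\subseteq\FOexcl$, and the trivial inclusion completes the equivalence.

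The one genuinely delicate point is the back-translation step, i.e.\ applying $\ESO\equiv\FOdep$ to \emph{open} formulas rather than to sentences: one must verify that the $\ESO$-formula produced by the lemma is truly downwards closed and empty-team-safe (both argued above) and that the correspondence theorem is being applied to the team property defined by $\varphi^{*}(R)\vee\forall\tuple{x}\neg R(\tuple{x})$ with $R$ interpreted as $X(\tuple{x})$. Once that bookkeeping is in place the conclusion is immediate, which is precisely why the proof can be kept short: all the real work has already been carried out inside \cref{lem:translation-FOdepF-to-ESO}.
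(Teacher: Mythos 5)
Your proposal is correct and takes essentially the same route as the paper: the non-trivial inclusion $\FO(\depformat,\F)\subseteq\FOdep$ via \cref{lem:translation-FOdepF-to-ESO} followed by the Kontinen--V\"a\"an\"anen back-translation of downwards closed $\ESO$-formulas over the team-as-relation into dependence logic. The only immaterial divergence is in the exclusion case, where you re-run the inductive translation for $\FO(\excl,\F)$ directly (note that what changes there is the $\ESO$ encoding of the exclusion atom itself, since the $\F$ case of the induction is uniform), whereas the paper simply invokes the interdefinability of exclusion and dependence atoms; both work.
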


\begin{proof}
	This follows by \cref{lem:translation-FOdepF-to-ESO} as any downwards closed  $\ESO$-formula $\varphi(R)$ can be translated into a dependence logic formula satisfying \eqref{eqn:translation-FOdepF-to-ESO}, see \cite{Kontinen:2009,Kontinen:2011}. The claim for exclusion logic follows from the fact that exclusion atoms can be  expressed using dependence atoms and vice versa.
\end{proof}  

\begin{corollary}\label{cor:F-safe-independency-logic}
	The operator $\F$ does not increase the expressive power of independence  logic.	
\end{corollary}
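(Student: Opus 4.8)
The plan is to mirror the argument used for dependence and exclusion logic in \cref{lem:translation-FOdepF-to-ESO} and \cref{prp:F-safe-dependency-logic}, exploiting the fact that independence logic captures \emph{all} of $\ESO$ rather than only its downwards closed fragment. Concretely, I would first establish the analogue of \cref{lem:translation-FOdepF-to-ESO}: for every $\FO(\ind,\F)$-formula $\varphi(\tuple{x})$ with free variables among $\tuple{x}=x_{1},\dots,x_{n}$ there is an $\ESO$-formula $\varphi^{*}\equiv\varphi^{*}(R)$ with $\ar(R)=n$ such that
\[
	\mathfrak{M}\vDash_{X}\varphi(\tuple{x})\quad\text{iff}\quad\left<\mathfrak{M},R\coloneqq X(\tuple{x})\right>\vDash\varphi^{*}(R)\vee\forall\tuple{x}\neg R(\tuple{x})
\]
for all models $\mathfrak{M}$ and all teams $X$ with $\dom(X)\supseteq\left\{\tuple{x}\right\}$. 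Crucially, and in contrast to the dependence case, I would impose \emph{no} downwards closure requirement on $\varphi^{*}$, since independence atoms are not downwards closed and independence logic is equi-expressive with unrestricted $\ESO$.

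The translation is by structural induction on $\varphi$. For first-order literals, for independence atoms, and for the connectives and quantifiers, I would reuse the standard translation of independence logic into $\ESO$ (see \cite{Gradel:2013,Galliani:2012}), which shows that these cases stay inside $\ESO$. For the case $\varphi=\F\psi$, the very same witness formula already used in \cref{thm:GFPpos-expresses-FOinclF} and \cref{lem:translation-FOdepF-to-ESO} works: one takes $\varphi^{*}(R)=\forall\tuple{y}(\neg R\tuple{y}\vee\theta(\tuple{x},\tuple{y}))$, where $\theta$ is obtained from $\psi^{*}$ by replacing every atom $S(\tuple{t})$ with $\tuple{y}=\tuple{t}$. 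Since $\ESO$ is closed under first-order universal quantification (the inner existential second-order quantifiers can be pulled outside the $\forall\tuple{y}$ by adding $\tuple{y}$ as an extra argument), $\varphi^{*}$ remains in $\ESO$ whenever $\psi^{*}$ is, and the chain of equivalences is exactly the one already verified in the inclusion case.

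I would then conclude as follows. Given an $\FO(\ind,\F)$-sentence, the lemma produces an equivalent $\ESO$-sentence; conversely, by the classical fact that $\FOind\equiv\ESO$ (Galliani, Grädel--Väänänen), every such $\ESO$-sentence is equivalent to an $\FOind$-sentence, and moreover the translation of an arbitrary $\ESO$-formula back into independence logic can be taken to respect the team–relation correspondence of \eqref{eqn:translation-FOdepF-to-ESO}, with the empty team handled by the disjunct $\forall\tuple{x}\neg R(\tuple{x})$. This yields $\FO(\ind,\F)\subseteqq\FOind$, and the reverse containment is immediate since $\FOind\subseteq\FO(\ind,\F)$.

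The main obstacle, exactly as in the dependence case, lies in the backward direction of $\FOind\equiv\ESO$: one must verify that an $\ESO$-formula can be translated into an independence logic formula satisfying the team-semantic condition of \eqref{eqn:translation-FOdepF-to-ESO}, rather than merely agreeing on sentences. The decisive simplification over dependence logic is that no downwards closure hypothesis is needed on $\varphi^{*}$, which is precisely why the argument carries over despite independence atoms failing to be downwards closed.
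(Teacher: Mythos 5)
Your proposal is correct and follows essentially the same route as the paper: the paper's proof likewise invokes the analogue of \cref{lem:translation-FOdepF-to-ESO} without the downwards closure assumption (reusing the same witness formula for the $\F$ case) together with its converse from \cite{Galliani:2012}, i.e., the fact that independence logic captures all of $\ESO$ on the level of team properties. Your write-up merely spells out in more detail what the paper leaves implicit, including the correct observations about $\ESO$ being closed under first-order universal quantification and the empty team being handled by the disjunct $\forall\tuple{x}\neg R(\tuple{x})$.
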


\begin{proof}
	The proof is analogous to the case of dependence logic using the analogue of \cref{lem:translation-FOdepF-to-ESO} (without the assumption of downward closure) ans its converse shown in \cite{Galliani:2012}.
\end{proof}   

\subsection{Adding \texorpdfstring{$\F$}{F} to fragments of dependence logic}

We begin by showing that the flattening operator reduces to the flattening of a formula for existential dependence logic formulas. 

\begin{proposition}\label{existentialDEP}
	If $\varphi\in\FOdep$ is existential %
	then $\F\varphi\equiv\varphi^{\f}$.
\end{proposition}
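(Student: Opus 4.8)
The plan is to exploit that both $\F\varphi$ and $\varphi^{\f}$ are flat, so that the equivalence $\F\varphi\equiv\varphi^{\f}$ reduces to checking agreement on singleton teams. Indeed, $\F\varphi$ is flat by \cref{pty:flattening-operator-properties:F-phi-flat}, and $\varphi^{\f}$ is a first-order formula and hence flat. Two flat formulas with the same free variables are logically equivalent precisely when they are satisfied by the same singleton teams, and since $\mathfrak{M}\vDash_{\{s\}}\F\varphi$ iff $\mathfrak{M}\vDash_{\{s\}}\varphi$ by the semantics of $\F$, it suffices to prove
\[
	\mathfrak{M}\vDash_{\{s\}}\varphi\quad\text{iff}\quad\mathfrak{M}\vDash_{s}\varphi^{\f}
\]
for all models $\mathfrak{M}$ and all assignments $s$ with $\dom(s)\supseteq\FV(\varphi)$. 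I would prove this by structural induction on the existential formula $\varphi$, where \emph{existential} means that, in negation normal form, $\varphi$ contains no universal quantifier.

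The left-to-right implication never needs the existential hypothesis: it is the restriction to singletons of $\F\varphi\vDash\varphi^{\f}$, which holds for every dependence formula by the \hyperref[ax:entailment-axiom]{entailment axiom} together with the flatness of $\varphi^{\f}$. The base cases are immediate: a first-order literal $\alpha$ satisfies $\alpha^{\f}\equiv\alpha$ and is evaluated on a singleton exactly by Tarskian semantics, while for a dependence atom one has $(\dep(\tuple{t}_{1},t_{2}))^{\f}\equiv\top$ by \cref{prp:flattening-atoms} and every singleton team satisfies every dependence atom. For $\varphi=\psi_{1}\land\psi_{2}$ the claim follows directly from the semantics of $\land$ and the induction hypothesis. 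For $\varphi=\psi_{1}\lor\psi_{2}$, any cover $\{s\}=Y\cup Z$ must assign $\{s\}$ to at least one of $Y,Z$, which handles the forward direction; for the converse I would split off the empty team, using that dependence logic has the empty team property so that the unused disjunct is satisfied by $\varnothing$.

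The existential quantifier is the case where the hypothesis actually does its work, and it splits into two opposite uses of the closure properties of dependence logic. For $\varphi=\exists x\psi$, if $\mathfrak{M}\vDash_{\{s\}}\exists x\psi$ then some supplementation $H$ yields $\mathfrak{M}\vDash_{\{s\}[H/x]}\psi$; choosing any $m\in H(s)$ and invoking \emph{downward closure} of dependence logic restricts this to the singleton subteam $\{s[m/x]\}$, and the induction hypothesis then produces the first-order witness $m$ for $\exists x\psi^{\f}$. Conversely, a Tarskian witness $m$ for $\exists x\psi^{\f}$ gives $\mathfrak{M}\vDash_{s[m/x]}\psi^{\f}$, hence $\mathfrak{M}\vDash_{\{s[m/x]\}}\psi$ by the induction hypothesis; taking the \emph{singleton} supplementation $H(s)=\{m\}$ keeps $\{s\}[H/x]=\{s[m/x]\}$ a singleton and yields $\mathfrak{M}\vDash_{\{s\}}\exists x\psi$.

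The main obstacle, and the reason the statement is restricted to existential formulas, is precisely the universal case, which the argument cannot absorb. A universal quantifier sends the singleton $\{s\}$ to the full duplication $\{s\}[M/x]$, a team of size $\left|M\right|$; from $\mathfrak{M}\vDash_{s}(\forall x\psi)^{\f}$ one recovers $\psi^{\f}$ at each $s[m/x]$, hence $\psi$ on each singleton $\{s[m/x]\}$, but reassembling $\mathfrak{M}\vDash_{\{s\}[M/x]}\psi$ from these pieces would require union closure, which dependence logic does not possess. This is exactly what breaks the implication $\varphi^{\f}\vDash\F\varphi$ in the presence of universal quantifiers, so the existential restriction is genuinely essential and not a mere convenience.
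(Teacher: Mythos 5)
Your proof is correct and takes essentially the same route as the paper's: reduce the equivalence of the two flat formulas $\F\varphi$ and $\varphi^{\f}$ to singleton teams, then show $\mathfrak{M}\vDash_{\left\{s\right\}}\varphi$ iff $\mathfrak{M}\vDash_{s}\varphi^{\f}$ by induction on $\varphi$, using downwards closure for the forward existential case and singleton-valued supplement functions for the converse. You simply spell out the details (including why the universal case fails) that the paper's proof leaves as a sketch.
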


\begin{proof} Recall that  $\mathfrak{M}\vDash_{X}\F\varphi$ iff $\mathfrak{M}\vDash_{\{s\}}\varphi$, for all $s\in X$. Now it is easy to show using induction on $\varphi$ using downwards closure (and singleton-valued supplements functions) that $\mathfrak{M}\vDash_{\{s\}}\varphi$ iff $\mathfrak{M}\vDash_{\{s\}}\varphi^f$ for any singleton team $\{s\}$, from which the claim follows. 
\end{proof}

Next we turn to the so-called constancy logic the formulas of which has the following normal form.

\begin{proposition}[\cite{Galliani:2012}]%
	\label{prp:constancy-logic-normal-form}%
	Let $\varphi(\tuple{x})\in\FOcons$. Then $\varphi$ is logically equivalent to a constancy logic formula of the form
	\[
		\exists\tuple{y}\left(\bigwedge_{i}\dep(y_{i})\land\alpha(\tuple{x},\tuple{y})\right)
	\]
	for some first-order formula $\alpha$.
\end{proposition}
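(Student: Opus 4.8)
The plan is to argue by structural induction on $\varphi\in\FOcons$, maintaining the invariant that every subformula is logically equivalent to one of the announced shape $\exists\tuple{y}\,(\bigwedge_i\dep(y_i)\land\alpha)$ with $\alpha$ first-order. The guiding observation is that such a normal form has a very concrete meaning in a team $X$: since $\alpha$ is first-order and hence flat, the formula holds in $X$ exactly when there is a single tuple of constants $\tuple{a}$ — chosen globally for the whole team — with $\mathfrak{M}\vDash_{s}\alpha(\tuple{x},\tuple{a})$ for every $s\in X$. Thus the constancy prefix supplies a global existential witness while the matrix $\alpha$ is tested pointwise. For the base cases, a first-order literal is already in normal form with empty prefix, and a constancy atom is rewritten as $\dep(t)\equiv\exists y\,(\dep(y)\land t=y)$ with $y$ fresh; composite terms inside atoms are first removed by existential quantification over new variables, as noted after \cref{prp:flattening-atoms}.

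The quantifier and conjunction cases are then straightforward because first-order quantifiers can simply be absorbed into the matrix. Writing $\varphi\equiv\exists\tuple{y}(\bigwedge_i\dep(y_i)\land\alpha)$, unwinding the lax semantics yields $\exists v\,\varphi\equiv\exists\tuple{y}(\bigwedge_i\dep(y_i)\land\exists v\,\alpha)$ and $\forall v\,\varphi\equiv\exists\tuple{y}(\bigwedge_i\dep(y_i)\land\forall v\,\alpha)$; the only point to check is that the global choice of $\tuple{a}$ commutes with the supplementation (resp.\ duplication) of $v$, and that nonemptiness of the supplement $H(s)$ lets the per-assignment existential over $v$ collapse to a Tarskian $\exists v$. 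For conjunction, after renaming so the two constancy blocks are disjoint, one merges the prefixes and conjoins the matrices, $\varphi_1\land\varphi_2\equiv\exists\tuple{y}\tuple{z}\,\bigl(\bigwedge_i\dep(y_i)\land\bigwedge_j\dep(z_j)\land(\alpha_1\land\alpha_2)\bigr)$; the two tuples of constants are chosen independently because $\alpha_1$ does not mention $\tuple{z}$ nor $\alpha_2$ mention $\tuple{y}$.

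The main obstacle is disjunction, whose team semantics splits the team as $X=Y\cup Z$ — something with no literal counterpart in the target form. The claim is $\varphi_1\lor\varphi_2\equiv\exists\tuple{y}\tuple{z}\,\bigl(\bigwedge_i\dep(y_i)\land\bigwedge_j\dep(z_j)\land(\alpha_1\lor\alpha_2)\bigr)$, and flatness is what makes it go through. In the forward direction I take the witnessing split together with the global constants $\tuple{a},\tuple{b}$ for $Y\vDash\varphi_1$ and $Z\vDash\varphi_2$, and observe that each $s\in X$ lies in $Y$ or in $Z$, hence satisfies $\alpha_1(\tuple{x},\tuple{a})$ or $\alpha_2(\tuple{x},\tuple{b})$. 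Conversely, given constants with $\mathfrak{M}\vDash_{s}\alpha_1(\tuple{x},\tuple{a})\lor\alpha_2(\tuple{x},\tuple{b})$ for all $s\in X$, I recover the split by putting $Y=\{s\in X:\mathfrak{M}\vDash_{s}\alpha_1(\tuple{x},\tuple{a})\}$ and $Z=\{s\in X:\mathfrak{M}\vDash_{s}\alpha_2(\tuple{x},\tuple{b})\}$; here I use the empty-team property (in case a piece is empty) and the fact that the constant supplement $\tuple{a}$, resp.\ $\tuple{b}$, witnesses $Y\vDash\varphi_1$, resp.\ $Z\vDash\varphi_2$. The crux, and the place to be careful, is that $\alpha_1\lor\alpha_2$ is first-order and therefore flat, so its satisfaction under the constant supplement $X[\tuple{a}\tuple{b}/\tuple{y}\tuple{z}]$ is equivalent to pointwise Tarskian satisfaction — which is precisely the condition delivered by the split. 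Throughout, the only bookkeeping needed is renaming the constancy variables apart to avoid capture.
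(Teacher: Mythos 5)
Your proof is correct: the paper itself does not prove this proposition but imports it from \cite{Galliani:2012}, and your structural induction --- reading the normal form as a globally chosen tuple of constants tested pointwise against a flat first-order matrix, absorbing quantifiers and conjunctions into the matrix after renaming the constancy variables apart, and handling the team-splitting disjunction via flatness of $\alpha_1\lor\alpha_2$ together with the empty team property --- is essentially the standard argument from that reference. The only bookkeeping worth making explicit is that when one side of the split $X=Y\cup Z$ is empty you must still supply some tuple of constants for its block of the merged prefix, which is harmless since domains are assumed nonempty.
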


\begin{proposition}\label{prp:F-safe-constancy-logic}\label{prp:F-safe-zero-anonymity-logic}
		The operator $\F$ does not increase the expressive power of constancy logic, that is $\FO(\consformat,\F)\equiv\FOcons$.

\end{proposition}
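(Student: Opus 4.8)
The plan is to prove the two inclusions of the claimed equivalence $\FO(\consformat,\F)\equiv\FOcons$. One direction is immediate, since $\FOcons$ is a syntactic fragment of $\FO(\consformat,\F)$; so the whole task reduces to showing that every $\FO(\consformat,\F)$-formula is logically equivalent to a constancy logic formula. I would carry this out by structural induction on the formula. The cases of first-order literals and constancy atoms are trivial (they already lie in $\FOcons$), and the cases of the connectives $\land,\lor$ and the quantifiers $\exists,\forall$ follow at once from the induction hypothesis together with the closure of $\FOcons$ under these operations. The only genuine case is $\varphi=\F\psi$.

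For this case, by the induction hypothesis $\psi$ is equivalent to a constancy logic formula, which by \cref{prp:constancy-logic-normal-form} I may assume is in the normal form $\exists\tuple{y}\bigl(\bigwedge_i\dep(y_i)\land\alpha(\tuple{x},\tuple{y})\bigr)$ with $\alpha$ first-order. The heart of the argument is the claim that, on every \emph{singleton} team, this formula coincides with the first-order formula $\exists\tuple{y}\,\alpha$. To see this, I would fix a singleton $\{s\}$ and analyse the front existential block: evaluating it amounts to choosing supplementation functions $H_j\colon\{s\}\to\powersetstar(M)$, and the resulting team is the product team consisting of all $s[\tuple{b}/\tuple{y}]$ with $b_j\in H_j(s)$. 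Imposing $\dep(y_j)$ on this team forces each $H_j(s)$ to be a singleton, so the team collapses to a single assignment $s[\tuple{b}/\tuple{y}]$, on which the flat first-order $\alpha$ holds iff $\mathfrak{M}\vDashTarski_{s[\tuple{b}/\tuple{y}]}\alpha$. Hence $\mathfrak{M}\vDash_{\{s\}}\psi$ iff $\mathfrak{M}\vDashTarski_s\exists\tuple{y}\,\alpha$, the converse inclusion being witnessed by letting $H_j(s)$ be the singleton containing a Tarskian witness for $y_j$. Consequently $\mathfrak{M}\vDash_X\F\psi$ iff $\mathfrak{M}\vDashTarski_s\exists\tuple{y}\,\alpha$ for every $s\in X$, which by flatness of the first-order formula $\exists\tuple{y}\,\alpha$ is exactly $\mathfrak{M}\vDash_X\exists\tuple{y}\,\alpha$. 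Thus $\F\psi\equiv\exists\tuple{y}\,\alpha\in\FOcons$, completing the induction. As a shortcut, since the normal form is existential, one can instead appeal directly to \cref{existentialDEP} to get $\F\psi\equiv\psi^{\f}$, and then note $\psi^{\f}\equiv\exists\tuple{y}\,\alpha$ via the distributivity axiom together with $(\dep(y_i))^{\f}\equiv\top$.

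The main obstacle is precisely this singleton reduction in the $\F$-case: a constancy logic formula need not be flat, so one cannot naively commute $\F$ past the inner constancy atoms, and a direct induction on singletons would stumble on the universal quantifier, which duplicates a singleton into a non-singleton team. What makes it work is that \cref{prp:constancy-logic-normal-form} front-loads all the non-first-order content into a block of existential quantifiers guarded by constancy atoms; on a singleton team these guards force the existentials to behave like ordinary Tarskian existentials, so the entire formula degenerates to first-order satisfaction. Everything else is routine bookkeeping handled by the induction hypothesis.
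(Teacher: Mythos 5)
Your proposal is correct and follows essentially the same route as the paper: structural induction with the only substantive case being $\varphi=\F\psi$, reduction of $\psi$ to the normal form of \cref{prp:constancy-logic-normal-form}, and the observation that the constancy-guarded existential block behaves Tarskian on singleton teams, yielding $\F\psi\equiv\exists\tuple{y}\,\alpha$. Indeed, your \quoexpr{shortcut} via \cref{existentialDEP} is exactly the paper's proof, and your direct singleton analysis merely unfolds that proposition for this special case.
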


\begin{proof}
We prove the claim using induction on $\varphi \in \FO(\consformat,\F)$. We consider only the case where $\varphi=\F \psi$. By the induction hypothesis and the previous lemma, 
we get that $\psi$ is equivalent with some formula in the normal form:
	\[
		\exists\tuple{y}\left(\bigwedge_{i}\dep(y_{i})\land\alpha(\tuple{x},\tuple{y})\right)
	\]
	for some first-order formula $\alpha$. It now follows from Proposition \ref{existentialDEP}  that $\varphi\equiv \exists\tuple{y}\alpha$.
\end{proof} 

\subsection{Adding \texorpdfstring{$\F$}{F} to \texorpdfstring{$\FO$}{FO} with the Boolean negation}

In this section we consider the effects of adding $\F$ to $\FOneg$ (also denoted by $\NEG$ in the following).

\begin{definition}
	Let $\varphi$ be a $\sigma$-formula, $\mathfrak{M}$ a $\sigma$-model and $X$ a team over $\mathfrak{M}$ such that $\dom(X)\supseteq\FV(\varphi)$. We define the truth of the formula $\cneg\varphi$ (the Boolean negation of $\varphi$) in the model $\mathfrak{M}$ and the team $X$ as:
	\[
		\mathfrak{M}\vDash_{X}\cneg\varphi\iff\mathfrak{M}\nvDash_{X}\varphi
	\]
\end{definition}

We will use the following notations from \cite{Luck:2020}:
\begin{itemize}
	\item $\E\varphi\defeq\cneg\neg\varphi$, meaning  that at least one assignment in the considered team satisfies $\varphi$;
	
	\item $\varphi\ovee\psi\defeq\cneg(\cneg\varphi\land\cneg\psi)$, so that $\mathfrak{M}\vDash_{X}\varphi\ovee\psi$ iff $\mathfrak{M}\vDash_{X}\varphi$ or $\mathfrak{M}\vDash_{X}\psi$, which corresponds to the  Boolean disjunction (a.k.a. intuitionistic disjunction). 
\end{itemize}
We will utilize the following normal form theorem.
\begin{proposition}[\cite{Luck:2020}]
	\label{prp:negation-logic-normal-form}%
	Let $\varphi(\tuple{x})\in\FOneg$. Then $\varphi$ is logically equivalent to a formula of the form
	\[
		\bigovee_{i\in I}\left(\alpha_{i}\land\bigwedge_{j\in J_{i}}\E\beta_{i,j}\right)
	\]
	for some finite sets $I$ and $J_{i}$ and for first-order formulas $\alpha_{i}$ and $\beta_{i,j}$ for all $i\in I$ and $j\in J_{i}$.
\end{proposition}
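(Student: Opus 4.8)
The plan is to prove the claim by structural induction on $\varphi\in\FOneg$, showing that the class of formulas equivalent to one in the stated normal form is closed under every syntactic operation of $\FOneg$: first-order literals, the team connectives $\land,\lor$, the quantifiers $\exists,\forall$, and the Boolean negation $\cneg$. Call a formula of the shape $\alpha\land\bigwedge_{j}\E\beta_j$ (with $\alpha$ and the $\beta_j$ first-order) a \emph{clause}, so that the target normal form is exactly a finite $\bigovee$ of clauses.

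The first preparatory step is to record the distributive laws that reduce every inductive case to the level of a single clause. Using the semantics one checks that each of $\land$, $\lor$, $\exists x$ and $\forall x$ distributes over $\bigovee$; for instance $(\chi_1\ovee\chi_2)\lor\eta\equiv(\chi_1\lor\eta)\ovee(\chi_2\lor\eta)$, since a cover of $X$ witnessing the left-hand side witnesses one disjunct on the right, and conversely. Granting these, it suffices to show that each operation, applied to clauses, again produces a $\bigovee$ of clauses. The base case is immediate, a first-order literal $\alpha$ being the clause $\alpha$ with empty $\E$-part. Conjunction is equally easy, since $\land$ is Boolean conjunction on teams, so $(\alpha\land\bigwedge_j\E\beta_j)\land(\gamma\land\bigwedge_l\E\delta_l)$ is the single clause $(\alpha\land\gamma)\land\bigwedge_j\E\beta_j\land\bigwedge_l\E\delta_l$.

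Next I would treat the genuinely team-theoretic operations $\lor,\exists,\forall$ at clause level, which amounts to computing in each case what combination of ``all assignments satisfy $\dots$'' and ``some assignment satisfies $\dots$'' is forced. Writing $C=\alpha\land\bigwedge_j\E\beta_j$ and $D=\gamma\land\bigwedge_l\E\delta_l$, analysing the cover $X=Y\cup Z$ with $Y=\{s\in X:\mathfrak{M}\vDash_{\{s\}}\alpha\}$ and $Z=\{s\in X:\mathfrak{M}\vDash_{\{s\}}\gamma\}$ in the nontrivial direction gives
\[
	C\lor D\equiv(\alpha\lor\gamma)\land\bigwedge_j\E(\alpha\land\beta_j)\land\bigwedge_l\E(\gamma\land\delta_l),
\]
where $\alpha\lor\gamma$ is the (flat) first-order disjunction. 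Similarly, analysing the supplementation $X[H/x]$ and the duplication $X[M/x]$ yields
\[
	\exists x\,C\equiv(\exists x\,\alpha)\land\bigwedge_j\E\exists x(\alpha\land\beta_j),\qquad
	\forall x\,C\equiv(\forall x\,\alpha)\land\bigwedge_j\E\exists x\,\beta_j,
\]
the point for $\exists$ being that each $\E$-requirement can be met by adjoining to some $H(s)$ a single witness that also satisfies $\alpha$, so the distinct requirements never interfere. In each case the result is a single clause, which the distributive laws lift to the full $\bigovee$.

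The heart of the argument, and the step I expect to be the main obstacle, is the Boolean negation case, since this is the one connective that makes $\FOneg$ more than a flat team logic. Here De Morgan over $\ovee$ gives $\cneg\bigovee_i C_i\equiv\bigwedge_i\cneg C_i$ and $\cneg(\alpha\land\bigwedge_j\E\beta_j)\equiv\cneg\alpha\ovee\bigovee_j\cneg\E\beta_j$, so everything hinges on the two dualities
\[
	\cneg\alpha\equiv\E\neg\alpha,\qquad \cneg\E\beta\equiv\neg\beta,
\]
which hold because $\mathfrak{M}\vDash_X\alpha$ means every assignment satisfies $\alpha$ while $\mathfrak{M}\vDash_X\E\beta$ means some assignment satisfies $\beta$. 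Thus each $\cneg C_i$ is already a $\bigovee$ of clauses (the piece $\E\neg\alpha$ and the flat literals $\neg\beta_j$), and $\cneg\varphi$ becomes a \emph{conjunction} of such $\bigovee$'s; distributing $\land$ over $\ovee$ and then collapsing each resulting conjunction of clauses into one clause via the conjunction rule above returns the normal form. The care needed is pure bookkeeping, tracking the dualities through the nested distribution, but it is exactly this case that certifies closure of the normal form under the full logic. Throughout one should keep an eye on the empty team, where $\E\beta$ fails and a purely universal clause succeeds; this is automatically accommodated by permitting clauses with empty $\E$-part among the disjuncts of the $\bigovee$.
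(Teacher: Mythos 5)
Your proof is correct, but there is nothing in the paper to compare it against: the proposition is imported from L\"uck's work (the paper cites \cite{Luck:2020} and gives no proof), so your self-contained induction is a genuine supplement rather than a divergence. I checked your clause-level identities and they all hold: for $C=\alpha\land\bigwedge_j\E\beta_j$ and $D=\gamma\land\bigwedge_l\E\delta_l$ one indeed has $C\lor D\equiv(\alpha\lor\gamma)\land\bigwedge_j\E(\alpha\land\beta_j)\land\bigwedge_l\E(\gamma\land\delta_l)$ — and the relativization of the witnesses to $\alpha$, resp.\ $\gamma$, is essential, since the witness must lie in the corresponding half of the cover $Y=\{s\in X : \mathfrak{M}\vDash_{\{s\}}\alpha\}$, $Z=\{s\in X : \mathfrak{M}\vDash_{\{s\}}\gamma\}$ — as well as $\exists x\,C\equiv\exists x\,\alpha\land\bigwedge_j\E\exists x(\alpha\land\beta_j)$ and $\forall x\,C\equiv\forall x\,\alpha\land\bigwedge_j\E\exists x\,\beta_j$; the dualities $\cneg\alpha\equiv\E\neg\alpha$ and $\cneg\E\beta\equiv\neg\beta$ follow from flatness of first-order formulas and behave correctly on the empty team (both sides of the first fail on $\varnothing$). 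Two points are worth making explicit in a write-up. First, your $\exists$-case tacitly relies on the lax semantics the paper adopts: since $H(s)$ is a \emph{set} of values, taking $H(s)=\{m\in M : \mathfrak{M}\vDashTarski_{s[m/x]}\alpha\}$ serves the universal part and all $\E$-requirements simultaneously; under strict semantics this step would need different bookkeeping. Second, in the $\cneg$-case the disjunct $\E\neg\alpha_i$ must be padded to a clause with trivial universal part (take $\alpha\coloneqq\top$), and the degenerate cases after distributing $\land$ over $\ovee$ should be noted: an empty conjunction collapses to the clause $\top$, and an unsatisfiable formula (which a nonempty $\bigovee$ of clauses can still express, \eg{} the single clause $\E\bot$, false on every team including $\varnothing$) causes no trouble, so the index sets can always be kept nonempty as the statement implicitly assumes.
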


\begin{proposition}\label{prp:F-safe-negation-logic}
	$\FO(\cneg,\F)\equiv\FOneg$.
\end{proposition}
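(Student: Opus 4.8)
The plan is to establish the nontrivial inclusion $\FO(\cneg,\F)\subseteq\FOneg$ by structural induction on $\varphi\in\FO(\cneg,\F)$; the reverse inclusion is immediate, since $\FO(\cneg,\F)$ is by definition an extension of $\FOneg=\FO(\cneg)$. For first-order literals and for the cases of the connectives $\land,\lor$, the quantifiers $\exists,\forall$, and the Boolean negation $\cneg$, the induction hypothesis lets me replace each immediate subformula by an equivalent $\FOneg$-formula; as $\FOneg$ is closed under all of these operations, the result is again in $\FOneg$. Thus the only case that requires genuine work is $\varphi=\F\psi$, and the argument here mirrors the earlier flattening-case analyses (as in \cref{prp:F-safe-constancy-logic}), the new ingredient being the normal form of \cref{prp:negation-logic-normal-form}.

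So suppose $\varphi=\F\psi$ with $\psi\in\FO(\cneg,\F)$. By the induction hypothesis $\psi$ is equivalent to an $\FOneg$-formula, and applying \cref{prp:negation-logic-normal-form} I may assume
\[
	\psi\equiv\bigovee_{i\in I}\Bigl(\alpha_i\land\bigwedge_{j\in J_i}\E\beta_{i,j}\Bigr),
\]
for first-order $\alpha_i$ and $\beta_{i,j}$. Since the semantics of $\F$ refers only to the singleton subteams, the crucial observation is that this normal form collapses to a first-order condition on a one-element team $\{s\}$: the Boolean disjunction $\bigovee$ becomes an ordinary disjunction, each $\E\beta_{i,j}$ holds at $\{s\}$ precisely when $s$ satisfies $\beta_{i,j}$ in the Tarskian sense (one assignment being all that is available to witness $\E$), and each first-order $\alpha_i$ is flat. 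Setting
\[
	\gamma\;:=\;\bigvee_{i\in I}\Bigl(\alpha_i\land\bigwedge_{j\in J_i}\beta_{i,j}\Bigr),
\]
a genuine first-order formula, I would then verify that $\mathfrak{M}\vDash_{\{s\}}\psi$ iff $\mathfrak{M}\vDash_s\gamma$ for every assignment $s$.

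The induction is then closed by combining the definition of $\F$ with the flatness of first-order formulas:
\[
	\mathfrak{M}\vDash_X\F\psi\iff\forall s\in X\colon\mathfrak{M}\vDash_{\{s\}}\psi\iff\forall s\in X\colon\mathfrak{M}\vDash_s\gamma\iff\mathfrak{M}\vDash_X\gamma,
\]
where the last step is exactly the flatness of $\gamma$. Hence $\F\psi\equiv\gamma\in\FO\subseteq\FOneg$, as required. I expect the main obstacle to be the careful justification of the singleton reduction in the middle paragraph — in particular confirming that $\E\beta_{i,j}$ and $\ovee$ really do behave as classical Tarskian satisfaction and classical disjunction on a one-element team; once that reduction is in place, everything else is bookkeeping.
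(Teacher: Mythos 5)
Your proposal is correct and takes essentially the same route as the paper's proof: structural induction in which the only nontrivial case is $\varphi=\F\psi$, an appeal to the normal form of \cref{prp:negation-logic-normal-form}, and the observation that on singleton teams $\E\beta$ collapses to Tarskian satisfaction of $\beta$ and $\ovee$ to ordinary disjunction, yielding the same first-order equivalent $\bigvee_{i\in I}\bigl(\alpha_i\land\bigwedge_{j\in J_i}\beta_{i,j}\bigr)$. The singleton reduction you flag as the main obstacle is exactly the step the paper dispatches as a routine application of the semantics, so no gap remains.
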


\begin{proof}
	To prove the non-trivial direction ($\Rightarrow$), we need to show that for all $\varphi(\tuple{x})\in\FO(\cneg,\F)$, there is $\varphi^{*}(\tuple{x})\in\FOneg$ such that for all models $\mathfrak{M}$ and for all teams $X$ over $\mathfrak{M}$ such that $\dom(X)\supseteq\left\{\tuple{x}\right\}$, the following holds
	\[
\mathfrak{M}\vDash_{X}\varphi(\tuple{x})\iff\mathfrak{M}\vDash_{x}\varphi^{*}(\tuple{x})
	\]
	We prove it by structural induction on $\varphi$. The only interesting case is when $\varphi(\tuple{x})=\F\psi(\tuple{x})$.
	
	According to \cref{prp:negation-logic-normal-form}, we get that
	\[
	   \psi(\tuple{x})\equiv\bigovee_{i\in I}\left(\alpha_{i}\land\bigwedge_{j\in J_{i}}\E\beta_{i,j}\right)
	\]
	for some finite sets $I$ and $J_{i}$ and for first-order formulas $\alpha_{i}$ and $\beta_{i,j}$ for all $i\in I$ and $j\in J_{i}$. It is now a routine application of the semantics to show that $\varphi$ is equivalent with the first-order formula 
	\[
	   \bigvee_{i\in I}\left(\alpha_{i}\land\bigwedge_{j\in J_{i}}\beta_{i,j}\right),
	\]
	using the facts that in singleton teams $\E\beta$ is equivalent with $\beta$ and that for all $s\in X$
	\[
		\mathfrak{M}\vDash_{\{s\}} \bigovee_i \varphi_i\quad\text{iff}\quad\mathfrak{M}\models_{X} \bigvee_i \varphi_i,
	\]
	for  first-order formulas $\varphi_i$.
\end{proof}

\section{Summing up expressive powers}

We sum up all the previous results in the following \cref{fig:sum-up}. It shows when a fragment is comparable to another in terms of expressive power, and when it is the case, it indicates which fragment is more expressive than another. The red edges are the comparisons that are not known at the moment, but which we believe to be true. These questions have some echoes in finite model theory where beyond the seminal and old results about the monadic case (i.e. arity $1$) of existential second order logic, very few separation results are known~\cite{Ajtai:1983,Ajtai:2000,Durand:1998}. Considering the tight connections between arity based fragments of existential second order logic and of some team logics~\cite{Durand:2012}, such a situation is all but surprising.

\begin{figure}
	\centering
	\resizebox{\textwidth}{!}{
	\begin{tikzpicture}
		\def\rrrpos{7.5}
		\def\rrpos {5}
		\def\rpos  {1.5}
		\def\lpos  {-\rpos}
		\def\llpos {-\rrpos}
		\def\lllpos{-\rrrpos}
		
		\node [rounded corners, draw] at (0,0)	   (FO)	   {$\FO\left(\F\right)\overset{\text{\crefabbr{cor:preservation-closure-properties-logics}}}{\equiv}\FO$};

		\node [rounded corners, draw] at (0,1.5)	   (FOneg)	   {$\FO\left(\sim,\F\right)\overset{\text{\crefabbr{prp:F-safe-negation-logic}}}{\equiv}\FO\left(\sim\right)$};
		
		\node [rounded corners, draw] at (\llpos,1.5) (FOanon0)  {$\FOanon[0]\textcolor{red}{\equiv?\FO\left(\anon[0],\F\right)}$};
		\node [rounded corners, draw] at (\llpos,3) (FOincl1)  {$\FOincl[1]$};
		\node [rounded corners, draw] at (\lllpos,3.75)   (FOincl1F) {$\FO\left(\incl[1],\F\right)$};
		\node [rounded corners, draw] at (\lpos,3.75)   (FOincl1cons) {$\FO\left(\incl[1],\consformat\right)$};
		\node [rounded corners, draw] at (\llpos,4.5) (FOanon1)  {$\FOanon[1]$};
		\node [rounded corners, draw] at (\lllpos,5.25) (FOanon1F) {$\FO\left(\anon[1],\F\right)$};
		\node [rounded corners, draw] at (\llpos,6) (FOincl2)  {$\FOincl[2]$};
		\node [rounded corners, draw, red] at (\lllpos,6.75)   (FOincl2F) {$\FO\left(\incl[2],\F\right)$};
		\node [rounded corners, draw, align=center] at (\llpos,8) (FOincl)   {$\FO\left(\incl,\F\right)\overset{\text{\crefabbr{cor:F-safe-inclusion-logic}}}{\equiv}\FO\left(\incl\right)$\\$\equiv\FOanon\overset{\text{\crefabbr{cor:F-safe-anonymity-logic}}}{\equiv}\FO\left(\anon,\F\right)$};
		
		\node [rounded corners, draw] at (\rrpos,1.5) (FOcons)   {$\FO\left(\consformat,\F\right)\overset{\text{\crefabbr{prp:F-safe-constancy-logic}}}{\equiv}\FO\left(\consformat\right)$};
		\node [rounded corners, draw] at (\rrpos,3) (FOexcl1)  {$\FOexcl[1]$};
		\node [rounded corners, draw, red] at (\rrrpos,3.75) (FOexcl1F)  {$\FO\left(\excl[1],\F\right)$};
		\node [rounded corners, draw] at (\rrpos,4.5) (FOdep1)  {$\FOdep[1]$};
		\node [rounded corners, draw, red] at (\rrrpos,5.25) (FOdep1F) {$\FO\left(\depformat[1],\F\right)$};
		\node [rounded corners, draw] at (\rrpos,6) (FOexcl2)  {$\FOexcl[2]$};
		\node [rounded corners, draw, red] at (\rrrpos,6.75)   (FOexcl2F) {$\FO\left(\excl[2],\F\right)$};
		\node [rounded corners, draw, align=center] at (\rrpos,8) (FOdep)	{$\FO\left(\depformat,\F\right)\overset{\text{\crefabbr{prp:F-safe-dependency-logic}}}{\equiv}\FO\left(\depformat\right)$\\$\equiv\FO\left(\excl\right)\overset{\text{\crefabbr{prp:F-safe-exclusion-logic}}}{\equiv}\FO\left(\excl,\F\right)$};

		\node [rounded corners, draw] at (0,9.5)	   (FOind)	   {$\FO\left(\ind,\F\right)\overset{\text{\crefabbr{cor:F-safe-independency-logic}}}{\equiv}\FO\left(\ind\right)$};
		
		\draw (FO)	   -- (FOneg) node [midway, below, sloped] {$<$};
		
		\draw (FO)	   -- (\llpos,0) -- (FOanon0)  node [midway, below, sloped] {$<$};
		\draw (FOanon0)  -- (FOincl1)  node [midway, below, sloped] {$<$};
		\draw (FOincl1)  -- (FOincl1F) node [midway, below, sloped] {$>$} node [midway, above, sloped, font=\scriptsize] {\crefabbr{cor:F-unsafe-unary-inclusion-logic}};
		\draw (FOincl1)  -- (FOincl1cons) node [midway, below, sloped] {$<$} node [midway, above, sloped, font=\scriptsize] {\cite{Galliani:2018}};
		\draw (FOincl1)  -- (FOanon1)  node [midway, below, sloped] {$\leq$};
		\draw (FOanon1)  -- (FOanon1F) node [midway, below, sloped] {$>$} node [midway, above, sloped, font=\scriptsize] {\crefabbr{cor:F-unsafe-unary-anonymity-logic}};
		\draw (FOanon1)  -- (FOincl2)   node [midway, below, sloped] {$\leq$};
		\draw [red] (FOincl2)  -- (FOincl2F) node [midway, below, sloped] {$>?$};
		\draw (FOincl2)  -- (FOincl)   node [midway, below, sloped] {$<$} node [midway, sloped, fill=white] {\dots};
		
		\draw (FO)	   -- (\rrpos,0) -- (FOcons)   node [midway, below, sloped] {$<$};
		\draw (FOcons)   -- (FOexcl1)  node [midway, below, sloped] {$<$};
		\draw [red] (FOexcl1)  -- (FOexcl1F) node [midway, below, sloped] {$<?$};
		\draw (FOexcl1)  -- (FOdep1)   node [midway, below, sloped] {$\leq$};
		\draw [red] (FOdep1)  -- (FOdep1F) node [midway, below, sloped] {$<?$};
		\draw (FOdep1)  -- (FOexcl2)   node [midway, below, sloped] {$\leq$};
		\draw [red] (FOexcl2)  -- (FOexcl2F) node [midway, below, sloped] {$<?$};
		\draw (FOexcl2)  -- (FOdep)   node [midway, below, sloped] {$<$} node [midway, sloped, fill=white] {\dots};

		\draw (FOincl.north)   -- (FOind.west)  node [midway, above, sloped] {$<$};
		\draw (FOdep.north)	-- (FOind.east)  node [midway, above, sloped] {$>$};
	\end{tikzpicture}
	}
	\caption{Expressive power of logics on the level of formulas}
	\label{fig:sum-up}
\end{figure}

\section*{Conclusion}

We have explored the flattening operator $\F$ within the framework of team semantics, a powerful extension of classical first-order logic that allows for a more powerful treatment of dependencies and independencies in logical formulas. The flattening operator was originally introduced by Wilfrid Hodges in 1997 \cite{Hodges:1997:some}, but its potential and implications have remained largely unexplored until now. %

Through our investigation, we demonstrated that the introduction of the flattening operator $\F$ does not add expressive power to logics such as dependence logic, inclusion logic, constancy logic, inconstancy logic and $\FOneg$. It turns out that this is not the case with the fixed arity fragments of these logics. For example, we have seen that unary inclusion logic with the flattening operator $\FO\left(\incl[1],\F\right)$ is strictly more expressive than unary inclusion logic $\FOincl[1]$. We conjecture that the same is true of all the other fragments of fixed arity, too.

The work presented here opens several avenues for future research. One possible direction is the further exploration of the computational complexity associated with the translation of some logic fragments augmented with the flattening operator into $\SAT$ problems, along the lines of  \cite{Durand:2022:modular}. %

\section*{Acknowledgments}

The second author was supported by The Research Council of Finland, grant N°345634. The fourth author was supported by The Research Council of Finland, grant N°322795, and the European Research Council (ERC), grant agreement N°101020762.   

\printbibliography[heading=bibintoc]

@misc{Abramsky:2021:team,
    title     = {Team Semantics and Independence Notions in Quantum Physics}, 
    author    = {Abramsky, Samson and Puljujärvi, Joni and Väänänen, Jouko},
    year      = {2021},
    eprint    = {2107.10817},
    archivePrefix = {arXiv},
    primaryClass  = {math.LO},
}

@article{Ajtai:1983,
    title     = {$\Sigma_1^1$-Formulae on finite structures},
    author    = {Ajtai, Miklós},
    journal   = {Annals of Pure and Applied Logic},
    publisher = {Elsevier},
    volume    = {24},
    number    = {1},
    pages     = {1--48},
    year      = {1983},
    doi       = {10.1016/0168-0072(83)90038-6},
}

@article{Ajtai:2000,
    title     = {The Closure of Monadic NP},
    author    = {Ajtai, Miklós and Fagin, Ronald and Stockmeyer, Larry},
    journal   = {Journal of Computer and System Sciences},
    year      = {2000},
    volume    = {60},
    number    = {3},
    pages     = {660--716},
    doi       = {10.1006/JCSS.1999.1691},
}

@article{Ciardelli:2020,
    title     = {Questions and Dependency in Intuitionistic Logic},
    author    = {Ciardelli, Ivano and Iemhoff, Rosalie and Yang, Fan},
    journal   = {Notre Dame Journal of Formal Logic},
    volume    = {61},
    number    = {1},
    pages     = {75--115},
    year      = {2020},
    doi       = {10.1215/00294527-2019-0033},
}

@article{Ciardelli:2022,
    title     = {Coherence in inquisitive first-order logic},
    author    = {Ciardelli, Ivano and Grilletti, Gianluca},
    year      = {2022},
    journal   = {Annals of Pure and Applied Logic},
    publisher = {Elsevier},
    volume    = {173},
    number    = {9},
    pages     = {103--155},
    doi       = {10.1016/j.apal.2022.103155},
}

@article{Corander:2019,
    title     = {A logical approach to context-specific independence},
    author    = {Corander, Jukka and Hyttinen, Antti and Kontinen, Juha and Pensar, Johan and Väänänen, Jouko},
    year      = {2019},
    journal   = {Annals of Pure and Applied Logic},
    publisher = {Elsevier},
    volume    = {170},
    number    = {9},
    pages     = {975--992},
    doi       = {10.1016/j.apal.2019.04.004},
}

@article{Durand:1998,
    title     = {Subclasses of Binary NP},
    author    = {Durand, Arnaud and Lautemann, Clemens and Schwentick, Thomas},
    journal   = {Journal of Logic and Computation},
    volume    = {8},
    number    = {2},
    pages     = {189--207},
    year      = {1998},
    doi       = {10.1093/logcom/8.2.189},
}

@article{Durand:2012,
    title     = {Hierarchies in Dependence Logic},
    author    = {Durand, Arnaud and Kontinen, Juha},
    year      = {2012},
    journal   = {ACM Transactions on Computational Logic},
    publisher = {Association for Computing Machinery},
    address   = {New York, NY, USA},
    volume    = {13},
    number    = {4},
    articleno = {31},
    pages     = {31:1--31:21},
    doi       = {10.1145/2362355.2362359},
}

@article{Durand:2018:approximation,
    title     = {Approximation and dependence via multiteam semantics},
    author    = {Durand, Arnaud and Hannula, Miika and Kontinen, Juha and Meier, Arne and Virtema, Jonni},
    year      = {2018},
    journal   = {Annals of Mathematics and Artificial Intelligence},
    publisher = {Kluwer Academic Publishers},
    volume    = {83},
    number    = {3-4},
    pages     = {297--320},
    doi       = {10.1007/s10472-017-9568-4},
}

@inproceedings{Durand:2018:probabilistic,
    title     = {Probabilistic Team Semantics},
    author    = {Durand, Arnaud and Hannula, Miika and Kontinen, Juha and Meier, Arne and Virtema, Jonni},
    year      = {2018},
    booktitle = {Foundations of Information and Knowledge Systems},
    series    = {Lecture Notes in Computer Science},
    publisher = {Springer International Publishing},
    pages     = {186--206},
    doi       = {10.1007/978-3-319-90050-6\_11},
}

@misc{Durand:2022:modular,
    title     = {Modular SAT-based techniques for reasoning tasks in team semantics}, 
    author    = {Durand, Arnaud and Kontinen, Juha and Väänänen, Jouko},
    year      = {2022},
    eprint    = {2204.00576},
    archivePrefix = {arXiv},
    primaryClass  = {cs.LO},
}

@article{Durand:2022:tractability,
    title     = {Tractability Frontier of Data Complexity in Team Semantics},
    author    = {Durand, Arnaud and Kontinen, Juha and de Rugy-Altherre, Nicolas and Väänänen, Jouko},
    year      = {2022},
    journal   = {ACM Transactions on Computational Logic},
    volume    = {23},
    number    = {1},
    articleno = {3},
    pages     = {3:1--3:21},
    doi       = {10.1145/3471618},
}

@article{Galliani:2012,
    title     = {Inclusion and exclusion dependencies in team semantics -- On some logics of imperfect information},
    author    = {Galliani, Pietro},
    year      = {2012},
    journal   = {Annals of Pure and Applied Logic},
    volume    = {163},
    number    = {1},
    pages     = {68--84},
    doi       = {10.1016/j.apal.2011.08.005},
}

@InProceedings{Galliani:2013:inclusion,
    title     = {Inclusion logic and fixed point logic},
    author    = {Galliani, Pietro and Hella, Lauri},
    year      = {2013},
    booktitle =	{Computer Science Logic 2013 (CSL 2013)},
    series    = {Leibniz International Proceedings in Informatics (LIPIcs)},
    publisher = {Schloss Dagstuhl -- Leibniz-Zentrum für Informatik},
    address   = {Dagstuhl, Germany},
    volume    = {23},
    pages     = {281--295},
    doi       = {10.4230/LIPIcs.CSL.2013.281},
}

@article{Galliani:2018,
    title     = {Safe Dependency Atoms and Possibility Operators in Team Semantics},
    author    = {Galliani, Pietro},
    year      = {2018},
    month     = {9},
    publisher = {Open Publishing Association},
    journal   = {Electronic Proceedings in Theoretical Computer Science},
    volume    = {277},
    pages     = {58--72},
    doi       = {10.4204/eptcs.277.5}, 
}

@article{Gradel:2013,
    title     = {Dependence and Independence},
    author    = {Grädel, Erich and Väänänen, Jouko},
    year      = {2013},
    journal   = {Studia Logica},
    volume    = {101},
    number    = {2},
    pages     = {339--410},
    doi       = {10.1007/s11225-013-9479-2},
}

@article{Haase:2022,
    title     = {Separation logic and logics with team semantics},
    author    = {Haase, Darion and Grädel, Erich and Wilke, Richard},
    journal   = {Annals of Pure and Applied Logic},
    publisher = {Elsevier},
    volume    = {173},
    number    = {10},
    year      = {2022},
    doi       = {10.1016/j.apal.2021.103063},
}

@article{Hannula:2020:polyteam,
    title     = {Polyteam semantics},
    author    = {Hannula, Miika and Kontinen, Juha and Virtema, Jonni},
    journal   = {Journal of Logic and Computation},
    volume    = {30},
    number    = {8},
    pages     = {1541--1566},
    year      = {2020},
    doi       = {10.1093/logcom/exaa048},
}

@article{Hella:2024,
    title     = {Dimension in team semantics},
    author    = {Hella, Lauri and Luosto, Kerkko and Väänänen, Jouko}, 
    journal   = {Mathematical Structures in Computer Science},
    year      = {2024},
    volume    = {34},
    number    = {5},
    pages     = {410--454},
    doi       = {10.1017/S0960129524000021},
}

@inproceedings{Hodges:1997:some,
    title     = {Some Strange Quantifiers},
    author    = {Hodges, Wilfrid},
    year      = {1997},
    bookTitle = {Structures in Logic and Computer Science: A Selection of Essays in Honor of A. Ehrenfeucht},
    publisher = {Springer Berlin Heidelberg},
    address   = {Berlin, Heidelberg},
    pages     = {51--65},
    doi       = {10.1007/3-540-63246-8_4},
}

@article{Kontinen:2009,
    title     = {On definability in dependence logic},
    author    = {Kontinen, Juha and Väänänen, Jouko},
    year      = {2009},
    journal   = {Journal of Logic, Language and Information},
    publisher = {Springer},
    volume    = {18},
    number    = {3},
    pages     = {317--332},
    doi       = {10.1007/s10849-009-9082-0},
}

@article{Kontinen:2011,
    title     = {Erratum to: On definability in dependence logic},
    author    = {Kontinen, Juha and Väänänen, Jouko},
    journal   = {Journal of Logic, Language and Information},
    volume    = {20},
    year      = {2011},
    number    = {1},
    pages     = {133--134},
    doi       = {10.1007/s10849-010-9125-6},
}

@article{Kontinen:2013:coherence,
    title     = {Coherence and computational complexity of quantifier-free dependence logic formulas},
    author    = {Kontinen, Jarmo},
    journal   = {Studia Logica},
    volume    = {101},
    year      = {2013},
    number    = {2},
    pages     = {267--291},
    doi       = {10.1007/s11225-013-9481-8},
}

@inproceedings{Kontinen:2013:independence,
    title     = {Independence in Database Relations},
    author    = {Kontinen, Juha and Link, Sebastian and Väänänen, Jouko},
    booktitle = {Logic, Language, Information, and Computation},
    series    = {Lecture Notes in Computer Science},
    volume    = {8071},
    pages     = {179--193},
    publisher = {Springer Berlin Heidelberg},
    year      = {2013},
    doi       = {10.1007/978-3-642-39992-3\_17},
}

@phdthesis{Luck:2020,
    title     = {Team logic: axioms, expressiveness, complexity},
    author    = {Lück, Martin},
    year      = {2020},
    series    = {London Mathematical Society Student Texts},
    publisher = {Institutional Repository of the Leibniz University Hannover},
    doi       = {10.15488/9376},
}

@article{Meissner:2022,
    title     = {A first-order framework for inquisitive modal logic},
    author    = {Meissner, Silke and Otto, Martin},
    journal   = {The Review of Symbolic Logic},
    volume    = {15},
    year      = {2022},
    number    = {2},
    pages     = {311--333},
    doi       = {10.1017/S175502032100037X},
}

@book{Paredaens:1989,
    title     = {The structure of the relational database model},
    author    = {Paredaens, Jan and De Bra, Paul and Gyssens, Marc and Van Gucht, Dirk},
    series    = {EATCS Monographs on Theoretical Computer Science},
    volume    = {17},
    publisher = {Springer-Verlag, Berlin},
    year      = {1989},
    pages     = {x+231},
    doi       = {10.1007/978-3-642-69956-6},
}

@phdthesis{Ronnholm:2018:phdthesis,
    title     = {Arity Fragments of Logics with Team Semantics},
    author    = {Rönnholm, Raine},
    year      = {2018},
    publisher = {Tampere University Press},
    url       = {https://urn.fi/URN:ISBN:978-952-03-0912-1},
}

@book{Vaananen:2007:dependence,
    title     = {Dependence logic: a new approach to independence friendly logic},
    author    = {Väänänen, Jouko},
    year      = {2007},
    series    = {London Mathematical Society Student Texts},
    publisher = {Cambridge University Press},
    place     = {Cambridge},
    doi       = {10.1017/CBO9780511611193},
}

@incollection{Vaananen:2017,
    title     = {The logic of approximate dependence},
    author    = {Väänänen, Jouko},
    booktitle = {Rohit Parikh on Logic, Language and Society},
    series    = {Outstanding Contributions to Logic},
    volume    = {11},
    pages     = {227--234},
    publisher = {Springer},
    year      = {2017},
    doi       = {10.1007/978-3-319-47843-2_12},
}

@article{Vaananen:2022,
    title     = {An atom's worth of anonymity},
    author    = {Väänänen, Jouko},
    year      = {2022},
    journal   = {Logic Journal of the IGPL},
    volume    = {31},
    number    = {6},
    pages     = {1078-1083},
    doi       = {10.1093/jigpal/jzac074},
}

@phdthesis{Yang:2014,
    title     = {On Extensions and Variants of Dependence Logic},
    author    = {Yang, Fan},
    year      = {2014},
    publisher = {University of Helsinki},
    url       = {http://hdl.handle.net/10138/43011},
}

\end{document}